\documentclass[a4paper, UKenglish, cleveref]{lipics-v2021}
\nolinenumbers

\EventEditors{Jonathan Aldrich and Alexandra Silva}
\EventNoEds{2}
\EventLongTitle{39th European Conference on Object-Oriented Programming (ECOOP 2025)}
\EventShortTitle{ECOOP 2025}
\EventAcronym{ECOOP}
\EventYear{2025}
\EventDate{June 30--July 2, 2025}
\EventLocation{Bergen, Norway}
\EventLogo{}
\SeriesVolume{333}
\ArticleNo{19}


\bibliographystyle{plainurl}

\title{The Algebra of Patterns (Extended Version)}

\author{David Binder}{University of Kent, United Kingdom}{D.Binder@kent.ac.uk}{0000-0003-1272-0972}{supported by ARIA programme for Safeguarded AI (grant MSAI-PR01-P11).}
\author{Lean Ermantraut}{Radboud University Nijmegen, Netherlands}{lean.ermantraut@ru.nl}{0000-0003-2609-9259}{}
\authorrunning{David Binder and Lean Ermantraut}
\Copyright{David Binder and Lean Ermantraut}

\usepackage{booktabs}
\usepackage[all,cmtip]{xy}
\usepackage{csquotes}
\usepackage{booktabs}
\usepackage{stmaryrd}
\usepackage{mathtools}
\usepackage{courier}
\usepackage{bussproofs}
\usepackage{tikz-cd}
\usepackage{diagbox}
\usepackage{afterpage}
\usepackage[normalem]{ulem}
\usepackage{dashbox}
\usepackage{mdframed}
\usepackage{colortbl}
\usepackage{anyfontsize}
\usepackage{listings}
\usepackage{alltt} 
\usepackage{multicol} 
\usepackage{scalerel} 
\usepackage{syntax}

\newcommand{\patWildcard}{\_}
\newcommand{\patOr}[2]{#1 \mathbin{\parallel} #2}
\newcommand{\patAnd}[2]{#1 \mathbin{\&} #2}
\newcommand{\patNeg}[1]{\neg#1}
\newcommand{\patTop}{\patWildcard}
\newcommand{\patBot}{\#}

\newcommand{\patLeftOr}[2]{#1 \parallel^{\leftarrow} #2}

\newcommand{\caseof}[2]{\mathbf{case}\ #1\ \mathbf{of}\ \{\ #2\ \}}
\newcommand{\defaultClause}{\mathbf{default}}

\newcommand{\tmRed}{\mathtt{Red}}
\newcommand{\tmGreen}{\mathtt{Green}}
\newcommand{\tmBlue}{\mathtt{Blue}}
\newcommand{\tmTrue}{\mathtt{True}}
\newcommand{\tmFalse}{\mathtt{False}}
\newcommand{\tmNil}{\mathtt{Nil}}
\newcommand{\tmCons}[2]{\mathtt{Cons}(#1,#2)}
\newcommand{\tmPair}[2]{\mathtt{Pair}(#1,#2)}
\newcommand{\tmInl}[1]{\iota_1(#1)}
\newcommand{\tmInr}[1]{\iota_2(#1)}
\newcommand{\tmMo}{\mathtt{Mo}}
\newcommand{\tmTu}{\mathtt{Tu}}
\newcommand{\tmWe}{\mathtt{We}}
\newcommand{\tmTh}{\mathtt{Th}}
\newcommand{\tmFr}{\mathtt{Fr}}
\newcommand{\tmSa}{\mathtt{Sa}}
\newcommand{\tmSu}{\mathtt{Su}}
\newcommand{\tmAdmin}{\mathtt{Admin}}
\newcommand{\tmRegisteredUser}{\mathtt{RegisteredUser}}
\newcommand{\tmGuest}{\mathtt{Guest}}

\newcommand{\tyBool}{\mathbb{B}}
\newcommand{\tyPair}[2]{#1 \otimes #2}
\newcommand{\tySum}[2]{#1 \oplus #2}

\newcommand{\matches}[3]{\ensuremath{#1 \triangleright #2 \leadsto #3}}
\newcommand{\matchesNot}[3]{\ensuremath{#1 \not\triangleright #2 \leadsto #3}}

\newcommand{\singlestep}{\rightarrow}
\newcommand{\multistep}{\rightarrow^{\ast}}

\newcommand{\semantic}[1]{\ensuremath{\llbracket\, #1\, \rrbracket}}

\newcommand{\overlaps}[2]{\ensuremath{#1 \between #2}}
\newcommand{\disjointSymbol}{\between\hspace{-0.3cm}\diagup}
\newcommand{\disjoint}[2]{\ensuremath{#1 \mathbin{\disjointSymbol} #2}}

\newcommand{\linearpos}[1]{#1\ \mathbf{lin}^+}
\newcommand{\linearneg}[1]{#1\ \mathbf{lin}^-}
\newcommand{\linearposneg}[1]{#1\ \mathbf{lin}^\pm}
\newcommand{\deterministic}[1]{#1\ \mathbf{det}}
\newcommand{\freevareven}[1]{\ensuremath{\mathrm{FV}^{e}(#1)}}
\newcommand{\freevarodd}[1]{\ensuremath{\mathrm{FV}^{o}(#1)}}
\newcommand{\wf}[1]{#1\ \checkmark}

\newcommand{\patTyping}[4]{\ensuremath{#1 ; #2 \Rightarrow #3 : #4 }}

\newcommand{\nnf}[1]{\mathtt{nnf}(#1)}
\newcommand{\nnfpos}[1]{\mathtt{nnf}^{+}(#1)}
\newcommand{\nnfneg}[1]{\mathtt{nnf}^{-}(#1)}

\newcommand{\dnf}[1]{\mathtt{dnf}(#1)}

\newcommand{\normalize}[1]{\mathtt{norm}(#1)}
\newcommand{\combine}[2]{\mathtt{combine}(#1,#2)}

\newcommand{\headconstructors}[1]{\mathtt{head}(#1)}
\newcommand{\To}{\Rightarrow}

\newcommand{\mdoubleplus}{\mathbin{+\mkern-5mu+}}

\newcommand{\rocq}{\includegraphics[scale=0.024]{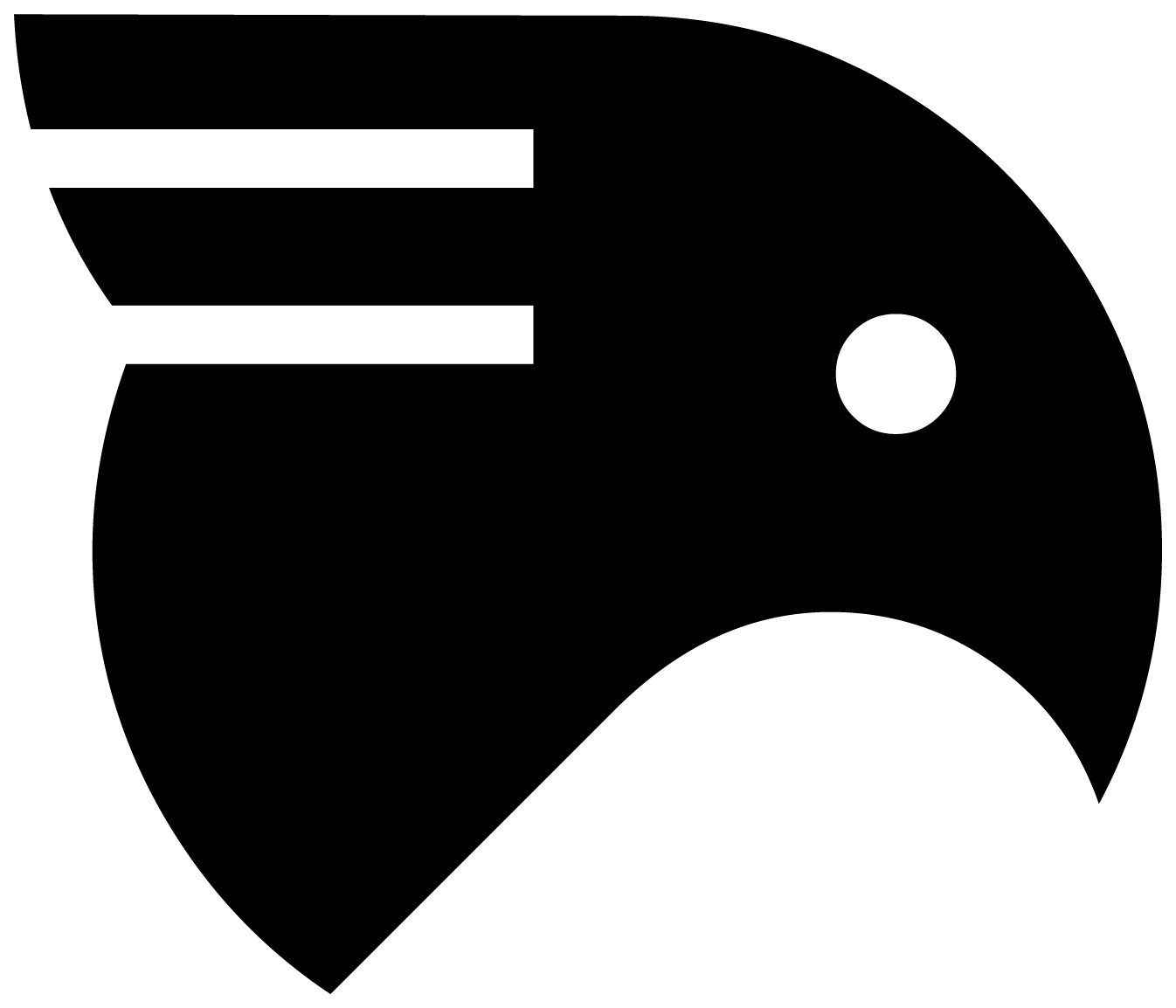}}


\usepackage{xcolor}

\keywords{functional programming, pattern matching, algebraic data types, equational reasoning}


\begin{CCSXML}
  <ccs2012>
  <concept>
  <concept_id>10003752.10003753.10003754.10003733</concept_id>
  <concept_desc>Theory of computation~Lambda calculus</concept_desc>
  <concept_significance>300</concept_significance>
  </concept>
  <concept>
  <concept_id>10003752.10003790.10011740</concept_id>
  <concept_desc>Theory of computation~Type theory</concept_desc>
  <concept_significance>300</concept_significance>
  </concept>
  </ccs2012>
\end{CCSXML}

\ccsdesc[300]{Theory of computation~Lambda calculus}
\ccsdesc[300]{Theory of computation~Type theory}

\begin{document}

\maketitle

\begin{abstract}
    Pattern matching is a popular feature in functional, imperative and object-oriented programming languages.
    Language designers should therefore invest effort in a good design for pattern matching.
    Most languages choose a \emph{first-match semantics} for pattern matching; that is, clauses are tried in the order in which they appear in the program until the first one matches.
    As a consequence, the order in which the clauses appear cannot be arbitrarily changed, which results in a less declarative programming model.
    The declarative alternative to this is an \emph{order-independent semantics} for pattern matching, which is not implemented in most programming languages since it requires more verbose patterns.
    The reason for this verbosity is that the syntax of patterns is usually not expressive enough to express the \emph{complement} of a pattern.
    In this paper, we show a principled way to make order-independent pattern matching practical.
    Our solution consists of two parts:
    First, we introduce a \emph{boolean algebra of patterns} which can express the complement of a pattern.
    Second, we introduce \emph{default clauses} to pattern matches.
    These default clauses capture the essential idea of a fallthrough case without sacrificing the property of order-independence.
\end{abstract}

\section{Introduction}
\label{sec:intro}
Pattern matching is a programming language feature that combines two operations:
Testing whether a value matches a given pattern and, if the test is successful, extracting subexpressions of the value and binding them to variables.
A pattern matching expression consists of the \emph{scrutinee} which is analyzed by the pattern match and a list of \emph{clauses}.
Each clause consists of a \emph{pattern}, hence the name, and an expression called the \emph{right-hand side}.
These elements can all be seen in the following example of the length function, which computes the length of a list:
\begin{equation*}
    \text{length}(\mathit{xs}) \coloneqq \caseof{\underbracket{\quad \mathit{xs}\quad}_{\text{scrutinee}}}{ \underbracket{\,\tmNil \Rightarrow 0\,}_{\text{clause}}, \underbracket{\,\tmCons{\patWildcard}{\mathit{zs}} \,}_{\text{pattern}} \Rightarrow \underbracket{1 + \text{length}(\mathit{zs})}_{\text{right-hand side}}\ }
\end{equation*}

Programming with pattern matching is popular.
For a long time, only programmers who use functional languages could enjoy this feature, but today pattern matching has arrived in the mainstream of object-oriented and imperative programming languages.
For example, Python, Java, Dart, and C\# added pattern matching as an extension to the language, and some newer languages such as Rust and Swift, which are not primarily functional languages, supported pattern matching from the beginning.

Pattern matching has become one of the central constructs that is used to structure the control and data flow of a program.
How pattern matches are checked and evaluated therefore fundamentally determines how we reason about the meaning of our programs;
both informally when we read and write code in our daily lives, and formally in proof assistants, static analyzers and other verification tools.
It is therefore important to choose a semantics for pattern matching which makes reasoning as simple as possible.
One choice that language designers have to make is whether the clauses are tried in the order in which they appear in the program, or whether the order of the clauses does not matter.
The following definition captures this difference:
\begin{definition}[First-Match and Order-Independent Semantics]
    A language uses \emph{first-match semantics} if the clauses of a pattern match are tried in the order in which they occur until the first pattern matches the scrutinee.
    A language uses \emph{order-independent semantics} if any of the matching clauses in a pattern match can be used to non-deterministically reduce a pattern matching expression.
\end{definition}

Overlapping clauses, i.e. two clauses with patterns that can match the same scrutinee, are a problem for order-independent semantics since a pattern-matching expression with overlapping clauses can non-deterministically reduce to two different results.
The obvious solution is to enforce that two clauses must never overlap.
But if we adopt this solution, then programmers usually have to write much more verbose patterns.
For example, instead of the following function which checks whether the argument is the color $\mathtt{Red}$
\begin{equation*}
    \text{isRed}(c) \coloneqq \caseof{c}{
        \tmRed \Rightarrow \tmTrue,\
        \patWildcard \Rightarrow \tmFalse
        },
\end{equation*}%
the programmer has to write the following more verbose expression
\begin{equation*}
    \text{isRed}(c) \coloneqq \caseof{c}{
        \tmRed \Rightarrow \tmTrue,\
        \tmGreen \Rightarrow \tmFalse,\
        \tmBlue \Rightarrow \tmFalse
        }.
\end{equation*}%
The source of this verbosity is the poor expressiveness of patterns, which usually do not allow to express the \emph{complement} of a pattern without exhaustively enumerating all the alternatives.
In this example, we had to explicitly mention both $\tmBlue$ and $\tmGreen$ to express the complement of the pattern $\tmRed$.
In languages that support or-patterns we can combine the last two clauses in the single clause $(\patOr{\tmGreen}{\tmBlue}) \Rightarrow \tmFalse$, but we still have to mention all constructors explicitly.
It is unreasonable to ask programmers to always list all constructors of a data type, and they would rightfully reject a programming language which enforces this.
Every proposal to use order-independent pattern matching in a realistic programming language therefore has to address this verbosity problem.

We propose two complementary solutions to the verbosity problem.
In our system, the function can be written in the following ways:
\begin{align}
    \tag{Negation Pattern}
    \text{isRed}(c) &\coloneqq \caseof{c}{
        \tmRed \Rightarrow \tmTrue,\
        \patNeg{\tmRed} \Rightarrow \tmFalse } \\
    \tag{Default Clause}
    \text{isRed}(c) &\coloneqq \caseof{c}{
        \tmRed \Rightarrow \tmTrue,\
        \defaultClause \Rightarrow \tmFalse }
\end{align}

To write the first variant we enrich the syntax of patterns with \emph{negation patterns} which match whenever the negated pattern does not match.
The two patterns are no longer overlapping, and at the same time, the pattern matching expression is not more verbose than our first version.
Such negation patterns were already proposed under the name of \enquote{anti-patterns} \cite{Kirchner2007,Kirchner2008,Kirchner2010} and are also mentioned by some other authors \cite{Krishnaswami2009, GellerHirschfeldBracha2010}; the core contribution of this paper, however, is to consider them in the context of a complete \emph{boolean algebra of patterns} which allows us to apply algebraic reasoning to patterns.

To write the second variant we enrich the syntax of pattern-matching expressions with \emph{default clauses}\footnote{%
    Some programming languages support default clauses with similar semantics to what we propose.
    Ada, for example, enforces that clauses in a case statement are not overlapping, but allows an optional \enquote{other} clause with the same semantics as our default clause.
    The clauses that Ada allows are however much more restricted, since they only work for enums instead of algebraic data types.
}.
A pattern matching expression can contain at most one such default clause (where $\mathbf{default}$ is a keyword instead of a pattern) which matches whenever none of the normal clauses matches the scrutinee.
At first glance, the meaning of a default clause looks identical to the meaning of a clause with a wildcard pattern, but this is only the case for first-match semantics.
If we use order-independent semantics, then we see that the meaning of a default clause does not correspond to a clause with a wildcard pattern, but to a clause that is headed by the negation of the disjunction of all other patterns in the pattern matching expression.
In the compilation algorithm, however, default clauses are treated natively, so that we don't have to expand a default clause to a large pattern in the case of pattern matches with many clauses.

We are particularly interested in the kind of reasoning which is preserved when we add or remove clauses from a pattern match, or when we add or remove constructors from a data type.
We will argue that this is a significant software engineering challenge that most pattern-based languages do not adequately address yet.

\subsection{Algebraic Reasoning and Variable Binding}

The main contribution of this work is to support a rich catalog of equivalences for patterns.
In fact, we strive to treat patterns as a boolean algebra which consists of negation patterns, and-patterns, or-patterns, wildcard-patterns and absurd-patterns, featuring all the expected equivalences.
This is challenging from the perspective of variable binding.
For instance, we want the law of double negation, i.e.~$\semantic{\patNeg{\patNeg{p}}} \equiv \semantic{p}$, to hold, but this entails that we cannot ignore variables that occur inside patterns in a negated context.
We will demonstrate that we can achieve the desired equivalences by a variable binding definition that tracks whether variables occur under an even or odd number of negations.
A related problem is to have well-defined variable binding in the context of or-patterns $\patOr{p}{p'}$, which requires a careful definition of linearity to make sure that the same variables are bound in all cases of an or-pattern.
We also describe the conditions under which we can replace a pattern
$\patOr{p_1}{p_2}$ by the pattern $\patOr{p_2}{p_1}$; this transformation is not semantic preserving in most systems which implement or-patterns.

\subsection{Overview and Contributions}

The rest of this article is structured as follows:
\begin{itemize}
    \item In \cref{sec:motivation} we motivate why the first-match semantics that most programming languages use for pattern matching fails to address several important reasoning and software engineering problems.
    We show how our contributions can improve upon this status quo.
    \item In \cref{sec:pattern-algebra} we present our first main contribution: the algebra of patterns.
    We describe the syntax of patterns, the rules for non-deterministically matching a pattern against a value, and the restrictions that characterize the subset of linear and deterministic patterns. We also prove the algebraic properties that characterize the semantic equivalence relation on patterns.
    \item \cref{sec:terms} introduces our second contribution, a small term language that contains pattern-matching expressions with default clauses.
    This section also introduces exemplary typing rules to the untyped patterns of \cref{sec:pattern-algebra} and the constructs of our term language; the compilation of patterns described in later sections, however, does not depend on patterns being typed.
    \item Not all extensions to pattern matching can be compiled to efficient code, but ours can. In \cref{sec:normalization} we show how to compile patterns by translating them to a variant of a disjunctive normal form. We use these normalized patterns in the compilation of pattern-matching expressions to decision trees, which we describe in  \cref{sec:compilation}.
    \item We discuss future work in \cref{sec:future-work}, related work in \cref{sec:related-work} and conclude in \cref{sec:conclusion}.
\end{itemize}

The results in sections \cref{sec:pattern-algebra} have been formally verified in the proof assistant Rocq, and are made available as complementary material.
Theorems and definitions which have been formalized are marked with the symbol \rocq.

\section{Motivation}
\label{sec:motivation}
Most programming languages that implement pattern matching use a first-match semantics for which the order of clauses matters.
Changing this state of affairs is our main motivation for introducing a more expressive set of algebraic patterns.
Let us therefore explain why we think that first-match semantics is not a good default choice for declarative languages.

\subsection{First-Match Semantics Weakens Equational Reasoning}
\label{subsec:problems:equational-reasoning}

Equational reasoning is one of the most important methods functional programmers use to reason about code.
Equational reasoning allows us to show that two expressions are equivalent if we can obtain one from the other by a series of rewriting steps.
Each rewriting step is justified by some equation between terms, and functions defined by pattern matching are one of the main sources of such equations.
Take, for example, the definitions of the functions \enquote{id} and \enquote{map}:
\begin{align*}
    \tag{1}
    &\text{id}(x) \coloneqq x \\
    \tag{2}
    &\text{map}(f, []) \coloneqq []\\
    \tag{3}
    &\text{map}(f, x :: xs) \coloneqq f(x) :: \text{map}(f, xs)
\end{align*}
We can show that mapping the identity function over a list returns the original list, i.e. that $\text{map}(\text{id}, xs) = xs$ holds, by using the individual clauses of the definition of ``map'':
\begin{align*}
    \text{map}(\text{id},[]) &=_{(2)} [] \\
    \text{map}(\text{id},x :: xs) &=_{(3)} \text{id}(x) :: \text{map}(\text{id}, xs)
                                   =_{(1)} x :: \text{map}(\text{id}, xs)
                                   =_{(IH)} x :: xs
\end{align*}
We have annotated each rewriting step with the equation that justifies it, and in the last step, we have used the induction hypothesis (IH).
In this example, we have used the fact that every clause of the pattern match is a valid equation that we can use for rewriting.
But this was only valid because the equations in the definition of \enquote{map} do not overlap.
To see why this is essential, consider the example from the introduction again:
\begin{align*}
    \tag{4}
    &\text{isRed}(\tmRed) \coloneqq \tmTrue\\
    \tag{5}
    &\text{isRed}(\patWildcard) \coloneqq \tmFalse
\end{align*}
If we are using these clauses as equations we can now show that True is equal to False, using the proof $\tmTrue =_{(4)} \text{isRed}(\tmRed) =_{(5)} \tmFalse$.
A human is of course very unlikely to make such a silly mistake, but if we cannot guarantee that every clause holds as a valid equality between expressions, then we also limit what automatic tools can do for us.

\subsection{First-Match Semantics Complicates Reasoning About Change}
\label{subsec:problems:open-closed-worlds}

Software engineering has been described as the problem of integrating programming over time \cite{Winters2020google}.
This means that we not only have to understand the meaning of a program at a fixed point in time, but we also have to understand how the meaning of a program changes as it is developed and maintained.
It should therefore be as simple as possible to reason about how the meaning of a program changes when a programmer adds a clause to a pattern match, removes a clause from a pattern match, adds a constructor to a data type, or removes a constructor from a data type.
Let us see why first-match semantics complicates reasoning about these kinds of changes.

If we want to understand the consequences of adding or removing a clause from a pattern match, then we have to consider all clauses occurring both above and below the clause we are changing.
Having to reason about the entire context of the clause makes it much harder to spot bugs in code reviews, since it is not uncommon that pattern matching expressions can span multiple pages of code.
Order-independent semantics guarantees that it doesn't matter if we add, delete or modify a clause at the beginning, middle or end of a list of clauses.

The other kind of change that frequently occurs is that we add or remove a constructor from an existing data type.
Whether we have to adjust existing pattern matches depends on how those have been written.
Even if we enforce that clauses in a pattern matching expression must not overlap, we have multiple possibilities to write exhaustive pattern matches.
Consider a data type \enquote{Group} which consists of administrators, registered users and guests.
We want to guarantee that only administrators have write access, but we have two possibilities to write the function:
\begin{equation*}
  \begin{array}{ll}
    \text{hasWriteAccess} : \text{Group} \to \text{Bool} &\text{hasWriteAccess} : \text{Group} \to \text{Bool} \\
    \text{hasWriteAccess}(\tmAdmin) \coloneqq \tmTrue &\text{hasWriteAccess}(\tmAdmin) \coloneqq \tmTrue \\
    \text{hasWriteAccess}(\patOr{\tmRegisteredUser}{\tmGuest}) \coloneqq \tmFalse &\text{hasWriteAccess}(\patNeg{\tmAdmin}) \coloneqq \tmFalse
  \end{array}
\end{equation*}
On the left side we have pattern matched exhaustively on the complement of \texttt{Admin}, whereas on the right side we have used a negation pattern.
These two programs behave the same, but we can observe a difference when we add moderators as a fourth type of user.
We have to revisit the function on the left, whereas the function on the right continues to compile.

Some programming languages have already started to add support for enforcing these kind of considerations.
For example, Rust supports the \texttt{\#[non_exhaustive]} attribute on type declarations\footnote{Cp. \href{https://doc.rust-lang.org/stable/reference/attributes/type_system.html}{doc.rust-lang.org/stable/reference/attributes/type_system.html}.}.
This attribute restricts pattern matches and ensures that they continue to compile after a new constructor has been added to the type; annotating the Group type with such an attribute would disallow the definition on the left.
Dually, OCaml warns against writing \enquote{fragile pattern matches}.
A fragile pattern can hide newly added constructors, which might have unintended consequences.
The algebraic patterns presented in this paper provide the necessary tools to program in situations where the programmer wants to enforce either one of these dual restrictions.
There is also some empirical evidence that programmers feel uneasy about using wildcard patterns in evolving codebases.
For example, the authors of a recent study about the programming practices of functional programmers \cite{Lubin2021} write:
\enquote{Participants felt similarly about the use of wildcards in pattern matching, which silently assign pre-existing behavior to new variants of an enumeration. P7 mentioned that, in their main codebase, wildcards are completely disallowed for this reason, even though they make programming more convenient.}

\subsection{Compositional Pattern Matching and Extensible Data Types}
\label{subsec:problems:compositional-pm}

Our main motivation in this paper is to obtain a more declarative programming model for standard nominal data types and ordinary pattern matching.
The compilation algorithm that we are presenting, however, also works for polymorphic variants \cite{Garrigue1998polymorphic}.
Polymorphic variants are used in the programming language OCaml, for example, to handle open sets of possible error conditions without having to declare a data type which lists all possible errors.
Our algorithm does not presuppose a data type declaration with a fixed set of constructors which is necessary to check for exhaustiveness; we do not discuss exhaustiveness of patterns in the main part of this paper, but outline the necessary modifications to check for it in \cref{sec:appendix-exhaustiveness}.
We think that the ideas and the algebraic patterns that we present in this paper are also useful for some proposed solutions of the expression problem.
In the compositional programming approach \cite{Rioux2023,Zhang2021,Zhang2020}, for example, we can create pattern matches whose clauses are distributed among multiple different modules.
In that scenario it is especially important to guarantee that the contributing clauses which come from different modules do not overlap in their left-hand side.

\section{The Algebra of Patterns}
\label{sec:pattern-algebra}
As our first contribution, we introduce an \emph{algebra of patterns}.
The syntax of patterns and values is specified in \cref{fig:pattern-algebra:syntax}.
We assume that we have an infinite set \textsc{CtorNames} of constructor names such as $\tmTrue$, $\tmFalse$ or $\tmNil$, and a set \textsc{Variables} which contains variables such as $x,y$ or $z$.
Each constructor $\mathcal{C}$ comes with an implicit \emph{arity} $n$, and we sometimes write $\mathcal{C}^n$ to make this explicit.
For example, the constructor $\mathtt{Cons}$ has arity 2 since it needs to be applied to two arguments.
Patterns $p$ consist of variable patterns $x$, constructor patterns $\mathcal{C}^n(p_1,\ldots,p_n)$, and-patterns $\patAnd{p}{p}$, or-patterns $\patOr{p}{p}$, wildcard-patterns $\patWildcard$, absurd-patterns $\patBot$, and negation patterns $\patNeg{p}$.
Values consist of constructors applied to other values; the two-element list $\tmCons{\tmTrue}{\tmCons{\tmFalse}{\tmNil}}$ is an example of a value.
If a pattern $p$ matches a value $v$ we obtain a substitution $\sigma$.
Such substitutions are formalized as lists of mappings $x \mapsto v$, and we write $\sigma_1 \mdoubleplus \sigma_2$ for the concatenation of two such lists.
Not every list of mappings is a proper map since some lists contain multiple mappings for the same variable, but we will later prove that patterns from the subset of linear patterns produce substitutions that map a variable to at most one value.
There are two functions that compute the free variables contained in a pattern.
The function $\freevareven{p}$ computes the free variables that occur under an \emph{even} number of negations, whereas $\freevarodd{p}$ computes the variables under an \emph{odd} number of negations.

\begin{figure}
  \[
  \begin{array}{lclr}
      \multicolumn{4}{c}{\mathcal{C},\mathcal{C}^n \in \textsc{CtorNames}\quad x,y,z \in \textsc{Variables}} \\
       p & \Coloneqq & x \mid \mathcal{C}^n(p_1, \ldots, p_n) \mid \patAnd{p}{p} \mid \patOr{p}{p} \mid \patTop \mid \patBot \mid \patNeg{p} & \emph{Patterns} \\
       v & \Coloneqq & \mathcal{C}^n(v_1,\ldots,v_n) & \emph{Values} \\
       m & \Coloneqq & x \mapsto v & \emph{Mapping} \\
       \sigma & \Coloneqq & [] \mid m :: \sigma & \emph{Substitution} \\
  \end{array}
  \]
  \[
  \begin{array}{rlrl}
    \freevareven{x} & \coloneqq  \{ x \} & \freevarodd{x} & \coloneqq \emptyset \\
    \freevareven{\patWildcard} & \coloneqq  \emptyset & \freevarodd{\patWildcard} & \coloneqq \emptyset \\
    \freevareven{\patBot} & \coloneqq  \emptyset & \freevarodd{\patBot} & \coloneqq \emptyset \\
    \freevareven{\patNeg{p}} & \coloneqq  \freevarodd{p} & \freevarodd{\patNeg{p}} & \coloneqq \freevareven{p} \\
    \freevareven{\patOr{p_1}{p_2}} & \coloneqq  \freevareven{p_1} \cup \freevareven{p_2} & \freevarodd{\patOr{p_1}{p_2}} & \coloneqq \freevarodd{p_1} \cup \freevarodd{p_2} \\
    \freevareven{\patAnd{p_1}{p_2}} & \coloneqq  \freevareven{p_1} \cup \freevareven{p_2} & \freevarodd{\patAnd{p_1}{p_2}} & \coloneqq \freevarodd{p_1} \cup \freevarodd{p_2} \\
    \freevareven{\mathcal{C}(p_1,\ldots,p_n)} & \coloneqq  \cup_{i=1}^{n}\freevareven{p_i} & \freevarodd{\mathcal{C}(p_1,\ldots,p_n)} & \coloneqq  \cup_{i=1}^{n} \freevarodd{p_i}
  \end{array}
  \]
  \caption{The syntax of values, patterns and substitutions. We track whether variables occur under an even or odd number of negations.}
  \label{fig:pattern-algebra:syntax}
\end{figure}

Variable, constructor and wildcard patterns are standard, so we do not introduce them explicitly and turn directly to the more interesting absurd-patterns, or-patterns, negation-patterns, and and-patterns.

The absurd-pattern $\patBot$ is the dual of the wildcard-pattern $\patTop$.
Whereas the wildcard-pattern matches any value, the absurd-pattern never matches a value.
We include the absurd-pattern in the syntax of patterns since we need a canonical bottom element of the boolean algebra.
But there are also more practical reasons to include an absurd pattern.
In the lazy programming language Haskell, for example, we sometimes have to write a pattern that never matches but forces the evaluation of an argument.
This is often achieved by adding a guard clause that evaluates to false, but we can also use an absurd pattern for the same purpose.

An or-pattern $\patOr{p_1}{p_2}$ matches a value if either $p_1$ or $p_2$ (or both) matches the value.
Or-patterns introduce a complication for the compiler:
we have to ensure that if values are bound to variables in $p_1$ during matching, then $p_2$ has to bind values of the same type to the same variables.
The following example on the left shows how or-patterns without variables can be used to combine multiple clauses into one.
\[
\begin{array}{ll}
  \mathbf{data}\ \text{Day}\ = \tmMo \mid \tmTu \mid \tmWe \mid \tmTh \mid \tmFr \mid \tmSa \mid \tmSu  & \\
  \text{isWeekend} : \text{Day} \to \text{Bool} & \text{isWeekend} : \text{Day} \to \text{Bool} \\
  \text{isWeekend}(\patOr{\tmSa}{\tmSu}) \coloneqq \tmTrue & \text{isWeekend}(\patOr{\tmSa}{\tmSu}) \coloneqq \tmTrue \\
  \text{isWeekend}(\patOr{\tmMo}{\patOr{\tmTu}{\patOr{\tmWe}{\patOr{\tmTh}{\tmFr}}}}) \coloneqq \tmFalse & \text{isWeekend}(\patNeg{(\patOr{\tmSa}{\tmSu})}) \coloneqq \tmFalse
\end{array}
\]

We could also have written the second clause of the previous example using a wildcard pattern, but then the two patterns would overlap.
We can get rid of the verbose second clause by using a negation pattern, which is illustrated on the right.
A negation pattern matches if, and only if, the negated pattern does not match.

The last type of patterns we introduce are and-patterns $\patAnd{p_1}{p_2}$ which succeed if both $p_1$ and $p_2$ match the scrutinee.
While or-patterns $\patOr{p_1}{p_2}$ require that $p_1$ and $p_2$ bind the same set of variables, and-patterns $\patAnd{p_1}{p_2}$ require $p_1$ and $p_2$ to bind disjoint sets of variables.
Many languages already have a special case of and-patterns: An $@$-pattern $x@p$ (some languages also use the keyword \enquote{as}) is an and-pattern where the left side must be a variable.
If we want to rewrite the previous example to return a string instead of a boolean value, then we could use the following and-patterns:
\begin{align*}
  &\text{isWeekend} : \text{Day} \to \text{String} \\
  &\text{isWeekend}(\patAnd{x}{(\patOr{\tmSa}{\tmSu})}) \coloneqq \text{show}(x) \mdoubleplus \text{`` is on the weekend''} \\
  &\text{isWeekend}(\patAnd{x}{\patNeg{(\patOr{\tmSa}{\tmSu})}}) \coloneqq \text{show}(x) \mdoubleplus \text{`` is not on the weekend''}
\end{align*}

Here we use the show function to print a weekday and $\mdoubleplus$ for string concatenation.
We will use this snippet as a running example in \cref{sec:normalization} and \cref{sec:compilation} to show how patterns are normalized and compiled to efficient code.

\subsection{Matching Semantics of Patterns}
\label{subsec:algebra-of-patterns:formal-semantics}

We will now present the formal rules which specify what happens when we match a pattern against a value; these rules can be found in the upper half of \cref{fig:pattern-algebra:rules}.
We use two mutually recursive judgments to formalize the semantics of matching:
The judgment $\matches{p}{v}{\sigma}$ expresses that the pattern $p$ successfully matches the value $v$ and also binds values to variables of the pattern; those mappings are recorded in the substitution $\sigma$.
The judgment $\matchesNot{p}{v}{\sigma}$ expresses that the pattern $p$ does \emph{not} match the value $v$, and also produces a substitution.
That the judgment $\matchesNot{p}{v}{\sigma}$ also produces a substitution might be surprising: How can there be a substitution if the value does not match the pattern?
We can think of these as “temporarily deactivated” substitutions which will become available again if we enclose the pattern in an additional negation.
In fact, such substitutions are necessary if we want to guarantee that the pattern $p$ is semantically equivalent to $\patNeg{\patNeg{p}}$ in the case where $p$ contains variables, for example in \cref{eq:example:good:5} below.

\begin{figure}[p]
    \begin{flushright}
      \fbox{Pattern matches: \matches{p}{v}{\sigma}}
    \end{flushright}
    \vspace{-0.1cm}

    \begin{minipage}{0.3\textwidth}
      \begin{prooftree}
        \AxiomC{\phantom{\matchesNot{p}{v}{\sigma}}}
        \RightLabel{\textsc{Var}}
        \UnaryInfC{$\matches{x}{v}{[x \mapsto v]}$}
      \end{prooftree}
    \end{minipage}
    \begin{minipage}{0.3\textwidth}
      \begin{prooftree}
        \AxiomC{\phantom{\matchesNot{p}{v}{\sigma}}}
        \RightLabel{\textsc{Wild}}
        \UnaryInfC{\matches{\patTop}{v}{[]}}
      \end{prooftree}
    \end{minipage}
    \begin{minipage}{0.3\textwidth}
      \begin{prooftree}
        \AxiomC{\matchesNot{p}{v}{\sigma}}
        \RightLabel{\textsc{Neg}$_1$}
        \UnaryInfC{\matches{\patNeg{p}}{v}{\sigma}}
      \end{prooftree}
    \end{minipage}
    \vspace{0.1cm}

    \begin{minipage}{0.25\textwidth}
      \begin{prooftree}
        \AxiomC{\matches{p_1}{v}{\sigma}}
        \RightLabel{\textsc{Or}$_1$}
        \UnaryInfC{\matches{\patOr{p_1}{p_2}}{v}{\sigma}}
      \end{prooftree}
    \end{minipage}
    \hfill
    \begin{minipage}{0.25\textwidth}
      \begin{prooftree}
        \AxiomC{\matches{p_2}{v}{\sigma}}
        \RightLabel{\textsc{Or}$_2$}
        \UnaryInfC{\matches{\patOr{p_1}{p_2}}{v}{\sigma}}
      \end{prooftree}
    \end{minipage}
    \hfill
    \begin{minipage}{0.4\textwidth}
      \begin{prooftree}
        \AxiomC{\matches{p_1}{v}{\sigma_1}}
        \AxiomC{\matches{p_2}{v}{\sigma_2}}
        \RightLabel{\textsc{And}}
        \BinaryInfC{\matches{\patAnd{p_1}{p_2}}{v}{\sigma_1 \mdoubleplus \sigma_2}}
      \end{prooftree}
    \end{minipage}

    \begin{prooftree}
      \AxiomC{\matches{p_1}{v_1}{\sigma_1}}
      \AxiomC{\ldots}
      \AxiomC{\matches{p_n}{v_n}{\sigma_n}}
      \RightLabel{\textsc{Ctor}}
      \TrinaryInfC{\matches{\mathcal{C}^n(p_1,\ldots,p_n)}{\mathcal{C}^n(v_1,\ldots,v_n)}{\sigma_1 \mdoubleplus \ldots \mdoubleplus \sigma_n}}
    \end{prooftree}

    \begin{flushright}
      \fbox{Pattern doesn't match: \matchesNot{p}{v}{\sigma}}
    \end{flushright}
    \vspace{-0.1cm}

    \begin{minipage}{0.3\textwidth}
      \begin{prooftree}
        \AxiomC{\phantom{\matchesNot{p}{v}{X}}}
        \RightLabel{\textsc{Absurd}}
        \UnaryInfC{\matchesNot{\patBot}{v}{[]}}
      \end{prooftree}
    \end{minipage}
    \begin{minipage}{0.6\textwidth}
      \begin{prooftree}
        \AxiomC{$\exists i: \matchesNot{p_i}{v_i}{\sigma}$}
        \RightLabel{\textsc{Ctor}$_1$}
        \UnaryInfC{\matchesNot{\mathcal{C}^n(p_1,\ldots,p_n)}{\mathcal{C}^n(v_1,\ldots,v_n)}{\sigma}}
      \end{prooftree}
    \end{minipage}
    \vspace{0.1cm}

    \begin{minipage}{0.3\textwidth}
      \begin{prooftree}
        \AxiomC{\matches{p}{v}{\sigma}}
        \RightLabel{\textsc{Neg}$_2$}
        \UnaryInfC{\matchesNot{\patNeg{p}}{v}{\sigma}}
      \end{prooftree}
    \end{minipage}
    \begin{minipage}{0.6\textwidth}
      \begin{prooftree}
        \AxiomC{$\mathcal{C}^n \neq \mathcal{C}'^m$}
        \RightLabel{\textsc{Ctor}$_2$}
        \UnaryInfC{\matchesNot{\mathcal{C}^n(p_1,\ldots,p_n)}{\mathcal{C}'^m(v_1,\ldots,v_m)}{[]}}
      \end{prooftree}
    \end{minipage}
    \vspace{0.1cm}

    \begin{minipage}{0.27\textwidth}
      \begin{prooftree}
        \AxiomC{\matchesNot{p_1}{v}{\sigma}}
        \RightLabel{\textsc{And}$_1$}
        \UnaryInfC{\matchesNot{\patAnd{p_1}{p_2}}{v}{\sigma}}
      \end{prooftree}
    \end{minipage}
    \hfill
    \begin{minipage}{0.27\textwidth}
      \begin{prooftree}
        \AxiomC{\matchesNot{p_2}{v}{\sigma}}
        \RightLabel{\textsc{And}$_2$}
        \UnaryInfC{\matchesNot{\patAnd{p_1}{p_2}}{v}{\sigma}}
      \end{prooftree}
    \end{minipage}
    \hfill
    \begin{minipage}{0.4\textwidth}
      \begin{prooftree}
        \AxiomC{\matchesNot{p_1}{v}{\sigma_1}}
        \AxiomC{\matchesNot{p_2}{v}{\sigma_2}}
        \RightLabel{\textsc{Or}}
        \BinaryInfC{\matchesNot{\patOr{p_1}{p_2}}{v}{\sigma_1 \mdoubleplus \sigma_2}}
      \end{prooftree}
    \end{minipage}


  \begin{flushright}
      \fbox{Pattern is linear: $\linearpos{p}$ and $\linearneg{p}$}
  \end{flushright}
  \vspace{-0.2cm}

  \begin{minipage}{0.3\textwidth}
    \begin{prooftree}
        \AxiomC{\phantom{\overlaps{p}{p}}}
        \RightLabel{\textsc{L-Var}$^\pm$}
        \UnaryInfC{$\linearposneg{x}$}
    \end{prooftree}
  \end{minipage}
  \begin{minipage}{0.3\textwidth}
    \begin{prooftree}
      \AxiomC{\phantom{\overlaps{p}{p}}}
      \RightLabel{\textsc{L-Absurd}$^\pm$}
      \UnaryInfC{$\linearposneg{\patBot}$}
    \end{prooftree}
  \end{minipage}
  \begin{minipage}{0.3\textwidth}
    \begin{prooftree}
        \AxiomC{\phantom{\overlaps{p}{p}}}
        \RightLabel{\textsc{L-Wild}$^\pm$}
        \UnaryInfC{$\linearposneg{\patWildcard}$}
    \end{prooftree}
  \end{minipage}
  \vspace{0.1cm}

  \begin{minipage}{0.45\textwidth}
    \begin{prooftree}
      \AxiomC{$\linearpos{p_{1,2}} \quad \freevareven{p_1} = \freevareven{p_2}$}
      \RightLabel{\textsc{L-Or}$^+$}
      \UnaryInfC{$\linearpos{\patOr{p_1}{p_2}}$}
    \end{prooftree}
  \end{minipage}
  \hfill
  \begin{minipage}{0.495\textwidth}
    \begin{prooftree}
      \AxiomC{$\linearneg{p_{1,2}} \quad \freevarodd{p_1} \cap \freevarodd{p_2} = \emptyset$}
      \RightLabel{\textsc{L-Or}$^-$}
      \UnaryInfC{$\linearneg{\patOr{p_1}{p_2}}$}
    \end{prooftree}
  \end{minipage}
  \vspace{0.1cm}

  \begin{minipage}{0.5\textwidth}
    \begin{prooftree}
      \AxiomC{$\linearpos{p_{1,2}} \quad \freevareven{p_1} \cap \freevareven{p_2} = \emptyset$}
      \RightLabel{\textsc{L-And}$^+$}
      \UnaryInfC{$\linearpos{\patAnd{p_1}{p_2}}$}
    \end{prooftree}
  \end{minipage}
  \hfill
  \begin{minipage}{0.475\textwidth}
    \begin{prooftree}
      \AxiomC{$\linearneg{p_{1,2}} \quad \freevarodd{p_1} = \freevarodd{p_2}$}
      \RightLabel{\textsc{L-And}$^-$}
      \UnaryInfC{$\linearneg{\patAnd{p_1}{p_2}}$}
    \end{prooftree}
  \end{minipage}
  \vspace{0.1cm}

  \begin{minipage}{0.45\textwidth}
    \begin{prooftree}
      \AxiomC{$\linearneg{p}$}
      \RightLabel{\textsc{L-Neg}$^+$}
      \UnaryInfC{$\linearpos{\patNeg{p}}$}
    \end{prooftree}
  \end{minipage}
  \begin{minipage}{0.45\textwidth}
    \begin{prooftree}
      \AxiomC{$\linearpos{p}$}
      \RightLabel{\textsc{L-Neg}$^-$}
      \UnaryInfC{$\linearneg{\patNeg{p}}$}
    \end{prooftree}
  \end{minipage}
  \vspace{0.1cm}

  \begin{minipage}{0.475\textwidth}
    \begin{prooftree}
      \AxiomC{$\linearpos{p_i} \quad \bigcap_{i=1}^{n}\freevareven{p_i} = \emptyset$}
      \RightLabel{\textsc{L-Ctor}$^+$}
      \UnaryInfC{$\linearpos{C^n(p_1, \ldots, p_n)}$}
    \end{prooftree}
  \end{minipage}
  \hfill
  \begin{minipage}{0.475\textwidth}
    \begin{prooftree}
      \AxiomC{$\linearneg{p_i} \quad \bigcup_{i=1}^{n}\freevarodd{p_i} = \emptyset$}
      \RightLabel{\textsc{L-Ctor}$^-$}
      \UnaryInfC{$\linearneg{C^n(p_1, \ldots, p_n)}$}
    \end{prooftree}
  \end{minipage}

  \begin{flushright}
    \fbox{Pattern is deterministic: $\deterministic{p}$}
  \end{flushright}
  \vspace{-0.2cm}

  \begin{minipage}{0.22\textwidth}
    \begin{prooftree}
      \AxiomC{\phantom{\overlaps{p}{p}}}
      \RightLabel{\textsc{D-Var}}
      \UnaryInfC{$\deterministic{x}$}
    \end{prooftree}
  \end{minipage}
  \begin{minipage}{0.22\textwidth}
    \begin{prooftree}
      \AxiomC{\phantom{\overlaps{p}{p}}}
      \RightLabel{\textsc{D-Wild}}
      \UnaryInfC{$\deterministic{\patWildcard}$}
    \end{prooftree}
  \end{minipage}
  \begin{minipage}{0.22\textwidth}
    \begin{prooftree}
      \AxiomC{$\deterministic{p_1}$}
      \RightLabel{\textsc{D-Neg}}
      \UnaryInfC{$\deterministic{\patNeg{p_1}}$}
    \end{prooftree}
  \end{minipage}
  \begin{minipage}{0.3\textwidth}
    \begin{prooftree}
      \AxiomC{\phantom{\overlaps{p}{p}}}
      \RightLabel{\textsc{D-Absurd}}
      \UnaryInfC{$\deterministic{\patBot}$}
    \end{prooftree}
  \end{minipage}
  \vspace{0.1cm}

  \begin{minipage}{0.4\textwidth}
    \begin{prooftree}
      \AxiomC{$\deterministic{p_{1,2}} \quad \disjoint{p_1}{p_2}$}
      \RightLabel{\textsc{D-Or}$_1$}
      \UnaryInfC{$\deterministic{\patOr{p_1}{p_2}}$}
    \end{prooftree}
  \end{minipage}
  \begin{minipage}{0.5\textwidth}
    \begin{prooftree}
      \AxiomC{$\deterministic{p_{1,2}} \quad \freevareven{p_1} = \freevareven{p_2} = \emptyset$}
      \RightLabel{\textsc{D-Or}$_2$}
      \UnaryInfC{$\deterministic{\patOr{p_1}{p_2}}$}
    \end{prooftree}
  \end{minipage}
  \vspace{0.1cm}

  \begin{minipage}{0.4\textwidth}
    \begin{prooftree}
      \AxiomC{$\deterministic{p_{1,2}} \quad \disjoint{\patNeg{p_1}}{\patNeg{p_2}}$}
      \RightLabel{\textsc{D-And}$_1$}
      \UnaryInfC{$\deterministic{\patAnd{p_1}{p_2}}$}
    \end{prooftree}
  \end{minipage}
  \begin{minipage}{0.55\textwidth}
    \begin{prooftree}
      \AxiomC{$\deterministic{p_{1,2}} \quad \freevarodd{p_1} = \freevarodd{p_2} = \emptyset$}
      \RightLabel{\textsc{D-And}$_2$}
      \UnaryInfC{$\deterministic{\patAnd{p_1}{p_2}}$}
    \end{prooftree}
  \end{minipage}

  \begin{prooftree}
    \AxiomC{$\deterministic{p_1} \cdots \ \deterministic{p_n}$}
    \RightLabel{\textsc{D-Ctor}}
    \UnaryInfC{$\deterministic{C^n(p_1, \ldots, p_n)}$}
  \end{prooftree}

  \caption{Rules for matching patterns against values, and checking for linearity and determinism.}
  \label{fig:pattern-algebra:rules}
\end{figure}

\begin{example}
  These examples show some well-behaved instances of the matching judgments:
  \begin{align}
    \matches{\text{Cons}(x,xs)}{\text{Cons}(2,\text{Cons}(3,\text{Nil}))}{[x \mapsto 2, xs \mapsto \text{Cons}(3,\text{Nil})]}
    \label{eq:example:good:1}\\
    \matches{\patOr{\text{True}}{\text{False}}}{\text{True}}{[]}
    \label{eq:example:good:2}\\
    \matchesNot{\text{True}}{\text{False}}{[]}
    \label{eq:example:good:3}\\
    \matchesNot{\patNeg{x}}{\text{True}}{[x \mapsto \text{True}]}
    \label{eq:example:good:4}\\
    \matches{\patNeg{\patNeg{x}}}{\text{True}}{[x \mapsto \text{True}]}
    \label{eq:example:good:5}
  \end{align}
  \cref{eq:example:good:1} illustrates how the rules \textsc{Var} and \textsc{Ctor} can be used to match a constructor pattern against a list while binding components of the list to variables $x$ and $xs$.
  \cref{eq:example:good:2} shows how the rule \textsc{Or}$_1$ (together with the rule \textsc{Ctor}) allows to successfully match the left side of an or-pattern against a value.
  \cref{eq:example:good:3} illustrates the rule \textsc{Ctor}$_2$ and \cref{eq:example:good:4} the combination of rules \textsc{Var} and \textsc{Neg}$_2$.
  \cref{eq:example:good:5} shows why we need two mutually recursive judgments: in order for the pattern $\patNeg{\patNeg{x}}$ to match against the value $\text{True}$ we need to use the rules \textsc{Neg}$_1$, \textsc{Neg}$_2$ and \textsc{Var} to ensure that the variable $x$ is bound to the value $\text{True}$.
\end{example}

Using the rules for $\matches{p}{v}{\sigma}$ and $\matchesNot{p}{v}{\sigma}$ we can already prove that they are, in a certain sense, sound and complete:
\begin{theorem}[Matching rules are sound, \rocq]
  \label{thm:matching-sound}
  The rules from \cref{fig:pattern-algebra:rules} are sound.
  There is no pattern $p$, value $v$ and substitutions $\sigma_1,\sigma_2$ such that both $\matches{p}{v}{\sigma_1}$ and $\matchesNot{p}{v}{\sigma_2}$ hold.
\end{theorem}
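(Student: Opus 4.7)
The plan is to prove the statement by structural induction on the pattern $p$, simultaneously quantified over all values $v$ and substitutions $\sigma_1, \sigma_2$. For each shape of $p$, I will inspect the last rule that could conclude $\matches{p}{v}{\sigma_1}$ and the last rule that could conclude $\matchesNot{p}{v}{\sigma_2}$, and derive a contradiction either immediately (because no such rule exists) or by appeal to the induction hypothesis on a strict subpattern.

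The leaf cases are essentially by inspection of \cref{fig:pattern-algebra:rules}. For $p = x$ and $p = \patTop$ there is no rule whose conclusion has the form $\matchesNot{x}{v}{\sigma}$ or $\matchesNot{\patTop}{v}{\sigma}$, so one of the two hypotheses is vacuous. Dually, for $p = \patBot$ no rule concludes $\matches{\patBot}{v}{\sigma_1}$, so mutual exclusivity is immediate.

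For the inductive cases I rely directly on the IH. If $p = \patNeg{p'}$, the rule \textsc{Neg}$_1$ forces $\matchesNot{p'}{v}{\sigma_1}$ while \textsc{Neg}$_2$ forces $\matches{p'}{v}{\sigma_2}$, which contradicts the IH for $p'$. For $p = \patOr{p_1}{p_2}$, the matching side must come from \textsc{Or}$_1$ or \textsc{Or}$_2$, giving $\matches{p_i}{v}{\sigma_1}$ for some $i$, while the non-matching side must come from \textsc{Or}, providing $\matchesNot{p_i}{v}{\sigma_2'}$ for \emph{both} $i$; applying the IH to the relevant $p_i$ yields the contradiction. The case $p = \patAnd{p_1}{p_2}$ is symmetric: \textsc{And} provides both $\matches{p_i}{v}{\cdot}$, while \textsc{And}$_1$ or \textsc{And}$_2$ picks one $i$ with $\matchesNot{p_i}{v}{\cdot}$, and IH closes the case.

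The only slightly delicate case is $p = \mathcal{C}^n(p_1,\ldots,p_n)$. The matching side must be concluded by \textsc{Ctor}, so $v = \mathcal{C}^n(v_1,\ldots,v_n)$ and $\matches{p_i}{v_i}{\cdot}$ for every $i$. The non-matching side is concluded by either \textsc{Ctor}$_1$ or \textsc{Ctor}$_2$. Rule \textsc{Ctor}$_2$ requires the value to have a different head constructor or arity than $\mathcal{C}^n$, contradicting the shape of $v$ extracted from \textsc{Ctor}. Rule \textsc{Ctor}$_1$ instead gives some index $i$ with $\matchesNot{p_i}{v_i}{\cdot}$, and combined with $\matches{p_i}{v_i}{\cdot}$ this contradicts the IH applied to $p_i$. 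The main obstacle, if any, is purely bookkeeping: the induction must be stated with $v$ and the substitutions universally quantified (not fixed), so that the IH is strong enough to be applied to the subpatterns $p_i$ against arbitrary component values $v_i$; beyond that, every case reduces either to an empty set of applicable rules or to a direct appeal to the hypothesis.
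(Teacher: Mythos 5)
Your proof is correct and matches the approach the paper takes (the paper only states this theorem and defers to its Rocq formalization, where the argument is exactly this structural induction on $p$ with $v$ and the substitutions generalized, followed by inversion on the two derivations in each case). The one subtlety you flag --- keeping the value and substitutions universally quantified so the induction hypothesis applies to the subpatterns against the component values $v_i$ in the constructor case --- is indeed the only delicate point, and you handle it correctly.
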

\begin{theorem}[Matching rules are complete, \rocq]
  \label{thm:matching-complete}
  The rules from \cref{fig:pattern-algebra:rules} are complete, i.e. for any pattern $p$ and value $v$ there is some $\sigma$ such that $\matches{p}{v}{\sigma}$ or $\matchesNot{p}{v}{\sigma}$ holds.
\end{theorem}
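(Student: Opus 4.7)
The plan is to proceed by structural induction on the pattern $p$, with the value $v$ kept arbitrary so the induction hypothesis can be applied at any subvalue arising from the constructor case.

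The base cases $p = x$, $p = \patTop$, and $p = \patBot$ are immediate: the rules \textsc{Var}, \textsc{Wild}, and \textsc{Absurd} each produce a conclusion that does not look at $v$, so we always get $\matches{x}{v}{[x \mapsto v]}$, $\matches{\patTop}{v}{[]}$, and $\matchesNot{\patBot}{v}{[]}$ respectively. For the negation case $p = \patNeg{p'}$, the induction hypothesis yields some $\sigma$ with $\matches{p'}{v}{\sigma}$ or $\matchesNot{p'}{v}{\sigma}$; in the first subcase \textsc{Neg}$_2$ gives $\matchesNot{\patNeg{p'}}{v}{\sigma}$, and in the second \textsc{Neg}$_1$ gives $\matches{\patNeg{p'}}{v}{\sigma}$. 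The or- and and-cases are dual to each other: for $p = \patOr{p_1}{p_2}$, apply IH to $p_1$ and $p_2$ and combine the four outcomes, using \textsc{Or}$_1$/\textsc{Or}$_2$ as soon as either side matches and \textsc{Or} only when both sides fail; for $p = \patAnd{p_1}{p_2}$, use \textsc{And} only when both succeed and \textsc{And}$_1$/\textsc{And}$_2$ as soon as either fails.

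The main work lies in the constructor case $p = \mathcal{C}^n(p_1, \ldots, p_n)$. Because every value is of the form $v = \mathcal{C}'^m(v_1, \ldots, v_m)$, we case split on whether $\mathcal{C}^n = \mathcal{C}'^m$ (which forces $n = m$). If not, rule \textsc{Ctor}$_2$ immediately yields $\matchesNot{\mathcal{C}^n(p_1,\ldots,p_n)}{\mathcal{C}'^m(v_1,\ldots,v_m)}{[]}$. If yes, I invoke the induction hypothesis on each pair $(p_i, v_i)$ to obtain, for every $i$, either $\matches{p_i}{v_i}{\sigma_i}$ or $\matchesNot{p_i}{v_i}{\sigma_i}$. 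A secondary case analysis on whether all of these are positive instances then finishes: if every $p_i$ matches $v_i$, rule \textsc{Ctor} assembles the substitutions into $\sigma_1 \mdoubleplus \cdots \mdoubleplus \sigma_n$; otherwise, pick any index $i$ with a negative instance and apply \textsc{Ctor}$_1$.

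The only real obstacle is the bookkeeping in the constructor case, namely turning pointwise ``matches or doesn't match'' into ``all match, or some index fails''. In a mechanized proof this is typically handled by a small auxiliary lemma about lists of patterns and values, proved by induction on the length, and invoked once the IH has been applied componentwise; in a pen-and-paper proof it is the standard classical split on ``for every $i$'' versus ``there exists $i$''. Once this is in place, every rule of \cref{fig:pattern-algebra:rules} is used exactly as dictated by the shape of $p$ (and, in the constructor case, by the head of $v$), and the induction closes.
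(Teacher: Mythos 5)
Your proof is correct and follows the same route as the paper's (mechanized) proof: structural induction on the pattern, with the value generalized so the IH applies to the subvalues in the constructor case, and the evident case analysis using the mutually dual rules. The observation about needing an auxiliary list lemma to turn the pointwise IH into ``all match or some index fails'' is exactly the bookkeeping the Rocq development has to do as well.
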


Since we are interested in algebraic reasoning, we have to say when we consider two patterns to be semantically equivalent.
We will define semantic equivalence in \cref{def:semantic-equivalence-patterns} using the two judgments introduced above.
But first, we have to define when two substitutions (which we formalized as lists of mappings) are semantically equivalent:

\begin{definition}[Semantic equivalence of substitutions, \rocq]
  Two substitutions $\sigma$ and $\sigma'$ are semantically equivalent if they contain the same mappings:
  \begin{equation*}
    \semantic{\sigma} \equiv \semantic{\sigma'}  \coloneqq \forall m, m \in \sigma \Leftrightarrow m \in \sigma'
  \end{equation*}
\end{definition}

We can now state the definition for semantic equivalence of patterns.

\begin{definition}[Semantic equivalence of patterns, \rocq]
  \label{def:semantic-equivalence-patterns}
  Two patterns $p$ and $q$ are semantically equivalent if they match the same values with equivalent substitutions, and if they also do not match the same values with equivalent substitutions:
  \begin{align*}
    \semantic{p} \equiv \semantic{q} \coloneqq \forall v, \forall \sigma, \ & \matches{p}{v}{\sigma} \Rightarrow \exists \sigma', \matches{q}{v}{\sigma'} \wedge \semantic{\sigma} \equiv \semantic{\sigma'} \\
    \land \ & \matches{q}{v}{\sigma} \Rightarrow \exists \sigma', \matches{p}{v}{\sigma'} \wedge \semantic{\sigma} \equiv \semantic{\sigma'} \\
    \land \ & \matchesNot{p}{v}{\sigma} \Rightarrow \exists \sigma', \matchesNot{q}{v}{\sigma'} \wedge \semantic{\sigma} \equiv \semantic{\sigma'} \\
    \land \ & \matchesNot{q}{v}{\sigma} \Rightarrow \exists \sigma', \matchesNot{p}{v}{\sigma'} \wedge \semantic{\sigma} \equiv \semantic{\sigma'}
  \end{align*}
\end{definition}

We have to include both judgment forms if we want semantic equivalence to be a congruence relation.
To see why this is the case, suppose that we omit the second half of \cref{def:semantic-equivalence-patterns}.
It would then follow that the two patterns $\patBot$ and $\patNeg{x}$ are semantically equivalent since they match against the same set of values (i.e. no values at all).
But if we apply a negation-pattern to both patterns, then we can easily show that the resulting patterns are no longer semantically equivalent:
$\semantic{\patNeg{\patBot}} \equiv \semantic{\patWildcard} \not\equiv \semantic{x} \equiv \semantic{\patNeg{\patNeg{x}}}$.
But since \cref{def:semantic-equivalence-patterns} requires both patterns to agree on the values they match \emph{and} don't match against, it actually defines a congruence relation on patterns:

\begin{theorem}[Congruence, \rocq]
   If $\semantic{p_1} \equiv \semantic{p_1'}$ to $\semantic{p_n} \equiv \semantic{p_n'}$ hold, then we also have $\semantic{\patNeg{p_1}} \equiv \semantic{\patNeg{p_1'}}$, $\semantic{\patAnd{p_1}{p_2}} \equiv \semantic{\patAnd{p_1'}{p_2'}}$, $\semantic{\patOr{p_1}{p_2}} \equiv \semantic{\patOr{p_1'}{p_2'}}$ and $\semantic{\mathcal{C}(p_1,\ldots,p_n)} \equiv \semantic{\mathcal{C}(p_1',\ldots,p_n')}$.
\end{theorem}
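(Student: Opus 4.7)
The plan is to prove the four congruence properties (for negation, and-patterns, or-patterns, and constructor patterns) separately, each by unfolding \cref{def:semantic-equivalence-patterns} and verifying its four constituent implications (positive match forward, positive match backward, and the two dual implications for non-matching). All four cases share the same structure: assume a matching (or non-matching) derivation for the connective on one side, invert the last rule to obtain derivations for its subpatterns, invoke the hypothesis $\semantic{p_i} \equiv \semantic{p_i'}$ to transport each subderivation to the primed versions, and then reapply the matching rule of \cref{fig:pattern-algebra:rules} on the primed side.

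Before starting the main case analysis I would establish a short auxiliary lemma that substitution equivalence is a congruence with respect to list concatenation: if $\semantic{\sigma_1} \equiv \semantic{\sigma_1'}$ and $\semantic{\sigma_2} \equiv \semantic{\sigma_2'}$, then $\semantic{\sigma_1 \mdoubleplus \sigma_2} \equiv \semantic{\sigma_1' \mdoubleplus \sigma_2'}$. This is immediate from the definition, since $m \in \sigma_1 \mdoubleplus \sigma_2$ is by construction equivalent to $m \in \sigma_1 \vee m \in \sigma_2$. This lemma is what makes the cases for \textsc{And}, \textsc{Or} (in its negative branch), and \textsc{Ctor} go through.

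For the \textbf{negation} case, the key observation is that negation swaps the two matching judgments: the rules \textsc{Neg}$_1$ and \textsc{Neg}$_2$ are the only ways to derive $\matches{\patNeg{p}}{v}{\sigma}$ and $\matchesNot{\patNeg{p}}{v}{\sigma}$ respectively, and each appeals to the opposite judgment of its subpattern. Hence the four implications for $\patNeg{p} \equiv \patNeg{p'}$ reduce to the four implications assumed for $p \equiv p'$, merely read in a permuted order. This is precisely why the symmetric formulation of \cref{def:semantic-equivalence-patterns}, which includes the non-matching clauses, is needed — without them this case would fail. For the \textbf{or} case, positive matching on $\patOr{p_1}{p_2}$ splits via \textsc{Or}$_1$/\textsc{Or}$_2$ into one of the $p_i$ matching, so we invoke one hypothesis and reapply the corresponding rule; non-matching via rule \textsc{Or} requires both $p_1$ and $p_2$ not to match, so we invoke both hypotheses and concatenate the resulting substitutions (using the lemma). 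The \textbf{and} case is dual: positive matching via \textsc{And} requires both subpatterns to match (concatenation, lemma), non-matching via \textsc{And}$_1$/\textsc{And}$_2$ requires only one. The \textbf{constructor} case is treated analogously, with three sub-subcases for the negative side corresponding to rules \textsc{Ctor}$_1$ and \textsc{Ctor}$_2$ (the latter is trivial since it only fires when the head constructors differ, and produces the empty substitution regardless).

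The main obstacle, aside from the routine bookkeeping, is keeping straight which direction of semantic equivalence is being used where. In particular the negation case is easy to get wrong: one is tempted to prove it with only the matching clauses of $\semantic{p} \equiv \semantic{p'}$, which would fail as the paragraph following \cref{def:semantic-equivalence-patterns} warns. A clean way to organize the proof is to first prove a small symmetric lemma stating that $\semantic{p} \equiv \semantic{q}$ iff $\semantic{\patNeg{p}} \equiv \semantic{\patNeg{q}}$, from which one congruence direction is immediate; the other three cases can then be established by the inversion-then-reapplication pattern described above without further subtleties.
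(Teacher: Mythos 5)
Your proposal is correct and follows the same route the paper takes (the theorem is discharged in the Rocq formalization by exactly this unfold-invert-transport-reapply scheme, using the fact that membership in a concatenated substitution list is the disjunction of memberships). The only nit is that the negative constructor side has two sub-cases, not three (rule \textsc{Ctor}$_1$ when the head constructors agree and some argument fails to match, and rule \textsc{Ctor}$_2$ when they differ), but this does not affect the argument.
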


We can use the definition of semantic equivalence to prove that the following algebraic laws hold for patterns:

\begin{theorem}[Algebraic Equivalences of Patterns, I, \rocq]
  \label{thm:algebraic-equivalences:one}
  For all patterns $p, q, r$, the following equivalences hold:
  \[
  \begin{array}{rclrclc}
    \semantic{\patAnd{p}{q}} &\equiv& \semantic {\patAnd{q}{p}} &
    \semantic{\patOr{p}{q}} &\equiv& \semantic {\patOr{q}{p}} &
    \emph{Commutativity} \\
    \semantic{\patAnd{p}{(\patAnd{q}{r})}} &\equiv& \semantic{\patAnd{(\patAnd{p}{q})}{r}} &
    \semantic{\patOr{p}{(\patOr{q}{r})}} &\equiv& \semantic{\patOr{(\patOr{p}{q})}{r}} &
    \emph{Associativity} \\
    \semantic{\patAnd{p}{\patWildcard}} &\equiv& \semantic{p} &
    \semantic{\patOr{p}{\patBot}} &\equiv& \semantic{p} &
    \emph{Neutral Elements} \\
    \semantic{\patNeg{\patWildcard}} &\equiv& \semantic{\patBot} &
    \semantic{\patNeg{\patBot}} &\equiv& \semantic{\patWildcard} &
    \emph{Duality} \\
    \semantic{\patNeg{(\patOr{p}{q})}} &\equiv& \semantic{\patAnd{(\patNeg{p})}{(\patNeg{q})}} &
    \semantic{\patNeg{(\patAnd{p}{q})}} &\equiv& \semantic{\patOr{(\patNeg{p})}{(\patNeg{q})}} &
    \emph{De Morgan} \\
    \multicolumn{6}{c}{\semantic{\patNeg{\patNeg{p}}} \equiv \semantic{p} } & \emph{Double Negation}
  \end{array}
  \]
\end{theorem}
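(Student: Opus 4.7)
The plan is to unfold Definition~\ref{def:semantic-equivalence-patterns} and, for each of the listed equivalences, establish the four implications it requires (forward/backward for both $\triangleright$ and $\not\triangleright$) by direct application and inversion of the matching rules in Figure~\ref{fig:pattern-algebra:rules}. I would organize the proof equation by equation; each case is an independent short argument, so there is no deep induction required. The key preliminary observation, which I would record as a lemma first, is that substitution equivalence $\semantic{\cdot} \equiv \semantic{\cdot}$ is defined as equality of the underlying set of mappings, so list concatenation $\mdoubleplus$ is commutative, associative, and admits $[]$ as a neutral element \emph{up to} $\equiv$. This fact absorbs essentially all of the bookkeeping for the propositional laws.

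With this lemma in hand, each case is mechanical. For \emph{commutativity} of $\patAnd{p}{q}$, invert rule \textsc{And} to get $\matches{p}{v}{\sigma_1}$ and $\matches{q}{v}{\sigma_2}$, reassemble with \textsc{And} in the swapped order to obtain $\matches{\patAnd{q}{p}}{v}{\sigma_2 \mdoubleplus \sigma_1}$, and conclude by the substitution lemma; the non-matching direction uses \textsc{And}$_1$/\textsc{And}$_2$ dually, and the or-case is symmetric via \textsc{Or}/\textsc{Or}$_{1,2}$. \emph{Associativity} is analogous but with three substitutions whose concatenations are equivalent up to re-bracketing. \emph{Neutral elements} follow from the fact that $\matches{\patWildcard}{v}{[]}$ always holds and $\matchesNot{\patBot}{v}{[]}$ always holds (together with soundness and completeness from Theorems~\ref{thm:matching-sound} and \ref{thm:matching-complete}, which rule out the opposite judgment). \emph{Duality} is immediate: $\patNeg{\patWildcard}$ can only reduce via \textsc{Neg}$_1$ from $\matchesNot{\patWildcard}{v}{\sigma}$, which by soundness is impossible, so $\patNeg{\patWildcard}$ matches no value; dually for $\patNeg{\patBot}$.

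For the \emph{De Morgan} laws, consider $\patNeg{(\patOr{p}{q})} \equiv \patAnd{\patNeg{p}}{\patNeg{q}}$. If $\matches{\patNeg{(\patOr{p}{q})}}{v}{\sigma}$, inversion via \textsc{Neg}$_1$ then \textsc{Or} yields $\matchesNot{p}{v}{\sigma_1}$ and $\matchesNot{q}{v}{\sigma_2}$ with $\sigma = \sigma_1 \mdoubleplus \sigma_2$; applying \textsc{Neg}$_1$ to each and then \textsc{And} reconstructs the right-hand side with the same substitution. The backwards direction and the non-matching directions (using \textsc{Or}$_{1,2}$ on one side and \textsc{And}$_{1,2}$ on the other together with \textsc{Neg}$_2$) are wholly analogous, and the second De Morgan law is dual. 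Finally, \emph{double negation} $\patNeg{\patNeg{p}} \equiv p$ falls out by one inversion and one reintroduction on each side: $\matches{\patNeg{\patNeg{p}}}{v}{\sigma}$ inverts through \textsc{Neg}$_1$ and \textsc{Neg}$_2$ to $\matches{p}{v}{\sigma}$, and dually for $\matchesNot{\cdot}{}{\cdot}$; this is precisely the scenario (cf.~Eq.~\ref{eq:example:good:5}) that motivated tracking substitutions in the non-matching judgment, so no equivalence class of substitutions is lost.

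The main obstacle is not conceptual but bureaucratic: each equivalence expands into four nested implications, and the substitutions produced by the two sides are syntactically different lists (e.g.\ $\sigma_1 \mdoubleplus \sigma_2$ versus $\sigma_2 \mdoubleplus \sigma_1$ in the commutativity cases, or re-bracketed triples in the associativity cases). Isolating the substitution-equivalence lemma up front and then appealing to it uniformly is what keeps the proofs from becoming unwieldy; this is also the only place where the definition of $\semantic{\sigma} \equiv \semantic{\sigma'}$ as a membership-extensional equality (rather than list equality) is actually exploited, and it is precisely what makes these purely algebraic laws true on the nose rather than only up to a more refined variable-freshness condition.
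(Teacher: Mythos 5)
Your proof is correct and is the natural argument here: the paper itself gives no written proof of this theorem (it defers to the Rocq formalization in the supplementary material), and your strategy --- unfolding \cref{def:semantic-equivalence-patterns}, discharging each of the four implications by inversion and reapplication of the rules in \cref{fig:pattern-algebra:rules}, and isolating up front that $\mdoubleplus$ is commutative, associative, and unital up to the membership-extensional equivalence on substitutions --- is exactly what such a formalization amounts to. The one observation worth keeping explicit is the inversion fact that $\matchesNot{\patWildcard}{v}{\sigma}$ and $\matches{\patBot}{v}{\sigma}$ are underivable (needed for the neutral-element and duality cases), which you correctly cover via \cref{thm:matching-sound,thm:matching-complete}.
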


In addition to these boolean laws we can also prove the following equivalences which involve constructor patterns:

\begin{theorem}[Equivalences of Constructor Patterns, I, \rocq]
  \label{thm:algebraic-equivalences:two}
  For all patterns $p_1$ to $p_n$ and $p_1'$ to $p_n'$, the following equivalences hold:
  \begin{align*}
    \semantic{\patAnd{\mathcal{C}(p_1,\ldots,p_n)}{\mathcal{C}(p'_1, \ldots, p'_n)}} &\equiv \semantic{\mathcal{C}(\patAnd{p_1}{p'_1}, \ldots, \patAnd{p_n}{p_n'})}\\
    \semantic{\mathcal{C}(p_1,\ldots, \patOr{p_i}{p'_i},\ldots, p_n)} &\equiv \semantic{\patOr{\mathcal{C}(p_1,\ldots,p_i,\ldots,p_n)}{\mathcal{C}(p_1,\ldots,p'_i,\ldots,p_n)}}
  \end{align*}
\end{theorem}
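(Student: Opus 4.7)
My plan is to unfold \cref{def:semantic-equivalence-patterns} and, for each of the two equivalences, discharge the four implications (matches-LHS-to-matches-RHS, matches-RHS-to-matches-LHS, and similarly for the does-not-match judgment) by direct case analysis on the single derivation at the root. No induction is needed, because the rules $\textsc{Ctor}$, $\textsc{Ctor}_1$, $\textsc{Ctor}_2$, $\textsc{And}$, $\textsc{And}_1$, $\textsc{And}_2$, $\textsc{Or}_1$, $\textsc{Or}_2$, and $\textsc{Or}$ already decompose the matching of the outer connectives completely. The critical observation which drives every case is that semantic equivalence of substitutions is membership-based, so concatenation behaves like a commutative, associative, and idempotent operation on substitutions up to $\equiv$; this will be used repeatedly to line up substitutions obtained in different orders.

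For the first equivalence $\semantic{\patAnd{\mathcal{C}(p_1,\ldots,p_n)}{\mathcal{C}(p_1',\ldots,p_n')}} \equiv \semantic{\mathcal{C}(\patAnd{p_1}{p_1'},\ldots,\patAnd{p_n}{p_n'})}$ the matching cases are straightforward: a positive derivation on the LHS forces the scrutinee to have shape $\mathcal{C}(v_1,\ldots,v_n)$ and produces $\sigma_1 \mdoubleplus \cdots \mdoubleplus \sigma_n \mdoubleplus \sigma_1' \mdoubleplus \cdots \mdoubleplus \sigma_n'$, while the RHS produces $(\sigma_1 \mdoubleplus \sigma_1') \mdoubleplus \cdots \mdoubleplus (\sigma_n \mdoubleplus \sigma_n')$, and these are clearly $\equiv$. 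For the non-matching direction, either the head constructor disagrees (both sides use $\textsc{Ctor}_2$ and return $[]$), or $v = \mathcal{C}(v_1,\ldots,v_n)$ and some $p_i$ or $p_i'$ fails at index $i$; in that case $\textsc{And}_1$ or $\textsc{And}_2$ on the LHS is mirrored by $\textsc{Ctor}_1$ with $\textsc{And}_1$/$\textsc{And}_2$ on the RHS at the same index $i$, yielding exactly the same substitution.

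The second equivalence is where the real work lies. The positive direction is symmetric and unproblematic: a derivation of the LHS by $\textsc{Ctor}$ uses either $\textsc{Or}_1$ or $\textsc{Or}_2$ at position $i$, and this choice selects the corresponding disjunct of the RHS; conversely, an $\textsc{Or}_1$/$\textsc{Or}_2$ derivation on the RHS already carries a $\textsc{Ctor}$ derivation that can be transplanted into the LHS by replacing its $i$-th premise with an $\textsc{Or}_1$ or $\textsc{Or}_2$ step. The main obstacle is the non-matching direction, because $\textsc{Or}$ for non-matching demands failure of \emph{both} disjuncts and concatenates their substitutions, whereas $\textsc{Ctor}_1$ for non-matching only records a single failing premise. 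I will split on whether the failing index in the LHS derivation is $i$ or some $j \neq i$: if $j \neq i$, both conjuncts of the RHS fail at the same index $j$ with the same substitution $\sigma$, and $\sigma \mdoubleplus \sigma \equiv \sigma$ makes the substitutions agree; if the failing index is $i$, the $\textsc{Or}$ rule applied to $\patOr{p_i}{p_i'}$ already decomposes $\sigma$ as $\sigma_1 \mdoubleplus \sigma_2$ with $\matchesNot{p_i}{v_i}{\sigma_1}$ and $\matchesNot{p_i'}{v_i}{\sigma_2}$, so the two disjuncts of the RHS fail with $\sigma_1$ and $\sigma_2$ and the combined substitution coincides with $\sigma$. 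The reverse direction is dual: one inspects which indices the two $\textsc{Ctor}_1$ premises of the RHS $\textsc{Or}$ derivation pick and then chooses the corresponding failing premise on the LHS, again using idempotence of $\mdoubleplus$ modulo $\equiv$ to reconcile duplicate contributions. The case where $v$'s head disagrees with $\mathcal{C}$ is handled uniformly by $\textsc{Ctor}_2$ on both sides, with $[] \mdoubleplus [] \equiv []$.
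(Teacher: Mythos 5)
Your handling of the first equivalence and of both \emph{matching} directions of the second is sound, and your observation that membership-based equivalence of substitutions makes $\mdoubleplus$ commutative, associative and idempotent up to $\equiv$ is the right tool for those cases. The gap is in the quadrant you dispatch in one sentence: showing that every non-matching derivation of $\patOr{\mathcal{C}(p_1,\ldots,p_i,\ldots,p_n)}{\mathcal{C}(p_1,\ldots,p_i',\ldots,p_n)}$ is mirrored on the side of $\mathcal{C}(p_1,\ldots,\patOr{p_i}{p_i'},\ldots,p_n)$. The rule \textsc{Or} for non-matching concatenates the substitutions of two \emph{independently chosen} \textsc{Ctor}$_1$ premises, and those premises may select \emph{different} argument positions; idempotence of $\mdoubleplus$ modulo $\equiv$ only reconciles the case where the two contributions are equivalent. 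Concretely, take $n=2$, $i=2$, $p_1 = \patNeg{x}$, $p_2 = \patNeg{y}$, $p_2' = \patNeg{z}$, and the value $\mathcal{C}(a,b)$. On the or-of-constructors side the first disjunct can fail at position $1$ with $[x \mapsto a]$ and the second at position $2$ with $[z \mapsto b]$, so \textsc{Or} produces the non-matching substitution $[x \mapsto a, z \mapsto b]$. On the other side \textsc{Ctor}$_1$ records the failure of a \emph{single} position, so the only available substitutions are $[x \mapsto a]$ (position $1$) and $[y \mapsto b, z \mapsto b]$ (position $2$, via \textsc{Or} applied to $\patOr{\patNeg{y}}{\patNeg{z}}$), and neither is membership-equivalent to $[x \mapsto a, z \mapsto b]$. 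Your ``choose the corresponding failing premise'' step therefore does not go through when the two premises disagree on the position and both carry non-empty substitutions.

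Note that non-empty non-matching substitutions can only arise from variables occurring under an odd number of negations, which is precisely what the side condition $\bigcup_{i}\freevarodd{p_i} = \emptyset$ of \textsc{L-Ctor}$^-$ forbids; under that hypothesis every failure witness for a component $p_j$ with $j \neq i$ is the empty substitution, the concatenation collapses, and your argument closes. The paper states the theorem without such a hypothesis and defers the proof to its Rocq development, so I cannot compare your route to a written one; but as it stands your proposal leaves exactly this case open. You need either an explicit argument for why the two independently chosen failure witnesses of the two disjuncts can always be reconciled into a single \textsc{Ctor}$_1$ derivation (which the displayed rules do not appear to support), or an added restriction on odd-negated variables in the $p_j$.
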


There are still some equivalences that we expect to hold but which are missing from \cref{thm:algebraic-equivalences:one} and \cref{thm:algebraic-equivalences:two}.
These additional laws, like the distributive law for and- and or-patterns, are not universally valid and require additional restrictions on patterns.
We will motivate and introduce these additional constraints in the next subsection.

\subsection{Linear Patterns}
\label{subsec:pattern-algebra:linear-patterns}

The first restriction we impose on patterns is \emph{linearity}.
The linearity restriction is easy to state if the syntax of patterns does not include and-patterns, or-patterns and negation-patterns: A pattern is linear if every variable occurs at most once.
To see why this restriction is necessary consider the nonlinear pattern $\tmCons{x}{x}$, for which we can derive the following judgment:
\begin{align}
  \label{eq:match-counterexample-1}
  \matches{\tmCons{x}{x}}{\tmCons{2}{\tmNil}}{[x\mapsto 2, x\mapsto \tmNil]}
\end{align}
The problem here is that multiple values are bound to the same variable, i.e. the result of the pattern match is not a proper substitution.

We have a more expressive language of algebraic patterns, and the simple definition of linearity described above is no longer sufficient.
To see why this is the case, consider the pattern $\patOr{(\patAnd{x}{\tmTrue})}{\tmFalse}$ that is linear according to the simple definition, but which shows the following pathological behavior:
\begin{align}
  \label{eq:match-counterexample-2}
  \matches{\patOr{(\patAnd{x}{\tmTrue})}{\tmFalse}}{\tmTrue}{[x \mapsto \tmTrue]} \\
  \label{eq:match-counterexample-3}
  \matches{\patOr{(\patAnd{x}{\tmTrue})}{\tmFalse}}{\tmFalse}{[]}
\end{align}
\cref{eq:match-counterexample-2,eq:match-counterexample-3} show that matching against this pattern can produce substitutions that do not have the same domain.

The two examples above tell us what the correct notion of linearity should guarantee:
First, if a linear pattern matches against a value, then the resulting substitution should be a proper substitution which contains at most one mapping for a variable.
Second, if we use a pattern in a clause we should be able to read off the variables that will be in the domain of the substitution of a successful match.

In order to ensure that these properties hold we have to use two mutually recursive judgments $\linearpos{p}$ and $\linearneg{p}$, whose rules are given in \cref{fig:pattern-algebra:rules}.
We use $\linearposneg{p}$ for patterns which satisfy both $\linearpos{p}$ and $\linearneg{p}$.
\cref{thm:pattern-algebra:linear-patterns-covering,thm:pattern-algebra:linear-patterns-substitutions} show that linear patterns do have the properties described in the previous paragraph.

\begin{theorem}[Linear patterns produce covering substitutions, \rocq]
  \label{thm:pattern-algebra:linear-patterns-covering}
  For any pattern $p$, value $v$ and substitution $\sigma$:
  \begin{enumerate}
    \item If $\linearpos{p}$ and $\matches{p}{v}{\sigma}$, then the domain of $\sigma$ is equal to $\freevareven{p}$.
    \item If $\linearneg{p}$ and $\matchesNot{p}{v}{\sigma}$, then the domain of $\sigma$ is equal to $\freevarodd{p}$.
  \end{enumerate}
\end{theorem}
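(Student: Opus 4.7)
The plan is to prove both parts simultaneously by mutual induction on the structure of $p$, at each step inverting the linearity judgment to extract the structural hypotheses on the subpatterns and then inverting (or enumerating) the matching / non-matching rules that could have produced the given derivation. Because $\linearpos{\cdot}$ and $\linearneg{\cdot}$ are mutually recursive and the negation case swaps the two judgments, part (1) and part (2) must be strengthened into a single statement and proved together; in particular the $\patNeg{p}$ case of (1) invokes the induction hypothesis for (2), and vice versa.

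For part (1), the leaf cases are routine: \textsc{Var} produces $[x \mapsto v]$ whose domain is $\{x\} = \freevareven{x}$; \textsc{Wild} produces $[]$; and $\patBot$ is vacuous since no $\matches{\patBot}{v}{\sigma}$ can be derived. In the constructor case, inverting \textsc{Ctor} gives $\sigma = \sigma_1 \mdoubleplus \cdots \mdoubleplus \sigma_n$ with each $\matches{p_i}{v_i}{\sigma_i}$; inverting \textsc{L-Ctor}$^+$ gives $\linearpos{p_i}$ together with the disjointness hypothesis on the $\freevareven{p_i}$, so the induction hypotheses yield $\mathrm{dom}(\sigma_i) = \freevareven{p_i}$ and the domain of the concatenation is their union, which is $\freevareven{\mathcal{C}(p_1,\ldots,p_n)}$. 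The \textsc{And} case is analogous, using the disjointness clause of \textsc{L-And}$^+$. For \textsc{Or}, only \textsc{Or}$_1$ or \textsc{Or}$_2$ could have fired, so the witness substitution has domain $\freevareven{p_1}$ or $\freevareven{p_2}$, and the equality side-condition of \textsc{L-Or}$^+$ forces these to coincide with $\freevareven{\patOr{p_1}{p_2}}$. Finally, for $\patNeg{p}$, rule \textsc{Neg}$_1$ forces $\matchesNot{p}{v}{\sigma}$, and \textsc{L-Neg}$^+$ gives $\linearneg{p}$, so applying part (2) of the induction hypothesis yields $\mathrm{dom}(\sigma) = \freevarodd{p} = \freevareven{\patNeg{p}}$.

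Part (2) is the exact dual. The axiomatic leaf now is $\patBot$, which matches via \textsc{Absurd} producing $[]$, and $x$ and $\patWildcard$ are vacuous because $\matchesNot{x}{v}{\sigma}$ and $\matchesNot{\patWildcard}{v}{\sigma}$ are not derivable. The constructor case splits on \textsc{Ctor}$_1$ and \textsc{Ctor}$_2$: in the former, $\sigma$ comes from a single subderivation $\matchesNot{p_i}{v_i}{\sigma}$ whose domain is $\freevarodd{p_i}$ by induction, and the hypothesis $\bigcup_i \freevarodd{p_i} = \emptyset$ of \textsc{L-Ctor}$^-$ forces each $\freevarodd{p_i}$ to be $\emptyset$, matching $\freevarodd{\mathcal{C}(p_1,\ldots,p_n)} = \emptyset$; in the latter the substitution is empty and again $\freevarodd{\mathcal{C}(\cdots)} = \emptyset$. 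The \textsc{Or} case appeals to rule \textsc{Or} (both sides fail), using the disjointness in \textsc{L-Or}$^-$ to take the union of the two domains cleanly. The \textsc{And} case (rules \textsc{And}$_1$, \textsc{And}$_2$) uses the equality side-condition of \textsc{L-And}$^-$. And the $\patNeg{p}$ case appeals to part (1) of the induction hypothesis.

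The main obstacle, and the reason the proof cannot be phrased as a single-statement induction, is the interaction between negation and variable-binding that \cref{def:semantic-equivalence-patterns} already hinted at: the ``deactivated'' substitutions produced by $\matchesNot{\cdot}{\cdot}{\cdot}$ are necessary precisely so that wrapping $p$ in one more negation turns $\freevarodd{p}$ into $\freevareven{\patNeg{p}}$ in a way that mirrors the matching semantics, and the linearity rules \textsc{L-Neg}$^{\pm}$ are designed to make this swap sound. Once the statements are bundled into one mutual induction and the linearity rules are inverted in each case, every subgoal reduces to a set-theoretic identity (union or equality of the $\freevareven$ / $\freevarodd$ of the subpatterns), so the remaining work is bookkeeping rather than creative reasoning.
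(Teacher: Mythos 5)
Your proof is correct and is essentially the argument the paper delegates to its Rocq formalization: a mutual structural induction pairing the two claims, inverting the linearity and matching rules in each case, with the negation case swapping between the even/odd statements exactly as you describe. The case analysis is complete (including the vacuous cases for $\patBot$ under $\triangleright$ and for $x$, $\patWildcard$ under $\not\triangleright$, and both \textsc{Ctor}$_1$/\textsc{Ctor}$_2$ in part (2)), so nothing further is needed.
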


\begin{theorem}[Linear patterns produce proper substitutions, \rocq]
  \label{thm:pattern-algebra:linear-patterns-substitutions}
  For any pattern $p$, value $v$ and substitution $\sigma$:
  \begin{enumerate}
    \item If $\linearpos{p}$ and $\matches{p}{v}{\sigma}$, then $\sigma$ is a proper substitution.
    \item If $\linearneg{p}$ and $\matchesNot{p}{v}{\sigma}$, then $\sigma$ is a proper substitution.
  \end{enumerate}
  A substitution $\sigma$ is proper if any variable occurs at most once in the domain of $\sigma$.
\end{theorem}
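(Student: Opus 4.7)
The plan is to prove both claims simultaneously by induction on the pattern $p$, inverting both the linearity derivation and the matching derivation at each case, and using \cref{thm:pattern-algebra:linear-patterns-covering} as the key lemma that identifies the domain of each intermediate substitution with the corresponding $\freevareven{}$ or $\freevarodd{}$ set.

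The base cases are immediate. For variables we only have the rule \textsc{Var}, producing the singleton $[x \mapsto v]$, which is proper; the negative claim is vacuous because no rule derives $\matchesNot{x}{v}{\sigma}$. The cases $\patWildcard$ and $\patBot$ are symmetric: each admits only one rule and produces $[]$, with the opposite claim again being vacuous.

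The straightforward inductive cases are those where the conclusion's substitution is reused unchanged from a premise: \textsc{Neg}$_1$/\textsc{Neg}$_2$, \textsc{Or}$_1$/\textsc{Or}$_2$, \textsc{And}$_1$/\textsc{And}$_2$, \textsc{Ctor}$_1$, and \textsc{Ctor}$_2$. For each of these I would invert the relevant linearity rule to obtain a linearity hypothesis for the appropriate subpattern with the appropriate polarity (e.g.\ $\linearpos{\patNeg{p}}$ inverts via \textsc{L-Neg}$^+$ to $\linearneg{p}$, matching the polarity of the premise $\matchesNot{p}{v}{\sigma}$ needed by \textsc{Neg}$_1$), then apply the induction hypothesis directly.

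The interesting cases are those that concatenate substitutions: the positive \textsc{And} rule, the negative \textsc{Or} rule, and the positive \textsc{Ctor} rule (and dually). The uniform strategy is: apply the induction hypothesis to each premise to learn that each $\sigma_i$ is proper; apply \cref{thm:pattern-algebra:linear-patterns-covering} to each premise to learn $\dom(\sigma_i) = \freevareven{p_i}$ or $\dom(\sigma_i) = \freevarodd{p_i}$; then read off from the side condition of the corresponding linearity rule (respectively $\freevareven{p_1} \cap \freevareven{p_2} = \emptyset$ from \textsc{L-And}$^+$, $\freevarodd{p_1} \cap \freevarodd{p_2} = \emptyset$ from \textsc{L-Or}$^-$, and the pairwise disjointness implied by \textsc{L-Ctor}$^+$) that the $\dom(\sigma_i)$ are pairwise disjoint. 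Since each $\sigma_i$ is proper on its own and the domains are pairwise disjoint, no variable can appear twice in the concatenation $\sigma_1 \mdoubleplus \cdots \mdoubleplus \sigma_n$, so the result is proper.

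The main obstacle is bookkeeping of polarities: the induction hypothesis has to be stated in a mutually recursive form so that the negation cases (which swap between the two claims) can use the dual hypothesis, and one must be careful that every inversion of a linearity derivation produces hypotheses of exactly the polarity demanded by the matching rule used. Beyond that, the argument is routine and factors through the covering theorem in the obvious way; no non-trivial reasoning about the structure of values is needed.
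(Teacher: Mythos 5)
Your proposal is correct and follows what is essentially the paper's own argument: both parts are proved by mutual induction, with the covering theorem (\cref{thm:pattern-algebra:linear-patterns-covering}) identifying the domains of the component substitutions with the relevant $\freevareven{}$/$\freevarodd{}$ sets and the linearity side conditions supplying the disjointness needed in the concatenation cases (\textsc{And}, negative \textsc{Or}, positive \textsc{Ctor}). The one point to watch is \textsc{L-Ctor}$^+$, whose displayed side condition $\bigcap_{i=1}^{n}\freevareven{p_i} = \emptyset$ is literally weaker than the pairwise disjointness your argument (correctly) relies on for $n \geq 3$; the paper's prose (``we require all the $\freevareven{p_i}$ to be disjoint'') confirms that pairwise disjointness is the intended reading, so your proof is sound under that reading.
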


We will now motivate why the rules in \cref{fig:pattern-algebra:rules} have the specific form that they do.
\cref{thm:pattern-algebra:linear-patterns-covering} tells us that we should think about the variables in $\freevareven{p}$ as the actual pattern variables which we can use in the right-hand side of a clause, and that we should think of the variables in $\freevarodd{p}$ as the \enquote{temporarily deactivated} pattern variables
\enquote{reactivated} by negation.

Let us first look at the restrictions on the variables in $\freevareven{p}$, since those are more intuitive.
In the or-pattern $\patOr{p_1}{p_2}$ we require that $\freevareven{p_1}$ is equal to $\freevareven{p_2}$ since we do not know which of the two patterns will match, and we have to ensure that the same variables appear in the substitution.
In the and-pattern $\patAnd{p_1}{p_2}$ we require that $\freevareven{p_1}$ is disjoint from $\freevareven{p_2}$ since we do not want the same variable to be mapped to two values in the substitution.
The reasoning is very similar for constructor patterns $\mathcal{C}(p_1,\ldots,p_i$) where we require all the $\freevareven{p_i}$ to be disjoint.

Next, let us look at the restrictions on $\freevarodd{e}$, i.e. the variables which occur under an odd number of negations.
The rules for and-patterns and or-patterns are motivated by duality.
We have seen that the De Morgan rules are valid for patterns.
We therefore want, for example, that if the pattern $\patNeg{(\patOr{p_1}{p_2})}$ is linear, the pattern $\patAnd{\patNeg{p_1}}{\patNeg{p_2}}$ should be linear as well.
For this to be true, the restrictions on $\freevarodd{p}$ for or-patterns have to mirror the restrictions on $\freevareven{p}$ for and-patterns, and vice-versa.
We still have to explain why we require for a constructor pattern $\mathcal{C}(p_1,\ldots,p_n)$ that all $\freevarodd{p_i}$ have to be the empty set.
The reason for this restriction lies in the following semantic equivalence,
which is motivated by its importance for rewriting patterns into a normal form (see \cref{subsec:normalization:nnf}) and which we will prove below.
\begin{equation*}
  \semantic{\patNeg{\mathcal{C}(p_1,\ldots,p_n)}}
  \equiv
  \semantic{\patOr{\patNeg{\mathcal{C}(\patWildcard,\ldots,\patWildcard)}}{\patOr{\mathcal{C}(\patNeg{p_1},\ldots,\patWildcard)}{\patOr{\ldots}{\mathcal{C}(\patWildcard,\ldots,\patNeg{p_n})}}}}
\end{equation*}
The pattern on the right consists of several patterns that are joined by or-patterns.
The rules for or-patterns require that each disjunct contains the same sets $\freevareven{-}$.
But since the first pattern $\patNeg{\mathcal{C}(\patWildcard,\ldots,\patWildcard)}$ doesn't contain any variables, the other disjuncts must not contain any variables under an even number of negations.
And since the subpatterns $p_i$ occur under a negation, we can deduce that the patterns $p_i$ must not contain any variables under an odd number of negations.

Using linearity we can now prove the missing equivalences which were not included in \cref{thm:algebraic-equivalences:one,thm:algebraic-equivalences:two}.

\begin{theorem}[Algebraic equivalences of Patterns, II, \rocq]
  \label{thm:algebraic-equivalences:three}
  For all patterns $p,q$ and $r$, the following equivalences hold if the patterns on both sides are linear (i.e.~$\linearposneg{}$).
  \[
  \begin{array}{rlrlc}
    \semantic{\patAnd{p}{(\patOr{q}{r})}}\mkern-10mu &\equiv \semantic{\patOr{(\patAnd{p}{q})}{(\patAnd{p}{r})}} &
    \semantic{\patOr{p}{(\patAnd{q}{r})}}\mkern-10mu &\equiv \semantic{\patAnd{(\patOr{p}{q})}{(\patOr{p}{r})}} &
    \emph{Distributivity} \\
    \semantic{\patAnd{p}{p}}\mkern-10mu &\equiv \semantic{p} &
    \semantic{\patOr{p}{p}}\mkern-10mu &\equiv \semantic{p} &
    \emph{Idempotence} \\
    \semantic{\patAnd{p}{\patBot}}\mkern-10mu &\equiv \semantic{\patBot} &
    \semantic{\patOr{p}{\patWildcard}}\mkern-10mu &\equiv \semantic{\patWildcard} &
    \emph{Zeros}
  \end{array}
  \]
\end{theorem}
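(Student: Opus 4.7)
Proof plan. The plan is to prove each equivalence by applying Definition of semantic equivalence directly, which breaks each equivalence into four implications covering both directions of $\triangleright$ and of $\not\triangleright$. For each implication I would invert the matching rule of Figure 3 used at the root of the hypothesis derivation, case-split on which rule applies, and reconstruct the corresponding derivation on the other side. The residual obligation is always to show that the concatenation of substitutions produced on the two sides is equivalent in the sense of our definition of semantic equivalence of substitutions, which is just set equality on mappings and is therefore insensitive to duplicates.

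For distributivity the positive direction is purely structural: a derivation of $\matches{\patAnd{p}{(\patOr{q}{r})}}{v}{\sigma_p \mdoubleplus \sigma_{qr}}$ must use \textsc{And} at the root and \textsc{Or}$_i$ below, and reassembling as \textsc{Or}$_i$ of \textsc{And} yields the right-hand side with exactly the same substitution. The negative direction splits on which rule produced each non-matching judgment. The only delicate subcase is when both conjuncts on the right-hand or-pattern are derived not to match via \textsc{And}$_1$, giving independent derivations $\matchesNot{p}{v}{\sigma_q}$ and $\matchesNot{p}{v}{\sigma_r}$ for the same pattern $p$; here the linearity of the right-hand or-pattern forces $\freevarodd{p} = \emptyset$ via \textsc{L-Or}$^-$ (because \textsc{L-And}$^-$ yields $\freevarodd{\patAnd{p}{q}} = \freevarodd{p} = \freevarodd{\patAnd{p}{r}}$, and \textsc{L-Or}$^-$ demands these be disjoint), so \cref{thm:pattern-algebra:linear-patterns-covering} collapses both $\sigma_q$ and $\sigma_r$ to $[]$.

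For idempotence and zeros the matching rules alone are insufficient, because independent derivations of the same subpattern can yield different substitutions. The rescue is always the same: the linearity hypothesis on the compound pattern together with \cref{thm:pattern-algebra:linear-patterns-covering} forces certain variable sets to be empty and hence every relevant substitution to be $[]$. For $\semantic{\patAnd{p}{p}} \equiv \semantic{p}$, rule \textsc{L-And}$^+$ instantiated at $p_1 = p_2 = p$ forces $\freevareven{p} = \emptyset$, while for $\semantic{\patOr{p}{p}} \equiv \semantic{p}$ rule \textsc{L-Or}$^-$ forces $\freevarodd{p} = \emptyset$. For $\semantic{\patAnd{p}{\patBot}} \equiv \semantic{\patBot}$ the positive direction is vacuous (nothing matches $\patBot$) and in the negative direction \textsc{L-And}$^-$ forces $\freevarodd{p} = \emptyset$, making every $\matchesNot{p}{v}{\sigma}$ satisfy $\sigma = []$ so as to agree with $\matchesNot{\patBot}{v}{[]}$; the dual $\semantic{\patOr{p}{\patWildcard}} \equiv \semantic{\patWildcard}$ is symmetric via \textsc{L-Or}$^+$.

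The main obstacle is therefore not the structural rule-inversion, which is routine, but bookkeeping the substitutions whenever two subderivations drawing from the same underlying pattern are joined by \textsc{And} or \textsc{Or}. Each potentially problematic subcase is discharged by extracting from the linearity hypothesis an equation or disjointness that forces the relevant set of free variables to be $\emptyset$ and then appealing to \cref{thm:pattern-algebra:linear-patterns-covering} to conclude that the ambient substitution is $[]$. Once these collapses are in hand, each of the six equivalences reduces to a short case analysis following the shape of the matching rules.
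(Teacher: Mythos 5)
Your plan is correct: the paper gives no written proof of this theorem (it is discharged in the accompanying Rocq development), and your strategy --- unfolding \cref{def:semantic-equivalence-patterns}, inverting the matching rules, and extracting from the linearity side conditions (\textsc{L-And}$^{\pm}$, \textsc{L-Or}$^{\pm}$) that the relevant free-variable sets are empty so that, by \cref{thm:pattern-algebra:linear-patterns-covering}, duplicated subderivations all yield $[]$ --- is precisely the argument that makes these laws go through and explains why they, unlike those of \cref{thm:algebraic-equivalences:one}, need the linearity hypothesis. The only point you leave implicit is the dual distributivity law $\semantic{\patOr{p}{(\patAnd{q}{r})}} \equiv \semantic{\patAnd{(\patOr{p}{q})}{(\patOr{p}{r})}}$, whose positive right-to-left direction has the analogous delicate subcase of two independent matches of $p$; it is resolved by the same device, since $\linearpos{\patAnd{(\patOr{p}{q})}{(\patOr{p}{r})}}$ forces $\freevareven{p} = \freevareven{q} = \freevareven{r} = \emptyset$.
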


\cref{thm:algebraic-equivalences:one} and \cref{thm:algebraic-equivalences:three} taken together show that patterns form a boolean algebra, with the caveat that the conjunction or disjunction of two linear patterns does not necessarily result in a linear pattern.
We can also prove some additional equivalences involving constructor patterns.

\begin{theorem}[Equivalences of Constructor Patterns, II, \rocq]
  \label{thm:algebraic-equivalences:four}
  For all patterns $p_1$ to $p_n$ and $p_1'$ to $p_n'$ and constructors $\mathcal{C} \neq \mathcal{C}'$, the following equivalences hold if the patterns on both sides are linear (i.e.~$\linearposneg{}$).
  \begin{align*}
    &&\semantic{\mathcal{C}(p_1,\ldots,\patBot,\ldots,p_n)} &\equiv \semantic{\patBot} \\
    &&\semantic{\patAnd{\mathcal{C}(p_1,\ldots,p_n)}{\mathcal{C}'(p'_1,\ldots,p'_m)}} &\equiv \semantic{\patBot} \\
    &&\semantic{\patAnd{\mathcal{C}(p_1,\ldots, p_n)}{\patNeg{\mathcal{C}'(p'_1,\ldots, p'_m)}}} &\equiv \semantic{\mathcal{C}(p_1,\ldots,p_n)}\\
    &&\multispan2{$\semantic{\patNeg{\mathcal{C}(p_1,\ldots,p_n)}} \equiv \semantic{\patOr{\patNeg{\mathcal{C}(\patWildcard,\ldots,\patWildcard)}}{\patOr{\mathcal{C}(\patNeg{p_1},\ldots,\patWildcard)}{\patOr{\ldots}{\mathcal{C}(\patWildcard,\ldots,\patNeg{p_n})}}}}$\hfil}
  \end{align*}
\end{theorem}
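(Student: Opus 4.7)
The plan is to prove each of the four equivalences separately by unfolding \cref{def:semantic-equivalence-patterns} and establishing each of its four implications: two for positive matches in either direction, and two for negative matches in either direction. Each implication proceeds by rule inversion on the matching derivation and then construction of the dual derivation on the other side. Substitution equivalence will follow from the specific rules applied, together with the linearity hypothesis, which by \cref{thm:pattern-algebra:linear-patterns-covering} controls the domains of the produced substitutions.

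The first three equations should fall out essentially mechanically. For equation 1, no positive \textsc{Ctor} derivation for $\mathcal{C}(p_1,\ldots,\patBot,\ldots,p_n)$ can exist because $\patBot$ has no positive matching rule, matching the fact that $\patBot$ itself never matches; and for non-matches we use \textsc{Ctor}$_1$ instantiated at the $\patBot$ position, which by \textsc{Absurd} yields the empty substitution, mirroring the empty substitution that $\patBot$ produces. Equation 2 is handled similarly: the and-pattern cannot positively match since no value is simultaneously of both constructor shapes, and \textsc{And}$_1$ or \textsc{And}$_2$ combined with \textsc{Ctor}$_2$ handle non-matches with the empty substitution. Equation 3 proceeds by observing that $\mathcal{C}(p_1,\ldots,p_n)$ and $\patNeg{\mathcal{C}'(p_1',\ldots,p_m')}$ simultaneously positively match $v$ precisely when $v = \mathcal{C}(v_1,\ldots,v_n)$ matches the first conjunct, since the second then matches via \textsc{Neg}$_1$ and \textsc{Ctor}$_2$ with empty substitution, so the and-pattern's substitution is exactly that of $\mathcal{C}(p_1,\ldots,p_n)$; the non-match direction uses the symmetric reasoning via \textsc{And}$_1$, \textsc{And}$_2$, and the negation rules.

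The fourth equation is the main content. The key semantic observation is that $\patNeg{\mathcal{C}(p_1,\ldots,p_n)}$ matches $v$ exactly when $\mathcal{C}(p_1,\ldots,p_n)$ fails to match $v$, and rules \textsc{Ctor}$_1$ and \textsc{Ctor}$_2$ tell us that this failure splits into two disjoint cases: either $v$ is headed by a different constructor, captured on the right by the first disjunct $\patNeg{\mathcal{C}(\patWildcard,\ldots,\patWildcard)}$, or $v = \mathcal{C}(v_1,\ldots,v_n)$ and some $p_i$ fails to match $v_i$, captured by the $i$-th branch $\mathcal{C}(\patWildcard,\ldots,\patNeg{p_i},\ldots,\patWildcard)$. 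In the forward direction, from $\matches{\patNeg{\mathcal{C}(p_1,\ldots,p_n)}}{v}{\sigma}$ we obtain by \textsc{Neg}$_2$-inversion a derivation of $\matchesNot{\mathcal{C}(p_1,\ldots,p_n)}{v}{\sigma}$, and a case split on this yields either a constructor-mismatch (select the first disjunct of the RHS) or a failing $i$-th premise (select the $(i{+}1)$-th disjunct, reconstructing the \textsc{Ctor} premises with \textsc{Wild} for $j \neq i$ and \textsc{Neg}$_1$ for position $i$). The reverse direction is symmetric, applying inversions on the nested \textsc{Or}$_1$/\textsc{Or}$_2$ choices to recover the corresponding non-matching derivation on the left.

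The hard part is the bookkeeping of substitutions. Linearity of the left-hand side forces $\freevarodd{p_i} = \emptyset$ for every $i$ via \textsc{L-Ctor}$^-$, and therefore by \cref{thm:pattern-algebra:linear-patterns-substitutions} together with \cref{thm:pattern-algebra:linear-patterns-covering} the substitution produced by any $\matchesNot{p_i}{v_i}{\sigma_i}$ is empty; symmetrically, the linearity requirement \textsc{L-Or}$^+$ on the right-hand side forces all disjuncts to share the same $\freevareven$, which constrains where nonempty mappings can appear. Once these linearity side-conditions are unpacked, the substitutions produced by the two sides coincide mapping-by-mapping, and the four implications collapse into straightforward matching of derivation shapes. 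Formally verifying this in Rocq is expected to be tedious but routine given the \rocq{}-marked equivalences for $\matches{}{}{}$ and $\matchesNot{}{}{}$ already established.
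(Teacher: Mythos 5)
Your proposal is correct and follows essentially the same route as the paper's proof (which is carried out in the accompanying Rocq formalization rather than in prose): unfold \cref{def:semantic-equivalence-patterns}, perform rule inversion on the $\matches{p}{v}{\sigma}$ / $\matchesNot{p}{v}{\sigma}$ derivations, reconstruct the dual derivation, and use \textsc{L-Ctor}$^-$ together with \cref{thm:pattern-algebra:linear-patterns-covering} to force the relevant substitution domains to be empty so that the produced substitutions coincide. The only nitpick is that the empty-domain argument you spell out for the fourth equation is equally needed for the non-match directions of the first three (e.g.\ \textsc{Ctor}$_1$ may fire at a position other than the $\patBot$), but your linearity reasoning already covers that case.
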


The equivalences of \cref{thm:algebraic-equivalences:four} are essential for proving the correctness of the normalization algorithm specified in \cref{sec:normalization}.

\subsection{Deterministic Patterns}
\label{subsec:pattern-algebra:deterministic-patterns}

We also have to introduce a notion of deterministic patterns.
To see why this is necessary, consider the pathological pattern $\patOr{\tmPair{x}{\patWildcard}}{\tmPair{\patWildcard}{x}}$.
This pattern is linear, but when we match against the value $\tmPair{2}{4}$ we can either bind the value $2$ or the value $4$ to the variable $x$.
Deterministic patterns guarantee that when there are multiple derivations which show that a pattern matches against a value, then these alternatives must yield equivalent substitutions\footnote{\
  Most languages which implement or-patterns do not check whether patterns are deterministic and instead evaluate or-patterns left-to-right, but this invalidates algebraic laws such as the commutativity of or-patterns.
  If we need such left-biased or-patterns $\patLeftOr{p}{q}$ then we can easily define them as $\patOr{p}{(\patAnd{\patNeg{p}}{q})}$.
}.

To define deterministic patterns we first have to introduce overlapping and disjoint patterns. In \cref{sec:appendix-overlap} we give a procedure for deciding whether two patterns in a certain normalized form -- which we will introduce in \cref{sec:normalization} -- overlap.
\begin{definition}[Overlapping and Disjoint Patterns]
  Two patterns $p$ and $q$ are overlapping, written $\overlaps{p}{q}$, if there exists a value $v$ and substitutions $\sigma_1$ and $\sigma_2$ such that $\matches{p}{v}{\sigma_1}$ and $\matches{q}{v}{\sigma_2}$.
  Two patterns $p$ and $q$ are disjoint, written $\disjoint{p}{q}$ if such a value and such substitutions do not exist.
  \label{def:pattern-algebra:disjoint-patterns}
\end{definition}

We write $\deterministic{p}$ for patterns which satisfy the determinism restriction specified in \cref{fig:pattern-algebra:rules}.
Note that there are two rules for or\nobreakdash-patterns:
The rule \textsc{D-Or}$_1$ says that an or\nobreakdash-pattern is deterministic if the subpatterns are disjoint, whereas the rule \textsc{D-Or}$_2$ allows arbitrary deterministic subpatterns as long as they do not contain free variables.
As for the linearity restriction, these two rules for or-patterns give rise to corresponding rules for and-patterns via De Morgan duality: Given that the pattern $\patNeg{(\patOr{p1}{p2})}$ is deterministic, we want the semantically equivalent pattern $\patAnd{\patNeg{p_1}}{\patNeg{p_2}}$ to be deterministic as well.

\begin{theorem}[Matching for wellformed patterns is deterministic, \rocq]
  \label{thm:wf-matching-det}
  For any pattern $p$, value $v$ and substitutions $\sigma_1,\sigma_2$:
  \begin{enumerate}
    \item If $\linearpos{p}$, $\deterministic{p}$ and $\matches{p}{v}{\sigma_1}$ and $\matches{p}{v}{\sigma_2}$, then $\sigma_1 \equiv \sigma_2$.
    \item If $\linearneg{p}$, $\deterministic{p}$ and $\matchesNot{p}{v}{\sigma_1}$, and $\matchesNot{p}{v}{\sigma_2}$, then $\sigma_1 \equiv \sigma_2$.
  \end{enumerate}
\end{theorem}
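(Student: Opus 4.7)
The plan is to prove both parts simultaneously by structural induction on the pattern $p$, since the positive and negative matching judgments are mutually recursive through the negation case. For each pattern form I would invert the two given matching derivations and show that the produced substitutions must be equivalent, using the linearity and determinism premises together with the inductive hypotheses. The base cases are immediate: a variable $x$ admits only \textsc{Var}, forcing both substitutions to be $[x \mapsto v]$; $\patWildcard$ has a unique positive rule giving $[]$ and no negative rule; dually for $\patBot$. For a constructor $\mathcal{C}(p_1,\ldots,p_n)$ in the positive direction, inversion of \textsc{Ctor} together with \textsc{L-Ctor}$^+$ and \textsc{D-Ctor} yields linearity and determinism of each $p_i$, so the inductive hypothesis gives component-wise equivalence that concatenation preserves. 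In the negative direction either \textsc{Ctor}$_2$ fires and both substitutions are $[]$, or \textsc{Ctor}$_1$ fires; in the latter case \textsc{L-Ctor}$^-$ forces $\freevarodd{p_i} = \emptyset$ for every $i$, so \cref{thm:pattern-algebra:linear-patterns-covering} implies the witness substitution has empty domain and is $[]$. The negation case is a direct appeal to the opposite half of the induction hypothesis, with linearity and determinism of the subpattern obtained by inversion of \textsc{L-Neg}$^{\pm}$ and \textsc{D-Neg}.

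The or- and and-cases are the heart of the argument and require splitting on which determinism rule was used. For $\patOr{p_1}{p_2}$ in the positive direction, if \textsc{D-Or}$_1$ was used then $\disjoint{p_1}{p_2}$, so no value can be matched by both $p_1$ and $p_2$; this forces both derivations to be instances of the same \textsc{Or}$_i$ rule, and the inductive hypothesis closes the case. If \textsc{D-Or}$_2$ was used then $\freevareven{p_1} = \freevareven{p_2} = \emptyset$ and \cref{thm:pattern-algebra:linear-patterns-covering} forces both substitutions to be empty. In the negative direction only \textsc{Or} applies and the inductive hypothesis on each side yields the result. The and-pattern case is dual: in the negative direction the \textsc{D-And}$_1$ side-condition $\disjoint{\patNeg{p_1}}{\patNeg{p_2}}$ means at most one of $p_1,p_2$ can fail on a given value, forcing both negative derivations to pick the same \textsc{And}$_i$, while \textsc{D-And}$_2$ forces the substitutions to be empty via linearity.

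The main obstacle I anticipate is precisely the or/and case under the \textsc{D-Or}$_1$/\textsc{D-And}$_1$ rule, where I must rule out the possibility that one derivation picks the left branch and the other the right. For \textsc{D-And}$_1$ in particular this requires translating disjointness of $\patNeg{p_1}$ and $\patNeg{p_2}$, which by \cref{def:pattern-algebra:disjoint-patterns} is stated in terms of the positive matching judgment, into the statement that $p_1$ and $p_2$ cannot both fail on the same value; this uses rule \textsc{Neg}$_1$ to lift each failure into a positive match of the negated pattern, together with soundness (\cref{thm:matching-sound}) to exclude the pathological situation where $p_i$ both matches and fails. Once this bookkeeping is done the rest of the argument is a routine case analysis that I expect the mutual induction to handle cleanly.
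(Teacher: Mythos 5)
Your proof is correct; the paper gives no textual proof of this theorem (it is discharged only in the Rocq formalization), and your mutual structural induction with a case split on the two determinism rules for or- and and-patterns --- using \cref{thm:pattern-algebra:linear-patterns-covering} to force empty substitutions in the \textsc{D-Or}$_2$, \textsc{D-And}$_2$ and \textsc{L-Ctor}$^-$ cases, and disjointness to rule out divergent branch choices under \textsc{D-Or}$_1$/\textsc{D-And}$_1$ --- is exactly the intended argument. One minor remark: the appeal to soundness (\cref{thm:matching-sound}) in the \textsc{D-And}$_1$ case is unnecessary, since lifting the two failures through \textsc{Neg}$_1$ already contradicts $\disjoint{\patNeg{p_1}}{\patNeg{p_2}}$ directly.
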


This property is also necessary to show that the evaluation of terms is deterministic, since we use the relation $\matches{p}{v}{\sigma}$ when we define the evaluation of pattern matches in the next subsection.

\section{Terms and Types}
\label{sec:terms}
In this section we present the second part of our solution: default clauses.
We present a small language, specify its operational semantics and introduce the typing rules for patterns and expressions.
The language contains only data types and pattern matching and omits higher-order functions, but these could be added in the standard way.

\subsection{The Syntax of Expressions}
\label{subsec:terms:expressions}

Expressions $e$ consist of variables $x$, the application of a constructor to arguments $\mathcal{C}(e_1,\ldots,e_n)$ and case expressions.
Every case expression consists of a scrutinee, a list of normal clauses $c$ which all have the form $p \Rightarrow e$, and a default clause $\defaultClause \Rightarrow e$.

\[
  \begin{array}{lclr}
    e & \Coloneqq & x \mid \mathcal{C}(e_1,\ldots,e_n) \mid \caseof{e}{c_1,\ldots,c_n\, ;\, \defaultClause \Rightarrow e} & \emph{Expression} \\
    c & \Coloneqq & p \Rightarrow e & \emph{Clause} \\
  \end{array}
\]

Every case expression contains exactly one default clause, but a realistic language would also allow to omit the default clause if all other clauses are exhaustive.
As we already mentioned, $\defaultClause$ does not belong to the syntax of patterns, but to the syntax of case expressions.
In particular, the meaning of a default clause is not equal to the meaning of a clause $\patWildcard \Rightarrow e_d$ which uses a wildcard pattern, but equal to a clause $\patAnd{\patNeg{p_1}}{\patAnd{\cdots}{\patNeg{p_n}}} \Rightarrow e_d$, where the $p_i$ are all the patterns from the other clauses of the case expression.
For this reason it is not possible to define the meaning of a default clause in isolation, unlike for normal clauses, and a default clause also does not immediately give rise to a valid equation for term rewriting.

\subsection{Operational Semantics}

We formalize a call-by-value semantics in which the only values $v$ are constructors applied to other values.
We use evaluation contexts $E$ and a congruence rule \textsc{E-Cong} to specify evaluation within a subterm, following standard practice \cite{Felleisen1992}.
\[
  \begin{array}{lcl}
       E & \Coloneqq & \square \mid \mathcal{C}(v_1,\ldots,v_n,E,e_1,\ldots,e_m) \mid \caseof{E}{c_1,\ldots,c_n\, ;\, \defaultClause \Rightarrow e}
  \end{array}
\]

We write $e \singlestep^r e'$ for the single-step reduction of the expression $e$ to $e'$ which does not use the congruence rule, and $e \singlestep e'$ for the single step reduction within an evaluation context.
The evaluation of case expressions is governed by the two rules \textsc{E-Match} and \textsc{E-Default}.
The rule \textsc{E-Match} allows to reduce a case expression using any of the normal clauses that match the scrutinee; in particular, it is not necessary to try the clauses in the order in which they occur.
On the right-hand side of that rule we write $e\, \sigma$ for the result of applying the substitution $\sigma$ to the expression $e$.
The rule \textsc{E-Default} allows to reduce a case expression to the expression of the default clause if none of the other clauses match against the scrutinee.

\begin{flushright}
  \fbox{Single-step evaluation: $e \singlestep e$}
\end{flushright}

\begin{prooftree}
  \AxiomC{\matches{p}{v}{\sigma}}
  \RightLabel{\textsc{E-Match}}
  \UnaryInfC{$\mathbf{case}\ v\ \mathbf{of}\ \{ c_1,\ldots,c_n, p \Rightarrow e, c_{n+2},\ldots, c_m\, ; \, \defaultClause \Rightarrow e_d \} \singlestep^r e\sigma$}
\end{prooftree}

\begin{prooftree}
  \AxiomC{$\matchesNot{p_1}{v}{\sigma_1}\quad \cdots\quad \matchesNot{p_n}{v}{\sigma_n}$}
  \RightLabel{\textsc{E-Default}}
  \UnaryInfC{$\caseof{v}{p_1 \Rightarrow e_1,\ldots,p_n \Rightarrow e_n\, ;\, \defaultClause \Rightarrow e_d} \singlestep^r e_d$}
\end{prooftree}

\begin{prooftree}
  \AxiomC{$e \singlestep^r e'$}
  \RightLabel{\textsc{E-Cong}}
  \UnaryInfC{$E[e] \singlestep E[e']$}
\end{prooftree}

Nothing in these rules guarantees that evaluation is deterministic.
For example, we can write a case expression with overlapping patterns.
If we want evaluation to be deterministic, then we have to check that expressions are wellformed, written as $\wf{e}$.
For this we have to check that all patterns which occur in case expressions are deterministic, positively linear, and that no two clauses in a case expression have overlapping patterns.

\begin{flushright}
  \fbox{Wellformed expressions: $\wf{e}$}
\end{flushright}

\begin{minipage}{0.45\textwidth}
  \begin{prooftree}
    \AxiomC{\phantom{$\wf{e}$}}
    \RightLabel{\textsc{Wf-Var}}
    \UnaryInfC{$\wf{x}$}
  \end{prooftree}
\end{minipage}
\hfill
\begin{minipage}{0.45\textwidth}
  \begin{prooftree}
    \AxiomC{$\wf{e_1}\cdots \wf{e_n}$}
    \RightLabel{\textsc{Wf-Ctor}}
    \UnaryInfC{$\wf{\mathcal{C}(e_1,\ldots,e_n)}$}
  \end{prooftree}
\end{minipage}

\begin{prooftree}
  \AxiomC{$\wf{\{e,e_d,e_1,\ldots,e_n\}}$}
  \AxiomC{$\deterministic{p_i}\quad
  \linearpos{p_i}\quad
  \text{ if } i \neq j \text{ then }\disjoint{p_i}{p_j}$}
  \RightLabel{\textsc{Wf-Case}}
  \BinaryInfC{$\wf{\caseof{e}{p_1 \Rightarrow e_1,\ldots, p_n \Rightarrow e_n\, ;\, \defaultClause \Rightarrow e_d}}$}
\end{prooftree}

The evaluation of wellformed expressions is deterministic.
Furthermore, to check whether an expression is wellformed does not depend on any type information.

\begin{theorem}[Evaluation of wellformed expressions is deterministic]
  \label{thm:wf:deterministic}
  For all expressions $e$ which are wellformed (that is $\wf{e}$ holds), evaluation is deterministic:
  If $e \singlestep e_1$ and $e \singlestep e_2$, then $e_1 = e_2$.
\end{theorem}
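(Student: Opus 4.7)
The plan is to follow the standard template for determinism proofs in context-based operational semantics: first prove a unique decomposition lemma, then show that root reduction is deterministic on wellformed expressions, and finally combine the two.

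\textbf{Step 1: unique decomposition.} I would first show by induction on the structure of an expression $e$ that, whenever $e \singlestep e'$ via \textsc{E-Cong} applied to some root reduction $r \singlestep^r r'$ with $e = E[r]$, the decomposition $(E,r)$ is unique. The evaluation contexts force a left-to-right, call-by-value policy: inside $\mathcal{C}(v_1,\ldots,v_n,E,e_1,\ldots,e_m)$ the hole sits at the first non-value argument, and inside $\caseof{E}{\cdots}$ the hole sits in the scrutinee. In each case the definition of $E$ admits only one way to place the hole given the syntactic shape of $e$, and a redex $r$ (which must be a case expression whose scrutinee is a value) is never itself a value, so the base case $E = \square$ is disjoint from the inductive cases. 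This reduces the overall claim to showing that $\singlestep^r$ is deterministic on wellformed terms of the form $\caseof{v}{p_1 \To e_1,\ldots,p_n \To e_n\,;\,\defaultClause \To e_d}$.

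\textbf{Step 2: determinism of root reduction.} Given such a case expression, there are three combinations of reductions to compare. (i) \textsc{E-Match} versus \textsc{E-Match} with two different clauses $p_i$ and $p_j$ ($i \neq j$): the \textsc{Wf-Case} premise gives $\disjoint{p_i}{p_j}$, contradicting the assumption that both patterns match the same value $v$. (ii) \textsc{E-Match} versus \textsc{E-Match} with the same clause $p_i$ but possibly different substitutions $\sigma_1,\sigma_2$: wellformedness gives $\linearpos{p_i}$ and $\deterministic{p_i}$, so Theorem \ref{thm:wf-matching-det} yields $\sigma_1 \equiv \sigma_2$; combined with Theorem \ref{thm:pattern-algebra:linear-patterns-substitutions} (so that each $\sigma_k$ is a proper substitution with domain $\freevareven{p_i}$ by Theorem \ref{thm:pattern-algebra:linear-patterns-covering}), the two lists define the same finite map, so $e_i\sigma_1 = e_i\sigma_2$. (iii) \textsc{E-Match} versus \textsc{E-Default}: \textsc{E-Default} requires $\matchesNot{p_k}{v}{\sigma_k}$ for \emph{every} clause pattern $p_k$, while \textsc{E-Match} on clause $p_i$ requires $\matches{p_i}{v}{\sigma}$; these cannot coexist by the soundness theorem (Theorem \ref{thm:matching-sound}). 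Finally, two \textsc{E-Default} reductions yield the same $e_d$, which is trivial.

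\textbf{Step 3: combine.} Suppose $e \singlestep e_1$ and $e \singlestep e_2$. Both reductions factor as $e = E_k[r_k]$ with $r_k \singlestep^r r_k'$ and $e_k = E_k[r_k']$ for $k = 1,2$. Unique decomposition (Step 1) forces $E_1 = E_2$ and $r_1 = r_2$; determinism of $\singlestep^r$ on wellformed redexes (Step 2) forces $r_1' = r_2'$; hence $e_1 = e_2$. A small auxiliary observation is that any redex reached by \textsc{E-Cong} inside a wellformed $e$ is itself wellformed, because \textsc{Wf-Ctor} and \textsc{Wf-Case} propagate wellformedness to subexpressions.

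The main obstacle I anticipate is the unique decomposition lemma in Step 1, which, although standard, requires a careful case analysis on the shape of $e$ and on the two grammatical productions for non-empty $E$; one must rule out, for instance, the possibility that an expression $\mathcal{C}(v_1,\ldots,v_n,e,\ldots)$ decomposes both by taking $E = \square$ at the outermost constructor (impossible, since a redex is never a constructor application of values) and by descending into the $e$ argument. Once that boilerplate is in place, Steps 2 and 3 follow immediately from the theorems already established in \cref{sec:pattern-algebra}.
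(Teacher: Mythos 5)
Your proposal is correct and takes essentially the same route as the paper: the paper's own argument (given explicitly only for the multi-column analogue, \cref{lem:wf:determinism:multicol}) is exactly your Step~2 case analysis, using pairwise disjointness from \textsc{Wf-Case} for two \textsc{E-Match} firings, soundness (\cref{thm:matching-sound}) to exclude \textsc{E-Match} against \textsc{E-Default}, and triviality for two \textsc{E-Default} firings. You additionally spell out the unique-decomposition lemma for evaluation contexts and the same-clause/different-substitution case via \cref{thm:wf-matching-det,thm:pattern-algebra:linear-patterns-substitutions}, both of which the paper leaves implicit; these details are needed and your treatment of them is sound.
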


Multi-step evaluation $\multistep$ is defined as the reflexive-transitive closure of single-step evaluation and allows us to define when two closed expressions are semantically equivalent:
\begin{definition}[Semantic Equivalence of Expressions]
  \label{def:semantic-equivalence-expressions}
  Two closed expressions $e$ and $e'$ are semantically equivalent, written $\semantic{e} \equiv \semantic{e'}$, if either both of them evaluate to the same value $v$, i.e. $e \multistep v$ and $e' \multistep v$, or if both expressions diverge.
\end{definition}

\begin{theorem}[Permutation of clauses does not change semantics]
  The meaning of a wellformed pattern match does not change when we permute clauses.
  If $\overline{c'}$ is a permutation of the list of clauses $\overline{c}$, then
  $\llbracket \mathbf{case}\ v\ \mathbf{of}\ \{ \overline{c}\ ; \mathbf{default} \Rightarrow e_d \} \rrbracket  \equiv \llbracket \mathbf{case}\ v\ \mathbf{of}\ \{ \overline{c'}\ ; \mathbf{default} \Rightarrow e_d \} \rrbracket$.
\end{theorem}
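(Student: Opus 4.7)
The plan is to show that for a wellformed case expression with a value scrutinee, the one-step reduction depends only on the \emph{multiset} of clauses and not on their order. Since permutation preserves the multiset of clauses (and the wellformedness hypotheses, which are themselves unordered conditions), both sides will reduce to semantically equivalent expressions in a single step, and \cref{def:semantic-equivalence-expressions} then yields the equivalence.

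First I would establish a case-analysis lemma: for any wellformed case expression $\caseof{v}{p_1\Rightarrow e_1,\ldots,p_n\Rightarrow e_n\,;\,\defaultClause\Rightarrow e_d}$ with value scrutinee, exactly one of two situations occurs. Either (a) there is a \emph{unique} index $i$ and a substitution $\sigma$ with $\matches{p_i}{v}{\sigma}$, so \textsc{E-Match} fires and reduces the expression to $e_i\,\sigma$; or (b) for every $i$ the judgment $\matchesNot{p_i}{v}{\sigma_i}$ holds for some $\sigma_i$, so \textsc{E-Default} fires and reduces the expression to $e_d$. Existence of the dichotomy follows from \cref{thm:matching-complete} applied to each $p_i$ against $v$, while uniqueness of the matching clause in (a) follows from the pairwise disjointness side-condition of \textsc{Wf-Case} together with \cref{thm:matching-sound} (which rules out simultaneous $\matches$ and $\matchesNot$ on a given pattern/value pair).

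With this lemma in hand the argument is immediate. A permutation $\overline{c'}$ of $\overline{c}$ contains exactly the same clauses, hence the permuted case expression falls into the same case (a) or (b) as the original. In case (b) both expressions reduce literally to $e_d$. In case (a) both reduce to $e_i\,\sigma$ and $e_i\,\sigma'$ respectively for possibly distinct witnesses of $\matches{p_i}{v}{-}$, but by \cref{thm:wf-matching-det} the witnesses satisfy $\semantic{\sigma}\equiv\semantic{\sigma'}$; linearity of $p_i$ ensures via \cref{thm:pattern-algebra:linear-patterns-substitutions} that both are proper substitutions, so their action on $e_i$ agrees. Thus after one step the two sides are semantically equivalent expressions, and composing with determinism of evaluation on wellformed terms (\cref{thm:wf:deterministic}) gives the required equivalence of their full evaluations.

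The main obstacle is the bookkeeping around substitutions: because derivations of $\matches{p}{v}{\sigma}$ are not syntactically unique, one needs a small substitution lemma saying that if $\sigma$ and $\sigma'$ are proper and $\semantic{\sigma}\equiv\semantic{\sigma'}$, then $e\,\sigma$ and $e\,\sigma'$ are semantically indistinguishable. This is a routine but non-trivial step; everything else is a direct transcription of the reduction rules being set-based rather than list-based.
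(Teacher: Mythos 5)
The paper states this theorem without an accompanying proof (it is not among the proofs in \cref{sec:appendix-proofs}), so there is no official argument to compare against; judged on its own, your proposal is correct and is essentially the argument the rules are designed to admit: \textsc{E-Match} and \textsc{E-Default} are stated in terms of \emph{some clause} and \emph{all clauses} respectively, so the one-step reduction relation of a case expression depends only on the multiset of clauses. Your dichotomy via \cref{thm:matching-complete}, \cref{thm:matching-sound} and the disjointness side-condition of \textsc{Wf-Case} is the right skeleton. One simplification is available, though: the final ``substitution lemma'' you flag as the main obstacle is not actually needed. Since the matching clause $p_i \Rightarrow e_i$ occurs verbatim in both $\overline{c}$ and $\overline{c'}$, any witness $\sigma$ with $\matches{p_i}{v}{\sigma}$ licenses an \textsc{E-Match} step from \emph{both} case expressions to the literally identical term $e_i\,\sigma$; hence the sets of one-step successors of the two expressions coincide exactly, and \cref{thm:wf:deterministic} collapses each set to a singleton, so both sides step to the same term rather than merely to semantically equivalent ones. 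This removes the need to reason about the action of equivalent-but-distinct proper substitutions on $e_i$ (and the appeal to \cref{thm:wf-matching-det} and \cref{thm:pattern-algebra:linear-patterns-substitutions} becomes optional), leaving only the routine observation that a wellformed closed expression and its unique one-step reduct are semantically equivalent in the sense of \cref{def:semantic-equivalence-expressions}.
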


This theorem makes formal what we call pure no-overlap semantics for pattern matching.
The next theorem guarantees that we can use algebraic principles when reasoning about patterns in clauses.

\begin{theorem}[Substitution of equivalent patterns preserves semantics]
  If $\llbracket p_1 \rrbracket  \equiv \llbracket p_2 \rrbracket$, then
  \begin{equation*}
    \llbracket \mathbf{case}\ v\ \mathbf{of}\ \{p_1 \Rightarrow e, \overline{c}\ ; \mathbf{default} \Rightarrow e_d \} \rrbracket
    \equiv
    \llbracket \mathbf{case}\ v\ \mathbf{of}\ \{p_2 \Rightarrow e, \overline{c}\ ; \mathbf{default} \Rightarrow e_d \} \rrbracket.
  \end{equation*}
\end{theorem}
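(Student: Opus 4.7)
The plan is to trace the operational behavior of the two case expressions in parallel. Reduction of the scrutinee is independent of $p_1$ and $p_2$, so by repeated applications of \textsc{E-Cong} either both expressions diverge while evaluating the scrutinee (and are trivially equivalent under \cref{def:semantic-equivalence-expressions}) or both reduce to case expressions whose scrutinee is a common value $v$. It therefore suffices to treat the situation where the scrutinee has already been reduced to such a $v$.

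Given $v$, I case-analyze on whether $p_1$ matches $v$. By \cref{thm:matching-complete} exactly one of $\matches{p_1}{v}{\sigma_1}$ or $\matchesNot{p_1}{v}{\sigma_1'}$ holds, with uniqueness ensured by \cref{thm:matching-sound}. In the matching sub-case, unfolding $\semantic{p_1}\equiv\semantic{p_2}$ via \cref{def:semantic-equivalence-patterns} supplies $\sigma_2$ with $\matches{p_2}{v}{\sigma_2}$ and $\semantic{\sigma_1}\equiv\semantic{\sigma_2}$; both case expressions then step under \textsc{E-Match} to $e\,\sigma_1$ and $e\,\sigma_2$ respectively. In the non-matching sub-case, \cref{def:semantic-equivalence-patterns} symmetrically yields $\matchesNot{p_2}{v}{\sigma_2'}$, so the first clause is inert on both sides and any further reduction is dictated solely by the shared clauses $\overline{c}$ and the shared $\defaultClause \Rightarrow e_d$, producing identical reducts whose equivalence is immediate.

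What remains is to discharge the matching sub-case, which I reduce to an auxiliary \emph{substitution lemma}: if $\semantic{\sigma_1}\equiv\semantic{\sigma_2}$, then $\semantic{e\,\sigma_1}\equiv\semantic{e\,\sigma_2}$ for every closed expression $e$. I expect to prove this by straightforward structural induction on $e$. Only the variable case requires genuine inspection: by definition of substitution equivalence, the two substitutions agree on the value assigned to any given variable, so lookup returns the same value; the constructor case and the case-expression case follow from the induction hypothesis, using in the latter that matching against $v$ depends only on the values bound by $\sigma$, not on spurious duplicate entries.

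The main obstacle is more conceptual than technical: the statement is well-posed only when \cref{def:semantic-equivalence-expressions} applies, i.e.\ when both case expressions are wellformed. That side condition transfers from one side to the other once we observe that semantic equivalence of patterns implies matching against exactly the same values, so disjointness with the other clauses is preserved by replacing $p_1$ with $p_2$; the linearity and determinism of $p_2$ are assumed in parallel with those of $p_1$. Given wellformedness on both sides, \cref{thm:wf:deterministic} guarantees that the above operational reasoning traces out a unique outcome, so the semantic equivalence of expressions is well-defined and the two outcomes coincide.
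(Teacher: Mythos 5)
The paper states this theorem without giving a proof (it is not among the Rocq-verified results of \cref{sec:pattern-algebra}, and \cref{sec:appendix-proofs} does not cover it), so there is no official argument to compare against; judged on its own, your proposal is the natural argument and is essentially sound. Your case split on $\matches{p_1}{v}{\sigma_1}$ versus $\matchesNot{p_1}{v}{\sigma_1'}$ via \cref{thm:matching-complete,thm:matching-sound}, the transfer to $p_2$ through \cref{def:semantic-equivalence-patterns}, and the observation that in the non-matching case both sides are driven entirely by the shared clauses and the default clause, are all exactly right. Two small remarks: the opening paragraph about reducing the scrutinee is vacuous here, since the theorem is stated with the scrutinee already a value $v$; and you are correct that wellformedness must be assumed on both sides --- the paper clearly intends this, as the companion permutation theorem is stated for wellformed matches, and \cref{thm:wf:deterministic} together with pairwise disjointness is what forces the first step to be the \textsc{E-Match} you describe.

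The one step that needs repair is your substitution lemma. As stated ($\semantic{\sigma_1} \equiv \semantic{\sigma_2}$ implies $\semantic{e\,\sigma_1} \equiv \semantic{e\,\sigma_2}$, proved by structural induction on $e$), the induction does not go through directly: \cref{def:semantic-equivalence-expressions} is a whole-program, closed-expression notion, so concluding equivalence of a case expression from equivalence of its (open) subexpressions presupposes that $\equiv$ on expressions is a congruence, which is itself a nontrivial contextual-equivalence claim that you have not established. The clean fix is to aim for syntactic equality rather than semantic equivalence: wellformedness gives $\linearpos{p_1}$ and $\linearpos{p_2}$, so by \cref{thm:pattern-algebra:linear-patterns-substitutions} both $\sigma_1$ and $\sigma_2$ are proper, and two proper substitutions containing the same mappings denote the same finite function; a routine induction on $e$ then yields $e\,\sigma_1 = e\,\sigma_2$ as terms, after which the matching sub-case is immediate. (The same properness argument is what licenses your phrase ``lookup returns the same value'' in the variable case, and \cref{thm:wf-matching-det} is what makes the choice of $\sigma_1$ canonical in the first place.) With that substitution, the proof is complete.
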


\subsection{Typing Rules}
\label{subsec:terms:typing}

In this section, we are going to introduce types, typing contexts and (pattern) typing rules.
In order to illustrate the typing rules we include three simple data types: the boolean type $\mathbb{B}$ with constructors $\tmTrue$ and $\tmFalse$, the type of pairs $\tau \otimes \tau$ with the constructor $\mathtt{Pair}$, and the type of binary sums $\tau \oplus \tau$ with constructors $\iota_1$ and $\iota_2$.

\[
  \begin{array}{lclr}
       \tau & \Coloneqq & \mathbb{B} \mid \tau \otimes \tau \mid \tau \oplus \tau & \emph{Types} \\
       \Gamma, \Delta & \Coloneqq & \cdot \mid \Gamma, x : \tau & \emph{Typing contexts}
  \end{array}
\]

We use the judgment \patTyping{\Gamma}{\Delta}{p}{\tau} to express that the pattern $p$ matches against a value of type $\tau$ and binds variables whose types are recorded in the context $\Gamma$ on a successful match, and in context $\Delta$ in the case of a non-successful match.
In other words, the typing context $\Gamma$ is used for typing the pattern variables that occur under an even number of negations, and the typing context $\Delta$ is used for typing the variables under an odd number of negations.
We write $\Gamma_1 \mdoubleplus \Gamma_2$ for the naive concatenation of two typing contexts.

  \begin{flushright}
    \fbox{Pattern typing: \patTyping{\Gamma}{\Delta}{p}{\tau}}
  \end{flushright}
  \vspace{0.2cm}

  \begin{minipage}{0.3\textwidth}
  \begin{prooftree}
    \AxiomC{}
    \RightLabel{\textsc{P-Var}}
    \UnaryInfC{$\patTyping{x : \tau}{\cdot}{x}{\tau}$}
  \end{prooftree}
  \end{minipage}
  \begin{minipage}{0.3\textwidth}
    \begin{prooftree}
      \AxiomC{}
      \RightLabel{\textsc{P-Wildcard}}
      \UnaryInfC{$\patTyping{\cdot}{\cdot}{\patWildcard}{\tau}$}
    \end{prooftree}
  \end{minipage}
  \begin{minipage}{0.3\textwidth}
    \begin{prooftree}
      \AxiomC{}
      \RightLabel{\textsc{P-Absurd}}
      \UnaryInfC{$\patTyping{\cdot}{\cdot}{\patBot}{\tau}$}
    \end{prooftree}
  \end{minipage}
  \vspace{0.2cm}

  \begin{minipage}{0.45\textwidth}
  \begin{prooftree}
    \AxiomC{$\patTyping{\Gamma_1}{\Delta}{p_1}{\tau} \quad \patTyping{\Gamma_2}{\Delta}{p_2}{\tau}$}
    \RightLabel{\textsc{P-And}}
    \UnaryInfC{$\patTyping{\Gamma_1 \mdoubleplus \Gamma_2}{\Delta}{\patAnd{p_1}{p_2}}{\tau}$}
  \end{prooftree}
  \end{minipage}
  \hfill
  \begin{minipage}{0.45\textwidth}
  \begin{prooftree}
    \AxiomC{$\patTyping{\Gamma}{\Delta_1}{p_1}{\tau} \quad \patTyping{\Gamma}{\Delta_2}{p_2}{\tau}$}
    \RightLabel{\textsc{P-Or}}
    \UnaryInfC{$\patTyping{\Gamma}{\Delta_1 \mdoubleplus \Delta_2}{\patOr{p_1}{p_2}}{\tau}$}
  \end{prooftree}
  \end{minipage}
  \vspace{0.2cm}

  \begin{minipage}{0.3\textwidth}
    \begin{prooftree}
      \AxiomC{$\patTyping{\Gamma}{\Delta}{p}{\tau}$}
      \RightLabel{\textsc{P-Neg}}
      \UnaryInfC{$\patTyping{\Delta}{\Gamma}{\patNeg{p}}{\tau}$}
    \end{prooftree}
  \end{minipage}
  \begin{minipage}{0.3\textwidth}
    \begin{prooftree}
      \AxiomC{}
      \RightLabel{\textsc{P-True}}
      \UnaryInfC{$\patTyping{\cdot}{\cdot}{\tmTrue}{\tyBool}$}
    \end{prooftree}
  \end{minipage}
  \begin{minipage}{0.3\textwidth}
    \begin{prooftree}
      \AxiomC{}
      \RightLabel{\textsc{P-False}}
      \UnaryInfC{$\patTyping{\cdot}{\cdot}{\tmFalse}{\tyBool}$}
    \end{prooftree}
  \end{minipage}
  \vspace{0.2cm}

  \begin{minipage}{0.4\textwidth}
    \begin{prooftree}
      \AxiomC{$\patTyping{\Gamma}{\Delta}{p}{\tau_1}$}
      \RightLabel{\textsc{P-Inl}}
      \UnaryInfC{$\patTyping{\Gamma}{\Delta}{\tmInl{p}}{\tySum{\tau_1}{\tau_2}}$}
    \end{prooftree}
  \end{minipage}
  \begin{minipage}{0.4\textwidth}
    \begin{prooftree}
      \AxiomC{$\patTyping{\Gamma}{\Delta}{p}{\tau_2}$}
      \RightLabel{\textsc{P-Inr}}
      \UnaryInfC{$\patTyping{\Gamma}{\Delta}{\tmInr{p}}{\tySum{\tau_1}{\tau_2}}$}
    \end{prooftree}
  \end{minipage}

  \begin{prooftree}
    \AxiomC{$\patTyping{\Gamma_1}{\Delta_1}{p_1}{\tau_1}$}
    \AxiomC{$\patTyping{\Gamma_2}{\Delta_2}{p_2}{\tau_2}$}
    \RightLabel{\textsc{P-Pair}}
    \BinaryInfC{$\patTyping{\Gamma_1 \mdoubleplus \Gamma_2}{\Delta_1 \mdoubleplus \Delta_2}{\tmPair{p_1}{p_2}}{\tyPair{\tau_1}{\tau_2}}$}
  \end{prooftree}

These pattern typing rules do not make sense on their own, but only in conjunction with the rules for linearity.
For example, the typing rules allow to derive the typing judgment $x : \tau_1, x: \tau_2 ; \cdot \Rightarrow \tmPair{x}{x} : \tau_1 \otimes \tau_2$, but the linearity rules disallow this pattern\footnote{\
  An implementation might choose to check linearity and typing of patterns simultaneously, but we prefer to present them separately to illustrate that we can also use algebraic patterns in the untyped setting.
}.
Lastly, we present the rules for typing expressions:

\begin{flushright}
  \fbox{Expression typing: $\Gamma \vdash e : \tau$}
\end{flushright}

\begin{prooftree}
  \AxiomC{$\Gamma \vdash e : \sigma$}
  \AxiomC{$\Gamma \vdash e_d : \tau$}
  \AxiomC{$\forall i :\ \patTyping{\Gamma_i}{\Delta}{p_i}{\sigma} \text{ and } \Gamma \mdoubleplus \Gamma_i \vdash e_i : \tau$}
  \RightLabel{\textsc{T-Match}}
  \TrinaryInfC{$\Gamma \vdash \caseof{e}{p_1 \Rightarrow e_1,\ldots, p_n \Rightarrow e_n\, ;\, \defaultClause \Rightarrow e_d} : \tau$}
\end{prooftree}

\begin{minipage}{0.3\textwidth}
  \begin{prooftree}
    \AxiomC{$\Gamma(x) = \tau$}
    \RightLabel{\textsc{T-Var}}
    \UnaryInfC{$\Gamma \vdash x : \tau$}
  \end{prooftree}
\end{minipage}
\begin{minipage}{0.3\textwidth}
  \begin{prooftree}
    \AxiomC{}
    \RightLabel{\textsc{T-True}}
    \UnaryInfC{$\Gamma \vdash \tmTrue : \tyBool$}
  \end{prooftree}
\end{minipage}
\hfill
\begin{minipage}{0.3\textwidth}
  \begin{prooftree}
    \AxiomC{}
    \RightLabel{\textsc{T-False}}
    \UnaryInfC{$\Gamma \vdash \tmFalse : \tyBool$}
  \end{prooftree}
\end{minipage}
\vspace{0.1cm}

\noindent
\begin{minipage}{0.41\textwidth}
  \begin{prooftree}
    \AxiomC{$\Gamma \vdash e_1 : \sigma \quad \Gamma \vdash e_2 : \tau$}
    \RightLabel{\textsc{T-Pair}}
    \UnaryInfC{$\Gamma \vdash \tmPair{e_1}{e_2} : \tyPair{\sigma}{\tau}$}
  \end{prooftree}
\end{minipage}
\begin{minipage}{0.272\textwidth}
  \begin{prooftree}
    \AxiomC{$\Gamma \vdash e : \sigma$}
    \RightLabel{\textsc{T-Inl}}
    \UnaryInfC{$\Gamma \vdash \tmInl{e} : \tySum{\sigma}{\tau}$}
  \end{prooftree}
\end{minipage}
\begin{minipage}{0.274\textwidth}
  \begin{prooftree}
    \AxiomC{$\Gamma \vdash e : \tau$}
    \RightLabel{\textsc{T-Inr}}
    \UnaryInfC{$\Gamma \vdash \tmInr{e} : \tySum{\sigma}{\tau}$}
  \end{prooftree}
\end{minipage}
\vspace{0.2cm}

In the rule \textsc{T-Match} we use the pattern typing rule for the pattern in each clause, and add the resulting typing context $\Gamma_i$ which contains the types of all pattern variables under an even number of negations to the outer typing context $\Gamma$ in the respective clause.

\section{Normalizing Algebraic Patterns}
\label{sec:normalization}
So far, we have considered algebraic patterns in their unrestricted form which is written by the programmer and checked by the compiler.
In this section, we consider normal forms for patterns which simplify the compilation algorithm in \cref{sec:compilation}.
We end up with a  normal form that is a variant of disjunctive normal forms, but which takes the use of constructors and variables into account.
We describe a normalization procedure which proceeds in three steps:
The first step, described in \cref{subsec:normalization:nnf}, pushes negation-patterns inward so that they only occur applied to constructors without subpatterns, or to variables.
The second step, described in \cref{subsec:normalization:dnf}, pushes or-patterns outwards, so that we end up with a disjunctive normal form, that is a disjunction of elementary conjuncts.
In the third step, we further simplify these elementary conjuncts; this is described in \cref{subsec:normalization:conjuncts}.
The default clause is not affected by the normalization procedure described in this section; the compilation of case expressions to decision trees described in the next section can handle default clauses natively.

\subsection{Computing the Negation Normal Form}
\label{subsec:normalization:nnf}

Consider the pattern $\patNeg{\tmPair{\tmTrue}{\tmFalse}}$ which matches anything except a tuple consisting of the boolean values $\tmTrue$ and $\tmFalse$.
Another way to express the behavior of this pattern is to say that it matches anything which isn't a tuple, or it matches a tuple if either the left element isn't the value $\tmTrue$ or the right element isn't the value $\tmFalse$.
We can therefore also write the pattern as $\patOr{\patNeg{\tmPair{\patWildcard}{\patWildcard}}}{\patOr{\tmPair{\patNeg{\tmTrue}}{\patWildcard}}{\tmPair{\patWildcard}{\patNeg{\tmFalse}}}}$.
We can generalize this observation and use it to push all negation-patterns inside, by repeatedly rewriting the pattern using the last equation of \cref{thm:algebraic-equivalences:four}.
After we have normalized patterns in this way, negation-patterns only occur applied to variables or to constructors where all subpatterns of the constructor are wildcard patterns.
We abbreviate these negated constructor patterns $\patNeg{\mathcal{C}^n(\patWildcard,\ldots,\patWildcard})$ as $\patNeg{\mathcal{C}^n}$.

\begin{definition}[Negation Normal Forms]
    The syntax of negation normal forms is:
    \[
    \begin{array}{lcl}
      N & \Coloneqq & x \mid \patNeg{x} \mid \patNeg{\mathcal{C}^n} \mid \mathcal{C}^n(N_1,\ldots,N_n) \mid \patAnd{N}{N} \mid \patOr{N}{N} \mid \patWildcard \mid \patBot
    \end{array}
    \]
    \begin{gather*}
      \begin{array}{rlrl}
      \nnfneg{\patAnd{p_1}{p_2}} &\coloneqq \patOr{\nnfneg{p_1}}{\nnfneg{p_2}} & \nnfpos{\patAnd{p_1}{p_2}} &\coloneqq \patAnd{\nnfpos{p_1}}{\nnfpos{p_2}} \\
      \nnfneg{\patOr{p_1}{p_2}} &\coloneqq \patAnd{\nnfneg{p_1}}{\nnfneg{p_2}} & \nnfpos{\patOr{p_1}{p_2}} &\coloneqq \patOr{\nnfpos{p_1}}{\nnfpos{p_2}} \\
      \nnfneg{\patNeg{p}} &\coloneqq \nnfpos{p} & \nnfpos{\patNeg{p}} &\coloneqq \nnfneg{p} \\
      \nnfneg{\patBot} &\coloneqq \patWildcard & \nnfpos{\patBot} &\coloneqq \patBot \\
      \nnfneg{\patWildcard} &\coloneqq \patBot & \nnfpos{\patWildcard} &\coloneqq \patWildcard \\
      \nnfneg{x} &\coloneqq \patNeg{x} & \nnfpos{x} &\coloneqq x \\
      \end{array} \\
      \begin{array}{rl}
      \nnfneg{\mathcal{C}^n(p_1,\ldots,p_n)} &\coloneqq \patOr{\patNeg{\mathcal{C}^n}}{\patOr{\mathcal{C}^n(\nnfneg{p_1},\ldots,\patWildcard)}{\patOr{\ldots}{\mathcal{C}^n(\patWildcard,\ldots,\nnfneg{p_n})}}} \\
      \nnfpos{\mathcal{C}^n(p_1,\ldots,p_n)} &\coloneqq \mathcal{C}^n(\nnfpos{p_1},\ldots,\nnfpos{p_n}) \\
      \end{array}
  \end{gather*}
\end{definition}

The normalization procedure $\nnf{-}$ is split into the two functions $\nnfneg{-}$ and $\nnfpos{-}$ which track if we are currently normalizing a negated or a non-negated pattern.
At the top-level, the function $\nnf{-}$ therefore just invokes $\nnfpos{-}$.

\begin{example}
    We compute the negation normal form of our running example.
    \begin{equation*}
        \nnfpos{\patAnd{x}{(\patOr{\tmSa}{\tmSu})}} = \patAnd{x}{(\patOr{\tmSa}{\tmSu})}
        \qquad
        \nnfpos{\patAnd{x}{\patNeg{(\patOr{\tmSa}{\tmSu})}}} = \patAnd{x}{\patAnd{(\patNeg{\tmSa})}{(\patNeg{\tmSu})}}
    \end{equation*}
\end{example}

We treat the negation normal forms $N$ as a subset of patterns $p$, and hence all functions and properties of patterns can also be applied to normal forms.
If we translate a pattern into negation normal form then the pattern is still wellformed, can be typed using the same context and type, and has the same semantics as before.
This is witnessed by the following lemma.

\begin{lemma}[Negation Normalization Preserves Linearity, Typing and Semantics]
    For all patterns $p$, contexts $\Gamma, \Delta$ and types $\tau$, we have:
    \begin{enumerate}
        \item Linearity is preserved: If $\linearpos{p}$, then $\linearpos{\nnf{p}}$.
        \item Typing is preserved: If $\patTyping{\Gamma}{\Delta}{p}{\tau}$, then $\patTyping{\Gamma}{\Delta}{\nnf{p}}{\tau}$.
        \item The patterns are semantically equivalent: If $\linearposneg{p}$, then $\semantic{p} \equiv \semantic{\nnf{p}}$.
    \end{enumerate}
    \label{lem:nnf:correctness}
\end{lemma}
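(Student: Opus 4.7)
The plan is to prove all three clauses simultaneously by structural induction on the pattern $p$. The obvious complication is that $\nnf{-}$ is defined through the mutually recursive auxiliaries $\nnfpos{-}$ and $\nnfneg{-}$, so the induction hypothesis has to be strengthened to cover both auxiliaries in each statement. For (1), I would prove: if $\linearpos{p}$ then $\linearpos{\nnfpos{p}}$, and if $\linearneg{p}$ then $\linearpos{\nnfneg{p}}$. For (2), analogously: if $\patTyping{\Gamma}{\Delta}{p}{\tau}$ then $\patTyping{\Gamma}{\Delta}{\nnfpos{p}}{\tau}$ and $\patTyping{\Delta}{\Gamma}{\nnfneg{p}}{\tau}$. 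For (3), if $\linearposneg{p}$ then $\semantic{\nnfpos{p}} \equiv \semantic{p}$ and $\semantic{\nnfneg{p}} \equiv \semantic{\patNeg{p}}$. In each clause the $\nnfneg$ half intuitively says ``$\nnfneg{p}$ behaves like the negation-normal form of $\patNeg{p}$''.

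As a useful auxiliary lemma, I would first establish the free-variable identities $\freevareven{\nnfpos{p}} = \freevareven{p}$, $\freevarodd{\nnfpos{p}} = \freevarodd{p}$, $\freevareven{\nnfneg{p}} = \freevarodd{p}$ and $\freevarodd{\nnfneg{p}} = \freevareven{p}$ by a straightforward induction on $p$. With these in hand, clause (1) reduces in every binary case to checking a side-condition that already appears in the corresponding linearity rule: e.g.\ in the and-case $\nnfneg{\patAnd{p_1}{p_2}} = \patOr{\nnfneg{p_1}}{\nnfneg{p_2}}$, the side-condition of $\textsc{L-Or}^+$ demands $\freevareven{\nnfneg{p_i}} = \freevareven{\nnfneg{p_2}}$, which via the identities above becomes $\freevarodd{p_1} = \freevarodd{p_2}$ — exactly the side-condition supplied by $\textsc{L-And}^-$. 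The other cases are symmetric. Clause (2) is similar but even simpler: the swap of $\Gamma$ and $\Delta$ in $\textsc{P-Neg}$ matches the swap between the $\nnfpos$/$\nnfneg$ clauses, and the and/or swap fits the distribution of context concatenation in $\textsc{P-And}$/$\textsc{P-Or}$.

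Clause (3) is proved by using the congruence theorem together with the algebraic equivalences of \cref{thm:algebraic-equivalences:one,thm:algebraic-equivalences:two,thm:algebraic-equivalences:three,thm:algebraic-equivalences:four}. The and/or cases reduce to the De Morgan laws, the $\patNeg{-}$ case to double negation, the $\patWildcard$/$\patBot$ cases to duality, and the variable cases are trivial. Crucially, since the rewrites in \cref{thm:algebraic-equivalences:three,thm:algebraic-equivalences:four} require both sides to be linear, I need the $\linearposneg{p}$ hypothesis; the auxiliary free-variable identities ensure that the corresponding intermediate patterns produced by $\nnfpos$ and $\nnfneg$ remain linear, so the equivalences can be applied at each step.

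The main obstacle is the constructor case of $\nnfneg{\mathcal{C}^n(p_1,\ldots,p_n)}$, where the definition produces the large disjunction $\patOr{\patNeg{\mathcal{C}^n}}{\patOr{\mathcal{C}^n(\nnfneg{p_1},\patWildcard,\ldots,\patWildcard)}{\patOr{\ldots}{\mathcal{C}^n(\patWildcard,\ldots,\nnfneg{p_n})}}}$. Showing this is semantically equal to $\patNeg{\mathcal{C}^n(p_1,\ldots,p_n)}$ is precisely the content of the last equation of \cref{thm:algebraic-equivalences:four}, but that theorem requires linearity of both sides; here linearity relies on the $\textsc{L-Ctor}^-$ premise that $\freevarodd{p_i} = \emptyset$, which ensures that each disjunct binds the same (empty) set of positive free variables, and hence that $\textsc{L-Or}^+$ applies. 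Verifying this bookkeeping carefully, and making sure the induction hypothesis supplies linearity of each $\nnfneg{p_i}$ inside the disjuncts, is the one place where the proof is more than routine.
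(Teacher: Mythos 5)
Your proposal is correct, and the per-case reasoning (the De Morgan cases, the duality cases, and the negated-constructor case requiring $\freevarodd{p_i} = \emptyset$ so that \textsc{L-Or}$^+$ applies to the big disjunction) coincides with what the paper checks. The organization, however, is genuinely different. The paper (\cref{sec:appendix-proofs}) factors the argument into standalone lemmas about individual algebraic rewrite steps --- \cref{lem:appendix-proofs:de-morgan} and \cref{lem:appendix-proofs:negated-constructor} for linearity, and their typing analogues --- and then concludes by observing that $\nnf{-}$ amounts to iterating these rewrites until negations disappear or sit on bare constructors; the semantics clause is discharged wholesale by citing \cref{thm:algebraic-equivalences:one} and \cref{thm:algebraic-equivalences:four}. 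You instead prove a single statement about the pair $(\nnfpos{-},\nnfneg{-})$ by mutual structural induction, with the polarity-indexed strengthening $\linearneg{p} \Rightarrow \linearpos{\nnfneg{p}}$, $\patTyping{\Gamma}{\Delta}{p}{\tau} \Rightarrow \patTyping{\Delta}{\Gamma}{\nnfneg{p}}{\tau}$ and $\semantic{\nnfneg{p}} \equiv \semantic{\patNeg{p}}$, supported by a correspondingly strengthened free-variable lemma ($\freevareven{\nnfneg{p}} = \freevarodd{p}$, etc.) that extends the paper's \cref{lem:appendix-proofs:free-vars}. Your version buys a fully mechanical induction that tracks the recursive structure of the normalization function directly and avoids the (slightly informal) step of identifying the function's output with the result of iterated rewriting; the paper's version buys modularity, reusing each rewrite-preservation lemma as a black box and keeping the three preservation properties independent of one another. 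Both are sound; yours is closer to what a proof-assistant formalization would look like.
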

\begin{proof}
  See \cref{sec:appendix-proofs}.
\end{proof}

\subsection{Computing the Disjunctive Normal Form}
\label{subsec:normalization:dnf}
The next step is to bring the negation normal form into a disjunctive normal form.
Patterns in disjunctive normal form consist of an outer disjunction of so-called ``elementary conjuncts'' which do not contain or-patterns.
The syntax of the disjunctive normal forms $D$ and elementary conjuncts $K$ is given below.
We can normalize a pattern in negation normal form to the disjunctive normal form by the function $\dnf{-}$ which takes a pattern in negation normal form and returns a set of elementary conjuncts. At the top-level we have to wrap the result in a $\parallel \{ \ldots \}$ node.

\begin{definition}[Disjunctive Normal Form]
    \[
    \begin{array}{lclr}
      D & \Coloneqq & \parallel \{ K_1,\ldots, K_n \} & \emph{Disjunctive Normal Form} \\
      K & \Coloneqq & x \mid \patNeg{x} \mid \patNeg{\mathcal{C}^n} \mid \mathcal{C}^n(K_1,\ldots,K_n) \mid \patAnd{K}{K} \mid \patWildcard \mid \patBot & \emph{Elementary Conjunct} \\
    \end{array}
  \]
  \begin{gather*}
    \dnf{x} \coloneqq \lbrace x \rbrace \mkern15mu
    \dnf{\patNeg{x}} \coloneqq \lbrace \patNeg{x} \rbrace \mkern15mu
    \dnf{\patNeg{\mathcal{C}^n}} \coloneqq \lbrace \patNeg{\mathcal{C}^n} \rbrace \mkern15mu
    \dnf{\patWildcard} \coloneqq \lbrace \patWildcard \rbrace \mkern15mu
    \dnf{\patBot} \coloneqq \lbrace \patBot \rbrace \\
    \begin{array}{rcl}
      \dnf{\mathcal{C}^n(N_1,\ldots,N_n)} &\coloneqq& \lbrace \mathcal{C}(k_1,\ldots,k_n) \mid k_1 \in \dnf{N_1},\ldots, k_n \in \dnf{N_n} \rbrace \\
    \dnf{\patAnd{N_1}{N_2}} &\coloneqq& \lbrace \patAnd{k_1}{k_2} \mid k_1 \in \dnf{N_1}, k_2 \in \dnf{N_2} \rbrace \\
    \dnf{\patOr{N_1}{N_2}} &\coloneqq& \dnf{N_1} \cup \dnf{N_2}
    \end{array}
  \end{gather*}
  \label{def:normalization:dnf}
\end{definition}

\begin{example}
    We compute the disjunctive normal form for our running example.
    \begin{equation*}
        \dnf{\patAnd{x}{(\patOr{\tmSa}{\tmSu})}} =\  \parallel \{\patAnd{x}{\tmSa},\patAnd{x}{\tmSu} \}
        \qquad
        \dnf{\patAnd{x}{\patAnd{(\patNeg{\tmSa})}{(\patNeg{\tmSu})}}} =\ \parallel \{ \patAnd{x}{\patAnd{(\patNeg{\tmSa})}{(\patNeg{\tmSu})}} \}
    \end{equation*}
\end{example}

We can embed the disjunctive normal form $\parallel \{ K_1,\ldots,K_n \}$ into patterns as $\patOr{K_1}{\patOr{\ldots}{K_n}}$. This set of conjuncts is by design never empty ($n > 0$), however, if we do admit this case ($n = 0$) it translates to the absurd-pattern $\patBot$ and is treated as such in any use case (see \cref{sec:compilation,sec:appendix-exhaustiveness}).

\subsection{Normalizing Elementary Conjuncts}
\label{subsec:normalization:conjuncts}

As a last step, we further simplify the elementary conjuncts of \cref{subsec:normalization:dnf}.
We write $\overline{D}$ and $\overline{K}$ for these normalized disjunctive normal forms and normalized elementary conjuncts:

\begin{definition}[Normalized Disjunctive Normal Form]
  \[
    \begin{array}{rl}
      \overline{D} & \Coloneqq \mkern8mu \parallel\lbrace \overline{K}_1, \ldots, \overline{K}_n \rbrace \\
      \overline{K} & \Coloneqq \mkern6mu \patAnd{\lbrace x_1,\ldots,x_n \rbrace}{\mathcal{C}(\overline{K}_1,\ldots, \overline{K}_m)} \mid \patAnd{\lbrace x_1,\ldots, x_n \rbrace}{\patNeg{\lbrace \mathcal{C}_1,\ldots,\mathcal{C}_m \rbrace}} \mid \patAnd{\{ x_1,\ldots,x_n \}}{\patBot}\\
    \end{array}
  \]
  \label{def:normalization:ndnf}
\end{definition}

These normalized conjuncts follow from the observation that there are essentially only three different kinds of conjuncts: positive, negative and unsatisfiable.
Each of these three conjuncts can bind a set of variables $\{ x_1,\ldots, x_n \}$.
A \emph{positive conjunct} $\patAnd{\lbrace x_1,\ldots,x_n \rbrace}{\mathcal{C}(\overline{K}_1,\ldots, \overline{K}_m)}$ matches against any value headed by the constructor $\mathcal{C}$, and binds that value against all the variables in the set.
A \emph{negative conjunct} $\patAnd{\lbrace x_1,\ldots, x_n \rbrace}{\patNeg{\lbrace \mathcal{C}_1,\ldots,\mathcal{C}_m \rbrace}}$ matches against any value which is headed by a constructor which is \emph{not} in the list $\lbrace \mathcal{C}_1,\ldots,\mathcal{C}_m \rbrace$.
The negative conjunct $\patAnd{\lbrace x \rbrace}{\patNeg{\lbrace\rbrace}}$ can be used to represent a single pattern variable $x$, and similarly $\patAnd{\lbrace \rbrace}{\patNeg{\lbrace\rbrace}}$ can be used to represent a wildcard pattern.
An \emph{unsatisfiable conjunct} $\patAnd{\{ x_1,\ldots,x_n \}}{\patBot}$ never matches against a value.

Computing these normalized conjuncts consists in collecting all the variables, as well as all negated and non-negated head constructors of an elementary conjunct.
Intuitively, the head constructors are constructors that occur in a pattern without being nested in a subpattern of another constructor.
An elementary conjunct is unsatisfiable if two different non-negated constructors appear in the conjunct, or if a head constructor appears both negated and non-negated.
An elementary conjunct can be simplified to a negative conjunct if only negated head constructors appear in the original conjunct.
Lastly, an elementary conjunct can be simplified to a positive conjunct if neither of the two cases above apply.
We write $\normalize{-}$ for the function which computes these normalized conjuncts; the formal definition of this function is given in \cref{sec:appendix-normalization}.
We also remove negated variables in this step and treat them like absurd patterns; this is, strictly speaking, not semantics preserving, but negated variables cannot be used in the right-hand side of a clause anyway, so this has no effect on the compilation of programs.

\begin{example}
    We compute the normalized disjunctive normal form for our running example.
    \begin{align*}
        \normalize{\parallel \{\patAnd{x}{\tmSa},\patAnd{x}{\tmSu} \}} &=\ \parallel \{\patAnd{\{ x \}}{\tmSa},\patAnd{\{ x \} }{\tmSu} \}\\
        \normalize{\parallel \{\patAnd{x}{\patAnd{(\patNeg{\tmSa})}{(\patNeg{\tmSu})}} \}} &=\ \parallel \{\patAnd{\{ x \}}{\patNeg{ \{ \tmSa, \tmSu \}}} \} \\
    \end{align*}
\end{example}
Like in \cref{subsec:normalization:dnf}, we treat the normal forms $\overline{D}$ and $\overline{K}$ as subsets of patterns $p$.
We use patterns in normalized disjunctive normal form in the next section where we define a pattern matching compilation algorithm for nested patterns.

\section{Compiling Algebraic Patterns}
\label{sec:compilation}

We have shown that algebraic patterns are more expressive because they explicitly allow us to speak about the complement of a pattern.
But programmers won't use them if they have to pay for the gain in expressivity with slower programs.
In order to compile programs using these patterns to efficient code we therefore have to use and modify one of the available pattern matching compilation algorithms.

These algorithms can be divided into two general classes:
Algorithms which use backtracking \cite{Augustsson1985} and algorithms which compile to decision trees \cite{Maranget2008}.
The algorithms which use backtracking guarantee that the size of the generated code is linear in the size of the input, but the generated code might perform redundant tests.
Algorithms which compile to decision trees never examine scrutinees more than once but the generated code might be exponential in the size of the input.
Both classes of algorithms are competitive when they are optimized \cite{LeFessant2001,Maranget2008}, so it is not possible to say which alternative is preferable in general.

In this article we modify and extend the compilation scheme laid out by Maranget \cite{Maranget2008} which is based on decision trees.
His algorithm already includes a treatment of or-patterns, and we will extend it to the normalized patterns that we introduced in \cref{sec:normalization}.
We introduce the general idea of the compilation to decision trees in \cref{subsec:compilation:general-idea}, present the central branching step in \cref{subsec:compilation:branching} in detail and illustrate the algorithm with an example in \cref{subsec:compilation:example}. We conclude with a statement on the correctness of our extended compilation scheme in \cref{subsec:compilation:correctness}.

\subsection{General Idea of the Algorithm}
\label{subsec:compilation:general-idea}

The algorithm operates on a generalization of the case expression $\caseof{v}{p_1 \Rightarrow e_1, \ldots, p_n \Rightarrow e_n ; \defaultClause \Rightarrow e}$ that we introduced in \cref{subsec:terms:expressions}.
Instead of only one scrutinee $v$ and one pattern $p$ in each clause we now allow a vector $v_1,\ldots,v_n$ of scrutinees and a matrix of clauses with patterns in nDNF (see \cref{subsec:normalization:conjuncts}).
\begin{equation}
	\tag{Input}
	\label{eq:case:input}
    \mathbf{case}\ v_1,\ldots, v_n\ \mathbf{of}\
	    \begin{bmatrix}
	        \overline{D}_1^1 	& \ldots 	& \overline{D}_n^1 &\To e^1 \\
			& \vdots	& 				\\
			\overline{D}_1^m	& \ldots	& \overline{D}_n^m &\To e^m \\
			&\defaultClause & &\To e_d
		\end{bmatrix}
\end{equation}
In such multi-column case expressions, the $i$-th pattern column tests the $i$-th scrutinee $v_i$.
The main idea of the algorithm is that we take such a multi-column case expression as input and examine all scrutinees -- and subscrutinees -- one by one, producing a nesting of single-column case expressions with \textit{simple} patterns as output.
These simple patterns consist only of constructors applied to variables.
We start the compilation by embedding an ordinary case expression into this generalized form on which we call the function \textit{compile}.
This function can take one of three steps.
It either terminates with a \textbf{Default} or \textbf{Simple} step, or it examines one of the scrutinees in a \textbf{Branch} step.

\begin{description}
	\item[Default] If $\defaultClause \Rightarrow e_d$ is the only clause then we terminate with the expression $e_d$.
	\item[Simple] If the first row consists only of variable patterns of the form $\patAnd{\lbrace x_1, \ldots x_k\rbrace}{\patNeg{\{\}}}$ then compilation terminates with the right-hand side of that clause together with appropriate substitutions.
	This rule also applies if $n = 0$.
	\item[Branch] If neither of the above steps apply we pick a column $i$ (with $1 \leq i \leq n$) and examine its scrutinee $v_i$ in a single-column case against simple constructor patterns.
\end{description}

All the complexity of the algorithm lies within the \textbf{Branch} step, so we will discuss its details in the next subsection.

\subsection{The Branching Step}
\label{subsec:compilation:branching}

The first step is to choose on which index $i$ we want to perform the case split.
This choice does not affect the correctness of the algorithm; selecting a column is a therefore a matter of optimization which other authors \cite{Scott2000,Maranget2008} have discussed in detail.
We only enforce that the selected pattern column must contain outermost constructors.
Such a column must necessarily exist if neither the \textbf{Default} nor the \textbf{Simple} step apply.

Next, we have to gather the constructors against which we can test the scrutinee $v_i$.
We collect these so-called head constructors into a set $\mathcal{H}(\overline{D}_i)$ defined below.

$$\label{head}
\mathcal{H}(\overline{D}_i) := \hspace{-2mm}\bigcup\limits_{j = 1,\dots,m} \hspace{-2mm}\headconstructors{\overline{D}_i^j} \ \ with \
\begin{array}{lll}
	\headconstructors{\parallel \{ \overline{K}_1,\ldots, \overline{K}_t \}} &\coloneqq& \hspace{-3.25mm}\bigcup\limits_{l=1,\dots,t}
	\hspace{-2mm}\headconstructors{\overline{K}_l}\\
	\headconstructors{\patAnd{\lbrace x_1,\ldots,x_s \rbrace}{\mathcal{C}(\overline{K}_1,\ldots, \overline{K}_t)}} &\coloneqq& \lbrace \mathcal{C} \rbrace \\
	\headconstructors{\patAnd{\lbrace x_1,\ldots, x_s \rbrace}{\patNeg{\lbrace \mathcal{C}_1,\ldots,\mathcal{C}_t \rbrace}}} &\coloneqq& \lbrace \mathcal{C}_1,\ldots,\mathcal{C}_t \rbrace \\
	\headconstructors{\patAnd{\{x_1,\ldots,x_s\}}{\patBot}} &\coloneqq& \emptyset
\end{array}
$$

Once we have found these head constructors $\{ \mathcal{C}^{n_1}, \ldots, \mathcal{C}^{n_z} \}$ we generate fresh pattern variables $x^{k}_1$ to $x^{k}_{n_k}$ for each constructor $\mathcal{C}^{n_k}$ and a simple pattern match.
For each clause we have gained information about the scrutinee $v_i$: Within the right-hand side of a constructor clause, we know that our scrutinee must have conformed to the shape of our constructor.
Likewise, if we arrive at the right-hand side of the default clause, we know that our scrutinee must have been different to all the constructors $\{ \mathcal{C}^{n_1}, \ldots, \mathcal{C}^{n_z} \}$ before.
We use this knowledge by computing clause-specific \textit{subproblems} of our initial \ref{eq:case:input}
and we do so via the function $\mathcal{S}$ for constructor clauses and function $\mathcal{D}$ for the default clause.

\begin{equation*}
	\tag{Output}
	\label{eq:case:output}
    \mathbf{case}\ v_i\ \mathbf{of}\
	    \begin{bmatrix}
	        \mathcal{C}^{n_1}(x_1^{1},\ldots, x_{n_1}^{1}) 	&\To& \mathit{compile}(\mathcal{S}(i,\mathcal{C}^{n_1}(x_1^{1},\ldots, x_{n_1}^{1}),\text{\ref{eq:case:input}})) \\
			\vdots	& 	&  \vdots			\\
			\mathcal{C}^{n_z}(x_1^{z},\ldots, x_{n_z}^{z}) &\To& \mathit{compile}(\mathcal{S}(i,\mathcal{C}^{n_z}(x_1^{z},\ldots, x_{n_z}^{z}),\text{\ref{eq:case:input}})) \\
			\defaultClause &\To& \mathit{compile}(\mathcal{D}(i,\{ \mathcal{C}^{n_1},\ldots, \mathcal{C}^{n_z} \}, \text{\ref{eq:case:input}}))
		\end{bmatrix}
\end{equation*}
All that is left is to compute these subproblems of \textit{specialization} and \textit{default} which are the arguments of recursive invocations of the compile function.

\subsubsection{Computing Specialization}
\label{subsubsec:compilation:branch:specialization}

We now show how to compute $\mathcal{S}(i,\mathcal{C}^{k}(x_1,\ldots, x_{k}),\text{\ref{eq:case:input}})$ which specializes \ref{eq:case:input} into a constructor-specific subproblem that utilizes \textit{the assumption that the scrutinee $v_i$ has already matched the constructor pattern $\mathcal{C}^{k}(x_1,\ldots, x_{k})$}.
We therefore remove the old scrutinee $v_i$ and replace it with the bound variables $x_1,\ldots, x_{k}$ as new scrutinees.
\begin{equation*}
	\mathcal{S}(i,\mathcal{C}^{k}(x_1,\ldots, x_{k}),\text{\ref{eq:case:input}}) \ = \ \mathbf{case}\hspace{2mm}
	x_1, \dots, x_k, v_1,\dots,v_{i-1},v_{i+1},\dots,v_n \hspace{2mm} \mathbf{of} \hspace{1mm} \begin{bmatrix} S \end{bmatrix}
\end{equation*}
We now have to build a new clause matrix $\begin{bmatrix} S \end{bmatrix}$ that matches against the new list of scrutinees and do so by iterating over the previous clauses.
Each clause of the old \ref{eq:case:input} contributes to $\begin{bmatrix} S \end{bmatrix}$ only if its $i$-th pattern could have matched the old scrutinee $v_i$ of shape $\mathcal{C}^{k}$ by \textit{assumption}.
This is, of course, the case for the default clause which we keep unchanged.
Otherwise, these $i$-th patterns are in normalized DNF, and we have to check whether one of its elementary conjuncts could have matched. The following list gives all inclusion rules.

\begin{enumerate}
	\item \label{spec-1} Row $\begin{bmatrix} \mathbf{default} \To e_d \end{bmatrix}$ is added to $\begin{bmatrix} S \end{bmatrix}$ unchanged.
	\item Row
		$\begin{bmatrix}
		\overline{D}_{1}^j \ \ \ldots \ \ \overline{D}_{i-1}^j \ \ \mathbf{\overline{D}_i^j} \ \ \overline{D}_{i+1}^j & \ldots & \overline{D}_{n}^j \ \ \To \ \  e^j \end{bmatrix}$
		only contributes according to $\mathbf{\overline{D}_i^j}$:
		\begin{enumerate}
		\item \label{spec-a} If
				$\mathbf{\overline{D}_i^j} = \patAnd{\{ y_1, \ldots y_s\}}{\mathcal{C}^k(\overline{K}_1,\ldots,\overline{K}_k)}$, \\
				then add the row
				$\begin{bmatrix}
				\overline{K}_1 \dots \overline{K}_k \hspace{4mm}
				\overline{D}_{1}^j \dots \overline{D}_{i-1}^j \overline{D}_{i+1}^j \dots \ \overline{D}_{n}^j \hspace{3mm}
				\To e^j [\{ y_1, \ldots y_s\} \mapsto v_i]
				\end{bmatrix}$. \vspace{2mm}
			\item \label{spec-b} If
				$\mathbf{\overline{D}_i^j} = \patAnd{\{ y_1, \ldots y_s\}}{\patNeg{\{ \mathcal{C}_1, \ldots, \mathcal{C}_t \} }}$
				and
				$\mathcal{C}^k \not\in \{ \mathcal{C}_1, \ldots, \mathcal{C}_t \}$,\\
				then add the row
				$\begin{bmatrix}
					\ \patWildcard \ \ \dots \ \ \patWildcard \hspace{4mm}
					\overline{D}_{1}^j \dots \overline{D}_{i-1}^j \overline{D}_{i+1}^j \dots \ \overline{D}_{n}^j \hspace{3mm}
					\To e^j [\{ y_1, \ldots y_s\} \mapsto v_i]
				\end{bmatrix}$. \\
				(Note that we use $\patWildcard$ simply as a shorthand for the equivalent nDNF $\patAnd{\{\}}{\patNeg{\{\}}}$)\\
			\item \label{spec-c} If
				$\mathbf{\overline{D}_i^j} = \patOr{}{ \{ \overline{K}_1,\ldots,\overline{K}_t \}}$, then \vspace{-0.45cm} \\
				recursively iterate the rows of
				$\begin{bmatrix}
					\overline{D}_{1}^j & \ldots & \overline{D}_{i-1}^j & \mathbf{\overline{K}}_1 & \overline{D}_{i+1}^j & \ldots & \overline{D}_{n}^j & \To & e^j \\
					\multicolumn{7}{c}{\vdots} \\
					\overline{D}_{1}^j & \ldots & \overline{D}_{i-1}^j & \mathbf{\overline{K}}_t & \overline{D}_{i+1}^j & \ldots & \overline{D}_{n}^j & \To & e^j
				\end{bmatrix}$.
	\end{enumerate}
\end{enumerate}

We give an example for case \ref{spec-b} which handles negation.
Suppose that want to specialize the following case expression into a case expression \textit{subproblem} which assumes that $v_1$ was matched against $\mathtt{Cons}(x,y)$.
$$\setlength\arraycolsep{2pt}
compile\left(
\begin{array}{l}
	\ \textbf{case} \ v_1,v_2 \ \textbf{of} \vspace{1mm}\\
	\begin{bmatrix}
	  \patAnd{\{\}}{\patNeg{\{\mathtt{Cons}\}}} & \overline{D}^1_2 &\To e^1 \\
	  \patAnd{\{\}}{\patNeg{\{\mathtt{Nil}\}}} & \overline{D}^2_2 &\To e^2 \\
	  \textbf{default} & &\To e_d
	\end{bmatrix}
\end{array}
\right) \ \ \ = \ \ \begin{array}{l}
	\ \textbf{case}
	\ v_1 \ \textbf{of} \vspace{1mm}\\
	\begin{bmatrix}
		\mathtt{Cons}(x,y) &\To &compile(\textit{subproblem})\\
		&\vdots&\\
	\end{bmatrix}\end{array}
$$
The first clause is discarded, because the pattern $\patAnd{\{\}}{\patNeg{\{\mathtt{Cons}\}}}$ is incompatible with the assumption that $v_i$ matches $\mathtt{Cons}$.
The second clause using $\patAnd{\{\}}{\patNeg{\{\mathtt{Nil}\}}}$ is compatible with the assumption.
So, how can we transform this clause into a new one testing $x,y$ and $v_2$? We see that pattern $\patAnd{\{\}}{\patNeg{\{\mathtt{Nil}\}}}$ does not match on the subvalues $x$ and $y$ of $v_1$, but for $v_2$ we still have to match against the pattern $\overline{D}^2_2$.
We keep the default clause by rule \ref{spec-1} unchanged.
$$
subproblem = \textbf{case}
\ x,y,v_2 \ \textbf{of} \
\begin{bmatrix}
  \patWildcard & \patWildcard & \overline{D}^2_2 &\To e^2 \\
  \textbf{default} 			& &					 &\To e_d
\end{bmatrix}
$$


\subsubsection{Computing Default}
\label{subsubsec:compilation:branch:default}

Next we show how to compute $\mathcal{D}(i,\{ \mathcal{C}^{n_1},\ldots, \mathcal{C}^{n_z} \}, \text{\ref{eq:case:input}})$ whose output reflects \textit{the assumption that the scrutinee $v_i$ did not match any of the head constructors $\mathcal{C}^{n_1}$ to $\mathcal{C}^{n_z}$.}
This time we have no new scrutinees and only match the leftover list against a new pattern matrix $\begin{bmatrix} D
\end{bmatrix}$:
\begin{equation*}
	\mathcal{D}(i,\{ \mathcal{C}^{n_1},\ldots, \mathcal{C}^{n_z} \}, \text{\ref{eq:case:input}}) \ = \ \mathbf{case}\hspace{2mm}
	v_1,\dots,v_{i-1},v_{i+1},\dots,v_n \hspace{2mm} \mathbf{of} \hspace{1mm} \begin{bmatrix} D \end{bmatrix}
\end{equation*}
As before, we build a new matrix $\begin{bmatrix} D
\end{bmatrix}$ by row-wise iteration over the clauses of the \ref{eq:case:input}.
A clause only contributes if its $i$-th pattern entry allows for a match with scrutinee $v_i$, knowing that it differs from all the head constructors $C^{n_1}, \dots, C^{n_z}$.

\begin{enumerate}
	\item \label{def-1}
	Row $\begin{bmatrix} \mathbf{default} \To e_d \end{bmatrix}$ is added to $\begin{bmatrix} D \end{bmatrix}$ unchanged.
	\item
	Row	$\begin{bmatrix}
			\overline{D}_{1}^j \ \ \ldots \ \ \overline{D}_{i-1}^j \ \ \mathbf{\overline{D}_i^j} \ \ \overline{D}_{i+1}^j & \ldots & \overline{D}_{n}^j \ \ \To \ \  e^j \end{bmatrix}$,
			only contributes according to $\mathbf{\overline{D}_i^j}$:
		\begin{enumerate}
			\item \label{def-a}
				If
				$\mathbf{\overline{D}_i^j} = \patAnd{\{ y_1, \ldots, y_s \}}{\patNeg{\{ \mathcal{C}_1, \ldots, \mathcal{C}_t \} }}$, \\
				then add the row
				$\begin{bmatrix}
					\overline{D}_{1}^j \dots \overline{D}_{i-1}^j \overline{D}_{i+1}^j \dots \ \overline{D}_{n}^j \hspace{3mm}
					\To e^j [\{ y_1, \ldots y_s\} \mapsto v_i]
				\end{bmatrix}$. \\
				Note that $v_i$'s constructor differs from all head constructors which include $\{ \mathcal{C}_1, \ldots, \mathcal{C}_t \}$\vspace{3mm}
			\item \label{def-b}
				If
				$\mathbf{\overline{D}_i^j} = \patOr{}{ \{ \overline{K}_1,\ldots,\overline{K}_t \}}$, then \vspace{-0.45cm} \\
				recursively iterate the rows of
				$\begin{bmatrix}
					\overline{D}_{1}^j & \ldots & \overline{D}_{i-1}^j & \mathbf{\overline{K}}_1 & \overline{D}_{i+1}^j & \ldots & \overline{D}_{n}^j & \To & e^j \\
					\multicolumn{7}{c}{\vdots} \\
					\overline{D}_{1}^j & \ldots & \overline{D}_{i-1}^j & \mathbf{\overline{K}}_t & \overline{D}_{i+1}^j & \ldots & \overline{D}_{n}^j & \To & e^j
				\end{bmatrix}$. \vspace{1mm}
		\end{enumerate}
	\end{enumerate}

Let us illustrate rule \ref{def-a} with the case expression below, where we perform a case split on $v_1$.
We want to construct the default \textit{subproblem} which assumes that $v_1$  did not match the constructors $\mathtt{Red}$ or $\mathtt{Green}$.
$$\setlength\arraycolsep{2pt}
compile\left(
\begin{array}{l}
	\ \textbf{case}
	\ v_1,v_2 \ \textbf{of} \vspace{1mm}\\
	\begin{bmatrix}
	\patAnd{\{\}}{\mathtt{Red}} & \overline{D}^1_2 &\To e^1 \\
	\patAnd{\{\}}{\patNeg{\{\mathtt{Green}\}}} & \overline{D}^2_2 &\To e^2 \\
	\textbf{default} & &\To e_d
	\end{bmatrix}
\end{array}
\right) \ \ \ = \ \begin{array}{l}
	\ \textbf{case}
	\ v_1 \ \textbf{of} \vspace{1mm}\\
	\begin{bmatrix}
		\mathtt{\mathtt{Red}}  &\To & \dots\\
		\mathtt{\mathtt{Green}}  &\To & \dots\\
		\textbf{default}	&\To & compile(\textit{subproblem})
	\end{bmatrix}\end{array}
$$
In \textit{subproblem}, we have no new scrutinees and only need to investigate the remaining scrutinee $v_2$
$subproblem = \textbf{case} \ v_2 \ \textbf{of} \
				\begin{bmatrix} D \end{bmatrix}$.
We ask for each clause if it is consistent with $v_1$ not looking like either $\mathtt{Red}$ or $\mathtt{Green}$.
The first clause could not match $\patAnd{\{\}}{\mathtt{Red}}$, so we discard it.
For the second row, however, the test $\patAnd{\{\}}{\patNeg{\{\mathtt{Green}\}}}$ does match any $v_1$ that differs from both $\mathtt{Red}$ and $\mathtt{Green}$, so we keep it.
The default clause is also kept unchanged.
$$
subproblem = \textbf{case}
\ v_2 \ \textbf{of} \
\begin{bmatrix}
   \overline{D}^2_2 &\To e^2 \\
  \textbf{default} 	&\To e_d
\end{bmatrix}
$$


\subsection{Compiling an Example}
\label{subsec:compilation:example}

\newcommand{\rhsweekend}[1]{e_1(#1)}
\newcommand{\rhsnotweekend}[1]{e_2(#1)}
\newcommand{\patweekend}{\patOr{(\patAnd{\{y\}}{\tmSa})}{(\patAnd{\{y\}}{\tmSu})}}
\newcommand{\patnotweekend}{\patAnd{\{y\}}{\patNeg{\{\tmSa,\tmSu\}}}}

Let us illustrate the compilation process with a variant of the example introduced in \cref{sec:pattern-algebra}.
$$
\begin{array}{llll}
\mathbf{case} \ x \ \mathbf{of} \{
	& \patAnd{y}{(\patOr{\tmSa}{\tmSu})} & \To \text{``Today is weekend!''},&\\
	& \patAnd{y}{\patNeg{(\patOr{\tmFr}{\patOr{\tmSa}{\tmSu}})}}  & \To \text{``Today is ''}\mdoubleplus\ \text{show}(y),&\\
	& \mathbf{default} & \To \text{``Tomorrow is weekend...''} &\ \}
\end{array}
$$
First, we rewrite this case expression into the multi-column notation, normalize all patterns (see \cref{sec:normalization}), and abbreviate the right-hand sides by $e_1(y)$, $e_2(y)$, and $d$.
$$
\mathit{Input} := \mathbf{case} \ x \ \mathbf{of}
\begin{bmatrix*}[l]
	\patOr{}{}\{\{y\} \ \& \ \tmSa, \{y\} \ \& \ \tmSu \} & \To & e_1(y) \\
	\patOr{}{}\{\{y\} \ \& \ \patNeg{}\{\tmFr,\tmSa,\tmSu\}\} & \To & e_2(y) \\
	\mathbf{default} & \To & d
\end{bmatrix*}
$$
We branch on the only available scrutinee $x$ by matching it in a single-column case expression against the head constructors $\{ \tmFr, \tmSa, \tmSu \}$.
We then generate a new case expression with four clauses, one for each constructor.
All that is left to do is to compute and compile the clause-specific \textit{subproblems} of specialization and default.
$$
\mathit{compile}\left( \mathit{Input} \right) = \mathbf{case} \ x \ \mathbf{of} \
\begin{bmatrix}
	&\tmFr &\To \mathit{compile}	\left(
								\mathcal{S} \left(1, \tmFr,
								\mathit{Input}
								\right)
								\right) \vspace{2mm}\\
	&\tmSa &\To \mathit{compile}	\left(
								\mathcal{S} \left(1, \tmSa,
								\mathit{Input}
								\right)
								\right) \vspace{2mm}\\
	&\tmSu &\To \mathit{compile}	\left(
								\mathcal{S} \left(1, \tmSu,
								\mathit{Input}
								\right)
								\right) \vspace{2mm}\\
	& \mathbf{default} &\To \mathit{compile} \left(
								\mathcal{D} \left(1, \{\tmFr,\tmSa,\tmSu\},
								\mathit{Input}
								\right)
								\right)
\end{bmatrix}
$$
Let us consider the subproblem $\mathcal{S} \left(1, \tmSa, \mathit{Input}\right)$.
We know that scrutinee $x$ has matched $\tmSa$, and we have no left-over scrutinees nor new subscrutinees to consider ($n = 0$), giving $\mathcal{S} \left(1, \tmSa, \mathit{Input}\right) = \textbf{case} \ \ \ \textbf{of} \begin{bmatrix} S \end{bmatrix}$.
We get the new clause matrix $\begin{bmatrix} S \end{bmatrix}$ by iterating the old one to see how each row contributes.
\begin{itemize}
	\item For row
		$\begin{bmatrix*}[l]
		\patOr{}{}\{\{y\} \ \& \ \tmSa, \{y\} \ \& \ \tmSu \} \ \To \ e_1(y) \\
		\end{bmatrix*}$, by rule (c), iterate
		$\begin{bmatrix*}[l]
			\{y\} \ \& \ \tmSa & \To & e_1(y) \\
			\{y\} \ \& \ \tmSu & \To & e_1(y) \\
		\end{bmatrix*}$. \vspace{-0.3cm}\\
		\begin{itemize}
			\item For the first one, by (a) we add the row $\begin{bmatrix*}[l]
				 & \To \ e_1(x) \\
			\end{bmatrix*}$.
			\item For the second one, no contribution is made. \vspace{1mm}
		\end{itemize}
	\item For row
		$\begin{bmatrix*}[l]
		\patOr{}{}\{\{y\} \ \& \ \patNeg{}\{\tmFr,\tmSa,\tmSu\}\} & \To & e_2(y) \\
		\end{bmatrix*}$ no contribution is made. \vspace{1mm}
	\item For row $ \begin{bmatrix*}[l]
						\mathbf{default} & \To & d
					\end{bmatrix*}$, by (1), we keep it unchanged.
\end{itemize}
This gives the specialization matrix
	$
	\begin{bmatrix}S\end{bmatrix} = \begin{bmatrix}
		& \To & e_1(x) \\
		\mathbf{default} & \To & d
	\end{bmatrix}
	$.

The other specialization subproblems are computed similarly, and we obtain the following intermediate result.
$$
\mathit{compile}\left( \mathit{Input} \right) =
\mathbf{case} \ x \ \mathbf{of}
\begin{bmatrix*}[l]
	&\tmFr &\To \mathit{compile}	\left(
		\mathbf{case} \ \ \ \ \mathbf{of} \begin{bmatrix}
			\mathbf{default} & \To & d
		\end{bmatrix}
								\right), \vspace{2mm}\\
	&\tmSa &\To \mathit{compile}	\left(
		\mathbf{case} \ \ \ \ \mathbf{of} \begin{bmatrix}
			& \To & e_1(x) \\
			\mathbf{default} & \To & d
		\end{bmatrix}
								\right) \vspace{2mm}\\
	&\tmSu &\To \mathit{compile}	\left(
		\mathbf{case} \ \ \ \ \mathbf{of} \begin{bmatrix}
			& \To & e_1(x) \\
			\mathbf{default} & \To & d
		\end{bmatrix}
								\right) \vspace{2mm}\\
	& \mathbf{default} &\To \mathit{compile} \left(
		\mathbf{case} \ \ \ \ \mathbf{of} \begin{bmatrix}
			& \To & e_2(x) \\
			\mathbf{default} & \To & d
		\end{bmatrix}
								\right)
\end{bmatrix*}
$$
For each of the subproblems we can apply either \textbf{Default} or \textbf{Simple} and conclude the computation.
$$
\mathit{compile}\left( \mathit{Input} \right) =
\mathbf{case} \ x \ \mathbf{of} \
\begin{bmatrix*}[l]
	&\tmFr &\To d\\
	&\tmSa &\To e_1(x)\\
	&\tmSu &\To e_1(x)\\
	& \mathbf{default} &\To e_2(x)
\end{bmatrix*}
$$
Note how the right-hand expression of the previous default clause $d$ is utilized in a constructor-clause, whereas the right-hand expression of the new default-clause originates from a previously non-default clause.

\subsection{Correctness of Compilation}
\label{subsec:compilation:correctness}

Correctness of compilation states that the \textit{compile} function introduced in \cref{sec:compilation} preserves semantics.
Because the compilation algorithm operates on multi-column pattern matches, we extend wellformedness of expressions, the operational semantics and denotations to multi-column pattern matches.
We can then state the correctness for the compilation algorithm:
\begin{theorem}
	\label{thm:compilation:correctness}
    On a wellformed multi-column pattern match of the normalized shape \ref{eq:case:input}, the compilation algorithm
	preserves semantics:
    $\semantic{\ref{eq:case:input}} \equiv \semantic{\textit{compile}(\ref{eq:case:input})}$.
\end{theorem}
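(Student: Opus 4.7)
The plan is to prove the theorem by well-founded induction on a measure that strictly decreases under every recursive call to \textit{compile} and every internal row-unfolding performed by $\mathcal{S}$ and $\mathcal{D}$. A suitable measure is the lexicographic pair $(n,s)$ where $n$ is the number of scrutinee columns and $s$ is the sum of the structural sizes of all patterns in the matrix; a \textbf{Branch} step either shrinks $n$ (when the chosen constructor has arity $0$ and only one column remained) or strips an outer constructor and thus reduces $s$, while the unfolding of a disjunction $\parallel\{\overline{K}_1,\ldots,\overline{K}_t\}$ into separate rows reduces $s$ because each $\overline{K}_l$ is a strict subterm. Before the main induction I would also establish as an auxiliary lemma that $\mathcal{S}$ and $\mathcal{D}$ preserve wellformedness of the matrix (linearity, determinism, and pairwise disjointness of clauses); this is needed so that the induction hypothesis applies to the subproblems.

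The base cases are direct. For \textbf{Default}, both the input and output evaluate to $e_d$ by rule \textsc{E-Default}. For \textbf{Simple}, a first row consisting only of variable conjuncts $\patAnd{\{x_1,\ldots,x_k\}}{\patNeg{\{\}}}$ matches every scrutinee tuple, and wellformedness forbids overlap with any other clause, so by \cref{thm:wf:deterministic} the input reduces exactly to the substituted right-hand side, which coincides with what \textit{compile} returns.

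For the inductive \textbf{Branch} step, since the input is a closed wellformed expression, each scrutinee $v_j$ is (or evaluates to) a value of the form $\mathcal{C}^k(w_1,\ldots,w_k)$. The output case expression dispatches on $v_i$'s head constructor, so there are two subcases, handled by two central lemmas proved by row-wise induction on the clause matrix.

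\emph{Specialization Lemma.} If $v_i=\mathcal{C}^k(w_1,\ldots,w_k)$ with $\mathcal{C}^k$ one of the chosen head constructors, then the input is semantically equivalent to the specialization $\mathcal{S}(i,\mathcal{C}^k(x_1,\ldots,x_k),\text{Input})$ evaluated with $x_j\coloneqq w_j$. For each input row one checks that the contribution to $[S]$ exactly captures when that clause can still match: a positive conjunct $\patAnd{\{\overline{y}\}}{\mathcal{C}^k(\overline{K})}$ matches $v_i$ iff each $\overline{K}_j$ matches $w_j$ (justifying case (a)); a negative conjunct $\patAnd{\{\overline{y}\}}{\patNeg{\{\overline{\mathcal{C}}\}}}$ with $\mathcal{C}^k\notin\{\overline{\mathcal{C}}\}$ matches unconditionally on the new subscrutinees (justifying case (b) with wildcards); and a conjunct with a mismatching head or $\patBot$ contributes nothing. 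The variable-binding bookkeeping $[\{\overline{y}\}\mapsto v_i]$ is justified by \cref{thm:pattern-algebra:linear-patterns-covering}.

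\emph{Default Lemma.} If $v_i=\mathcal{C}^k(w_1,\ldots,w_k)$ with $\mathcal{C}^k\notin\{\mathcal{C}^{n_1},\ldots,\mathcal{C}^{n_z}\}$, then the input is equivalent to $\mathcal{D}(i,\{\mathcal{C}^{n_1},\ldots,\mathcal{C}^{n_z}\},\text{Input})$. The key observation is that $\mathcal{C}^k$ is absent from $\mathcal{H}(\overline{D}_i)$, so no positive conjunct of column $i$ can match $v_i$, while any negative conjunct $\patAnd{\{\overline{y}\}}{\patNeg{\{\overline{\mathcal{C}}\}}}$ has $\{\overline{\mathcal{C}}\}\subseteq\mathcal{H}(\overline{D}_i)$ and therefore trivially matches. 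Combining these two lemmas with the inductive hypothesis applied to each subproblem yields the \textbf{Branch} case.

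The main obstacle will be the treatment of or-patterns inside the normalized DNF (rules (c) of $\mathcal{S}$ and (b) of $\mathcal{D}$): one must show that unfolding $\parallel\{\overline{K}_1,\ldots,\overline{K}_t\}$ into $t$ separate rows preserves semantics and wellformedness. Semantic equivalence follows by desugaring $\parallel\{\ldots\}$ back to an or-pattern and invoking associativity/commutativity of or (\cref{thm:algebraic-equivalences:one}); duplicating the right-hand side $e^j$ across the new rows is sound because, by \cref{thm:pattern-algebra:linear-patterns-covering} together with the linearity rule \textsc{L-Or}$^+$, all $\overline{K}_l$ bind the same even-free variables, so the same $e^j$ is well-scoped under any branch. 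The trickier part is preserving pairwise disjointness between the newly created rows and the other rows of the matrix, which requires carefully invoking the determinism rules \textsc{D-Or}$_1$/\textsc{D-Or}$_2$ from the original wellformedness assumption. Once this row-unfolding is verified, the remaining cases of $\mathcal{S}$ and $\mathcal{D}$ are routine manipulations of finite matrices, and the overall theorem follows by assembling the base cases, the two lemmas, and the induction hypothesis.
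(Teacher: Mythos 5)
Your decomposition is essentially the paper's: the appendix proof also proceeds by induction on the recursive structure of \textit{compile}, with the \textbf{Default} and \textbf{Simple} cases handled directly, a lemma that $\mathcal{S}$ and $\mathcal{D}$ preserve wellformedness so the induction hypothesis applies to subproblems, and a central ``Specialization/Default preserve and reflect reduction'' lemma proved by case analysis on the inclusion rules, with determinism of wellformed multi-column matches used to lift single-step statements to semantic equivalence. Your two ``central lemmas'' are exactly the two halves of that lemma, and your treatment of the variable bookkeeping via \cref{thm:pattern-algebra:linear-patterns-covering} and \cref{thm:wf-matching-det} matches the paper's use of the same results.

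The one step that would fail as written is your well-founded\-ness measure. Specializing on a constructor $\mathcal{C}^k$ replaces the scrutinee $v_i$ by $k$ new scrutinees, so the subproblem has $k+n-1$ columns; for $k\geq 2$ the first component $n$ of your lexicographic pair \emph{increases}, so $(n,s)$ does not decrease even though $s$ does. Swapping to $(s,n)$ does not immediately repair this either: rule (b) of $\mathcal{S}$ replaces a negative conjunct by $k$ wildcards, and the row-unfolding of $\parallel\{\overline{K}_1,\ldots,\overline{K}_t\}$ duplicates all the other patterns of that row across $t$ new rows, so the naive ``sum of structural sizes'' can grow. The standard fix (implicit in Maranget's setting) is to count only constructor occurrences in the matrix, assigning wildcards and variables size zero, and to use the fact that the \textbf{Branch} step is only taken on a column containing at least one outermost (possibly negated) constructor, which is consumed; the internal disjunction-unfolding then needs its own separate, local termination argument on the column-$i$ pattern rather than being folded into the global measure. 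The paper sidesteps all of this by phrasing the argument as induction ``over the recursive definition of \textit{compile}'', so your attempt to make termination explicit is welcome, but the measure needs to be repaired before the induction is legitimate.
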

We provide the proof and the extended definitions in the \cref{subsec:proofs:compilation-correctness}.

\section{Future Work}
\label{sec:future-work}
There are two main directions that we plan to investigate in future work.

\paragraph*{Allowing Overlapping Patterns}
\label{subsec:future-work:confluent-cases}

In this article we enforce that no two patterns in a pattern match can overlap in order to guarantee confluence of the system, but \cite{Cockx2013,Cockx2014} showed that this restriction can be relaxed.
They allow overlapping patterns under the condition that the right-hand sides of the clauses must become judgmentally equal under a substitution computed from the overlapping patterns.
Let us look at an example that is allowed in their system, but not ours:
\begin{align*}
    \text{or}(x) \coloneqq \mathbf{case}\ x\ \mathbf{of}\ \lbrace\ &\tmPair{\tmTrue}{\patWildcard} \Rightarrow \tmTrue, &&\tmPair{\patWildcard}{\tmTrue} \Rightarrow \tmTrue,\\
    &\tmPair{\tmFalse}{x} \Rightarrow x, &&\tmPair{x}{\tmFalse} \Rightarrow x\ \rbrace
\end{align*}
In this example, the first and the last (as well as the second and first, second and third, and the last two) patterns are overlapping.
But the definition of this function is still confluent, since if we unify the overlapping patterns $\tmPair{\text{True}}{\patWildcard}$ and $\tmPair{x}{\text{False}}$, and apply the substitution to the right-hand sides $\text{True}$ and $x$, the terms become judgmentally equal.
Such overlapping patterns can be useful in a proof assistant, since they yield additional judgmental equalities that can be used to simplify a term during normalization.
We think that their approach to allow overlapping patterns and our algebraic patterns can complement each other, and we therefore plan to study their interaction.

\paragraph*{Exhaustiveness Checking}
\label{subsec:future-work:exhaustiveness-checking}

In this article, we check whether a set of patterns is overlapping, but we do not check whether a set of patterns is exhaustive.
This is in spite of the fact that we are compiling these patterns using decision trees, which is a method that can easily accommodate checking the exhaustiveness of sets of patterns \cite{Maranget2007}.
Checking that a given set of patterns is exhaustive is essential for real programming languages, so we plan to develop and verify an algorithm for this purpose, building on the work of \cite{Maranget2007,Karachalias2015,Graf2020}. We sketch a basic outline for such an algorithm in \cref{sec:appendix-exhaustiveness}.

\section{Related Work}
\label{sec:related-work}
\paragraph*{Order-Independent Pattern Matching}
As discussed in \cref{subsec:future-work:confluent-cases}, \cite{Cockx2013} and \cite{Cockx2014} study the issue of order-independent pattern matching in proof assistants.
They want to ensure that every clause in a pattern-matching expression can be used as a judgmental equality for normalizing terms during type checking.
Their system lacks advanced patterns such as or-patterns and negation patterns, but they instead relax the non-overlap requirement.

\paragraph*{Substructural logics}
And-patterns and wildcard patterns are strongly related to the rules of contraction and weakening in the sequent calculus.
This relationship between structural rules and certain forms of patterns has been observed by \cite{Cerrito2004}, who present a pattern matching calculus for the sequent calculus which includes both and-patterns and wildcard-patterns.
This relationship can also be observed in programming languages with substructural type systems, such as Rust, where the expression $\mathtt{match\ v\ \{ x@y => (x,y)\}}$ does not type check if the type of $v$ does not support contraction, i.e. does not implement the \texttt{Copy} trait.

\paragraph*{Logic Programming and Datalog}
The problems with the binding structure of patterns involving negations that we discussed in this paper appears in a similar form for Datalog programs.
In this article we are only interested in the binding occurrences of variables in a pattern, i.e. those variables which occur under an even number of negations, and which can be used in the right-hand side of a clause.
Occurrences of logic variables in a Datalog program can be distinguished into binding occurrences and bound occurrences, where binding occurrences assign a value to a logic variable and bound occurrences can access the value the variable is bound to.
The authors of \cite{Klopp2024datalog} describe a type system which tracks binding and bound occurences of variables and describe the rules for disjunctions and negations in their Section 4.3.
Their rule for negations is:
\begin{prooftree}
    \AxiomC{$\Gamma_1 \vdash^{\# \cdot -} a\ \mathtt{ok} \dashv \Gamma_2$}
    \RightLabel{A-Not}
    \UnaryInfC{$\Gamma_1 \vdash^{\#} \mathtt{not}\ a\ \mathtt{ok} \dashv \Gamma_2$}
\end{prooftree}
Here $\#$ stands for a polarity, either $+$ or $-$, which controls which variables in the input context $\Gamma_1$ are already bound, and the multiplication with $-$ correspondingly switches this interpretation.
This rule therefore operates very similarly to our pattern typing rule \textsc{P-Neg} in \cref{subsec:terms:typing} and guarantees that negation is involutive with respect to the typing relation.

\paragraph*{Dynamic First Class Patterns}
We use \enquote{dynamic first class patterns} for systems in which patterns can be computed and passed at runtime.
Such systems, whose theory was studied by \cite{Jay2009} and \cite{Jay2009Book}, can also express our algebraic patterns.
The technical report \cite{GellerHirschfeldBracha2010} describes an extension of the object-oriented language Newspeak (cf.~\cite{Bracha2022}) by such a system of dynamic pattern matching.
They explicitly discuss various \emph{pattern combinators}, i.e.~combinators which combine patterns to yield new ones.
They discuss disjunction, conjunction and negation pattern combinators which correspond to our or-patterns, and-patterns and negation patterns.
In this article, we focus exclusively on statically-known patterns, since these can be analyzed and compiled more easily.

\paragraph*{Negation Patterns}

Negation patterns also appear in \cite{Krishnaswami2009}, where the author mentions that \enquote{Having false- and or-patterns allows us to define the complement of a pattern -- for any pattern at a type $A$, we can define another pattern that matches exactly the values of type $A$ the original does not}.
\cite{Krishnaswami2009} presents a translation that eliminates negation patterns, but this translation makes a closed-world assumption about all the available constructors for each data type.
For example, he translates the negated pair pattern $\patNeg{\tmPair{p_1}{p_2}}$ by the pattern $\patOr{\tmPair{\patNeg{p_1}}{\patWildcard}}{\tmPair{\patWildcard}{\patNeg{p_2}}}$, whereas we (in \cref{subsec:normalization:nnf}) translate it by the pattern $\patOr{\patNeg{\tmPair{\patWildcard}{\patWildcard}}}{\patOr{\tmPair{\patNeg{p_1}}{\patWildcard}}{\tmPair{\patWildcard}{\patNeg{p_2}}}}$.
He can omit the first disjunct of the or-pattern because he knows that the pair type only has one constructor.
If we extend his translation scheme to the list type, then we would translate the negation of the constructor-pattern $\text{Nil}$ to the pattern $\text{Cons}(\patWildcard,\patWildcard)$, since we know that the list type only has two constructors.
Because we encode negative information directly in patterns, we never have to rely on type information during our compilation process.
Negation patterns have also been introduced under the name of \enquote{anti-patterns} and \enquote{anti-pattern matching} by \cite{Kirchner2007,Kirchner2008, Kirchner2010}.
The formalism they present differs from ours in the following points:
They only allow constructor-patterns, variable patterns and negation patterns, which means that they cannot give a recursive translation for patterns involving negations, since that also requires or-patterns.
Their formalism is also more rooted in the theory of term rewriting systems instead of programming languages, and they do not present a compilation algorithm to simple patterns as we do.

\paragraph*{Pattern matching compilation algorithms}
A simple way to execute pattern matching expressions with nested patterns is to try one clause after another, until the first one matches.
This, however, is extremely inefficient if we have multiple clauses which are headed by the same outer constructor.
In that case, we are performing the same tests multiple times.
As an introduction to this problem, we can still recommend the book by \cite[Chapter. 5]{PeytonJones1987chapter5}.
There are many algorithms that compile nested pattern matches to efficient code, but most of them fall into one of two categories.
Nested pattern matching can be either compiled using backtracking automata \cite{Augustsson1985,LeFessant2001} or by compiling to decision trees \cite{Maranget2008}.
The main difference between these two approaches is a classical time vs. space trade-off.
Compiling to decision trees guarantees that every test is executed at most once, but can potentially increase the size of the generated code; backtracking automata do not increase the size of the code, but may perform some tests multiple times.
In \cref{sec:compilation} we present a version of the algorithm by \cite{Maranget2008} which compiles to decision trees.
Recently, Cheng and Parreaux \cite{Cheng2024ultimate} also proved the correctness of a pattern matching compilation algorithm for their very expressive pattern matching syntax; they, however, do not care about order-independence in the same way that we do.

\section{Conclusion}
\label{sec:conclusion}
Pattern matching is an extremely popular declarative programming construct.
We argued that if we want to make pattern matching even more declarative, then the order of the clauses in a program should not matter.
But if we require that two patterns must not overlap, and if we want to avoid overly verbose pattern matching expressions, then we have to make the language of patterns and pattern matching more expressive.
We introduced two complementary features, a boolean algebra of patterns and default clauses, which together solve the verbosity problem.
We have provided the operational and static semantics of these constructs, and have shown that they can be compiled to efficient code.

\section*{Data-Availability Statement}
\label{sec:data-availability}
The theorems in \cref{sec:pattern-algebra} have been formalized and checked in the proof assistant Rocq.
These proofs are available as related material.

\bibliography{bibliography/bibliography.bib, bibliography/ownpublications.bib}

\appendix
\section{Computing Normalized Elementary Conjuncts}
\label{sec:appendix-normalization}
In this section we show how to compute the normalized disjunctive normal form introduced in \cref{def:normalization:ndnf}.
The function $\normalize{\cdot}$ takes an elementary conjunct (cp.~\cref{def:normalization:dnf}) and computes the corresponding normalized elementary conjunct.
It is defined by the following clauses:

\begin{align*}
    \normalize{x} &\coloneqq \patAnd{\{ x \}}{\patNeg{\lbrace \rbrace}} \\
    \normalize{\patWildcard} &\coloneqq \patAnd{\{ \}}{\patNeg{\lbrace \rbrace}} \\
    \normalize{\patBot} &\coloneqq \patAnd{\{ \}}{\patBot} \\
    \normalize{\patNeg{x}} &\coloneqq \patAnd{\{ \}}{\patBot} \\
    \normalize{\patNeg{\mathcal{C}^n}} &\coloneqq \patAnd{\{ \}}{\patNeg{\lbrace \mathcal{C}^n\rbrace}} \\
    \normalize{\mathcal{C}^n(K_1,\ldots,K_n)} &\coloneqq \patAnd{\lbrace \rbrace}{\mathcal{C}^n(\normalize{K_1},\ldots,\normalize{K_n})} \\
    \normalize{\patAnd{K_1}{K_n}} &\coloneqq \combine{\normalize{K_1}}{\normalize{K_2}}
\end{align*}

The function $\normalize{\cdot}$ uses the auxiliary function $\combine{\cdot\,}{\cdot}$ which merges two normalized elementary conjuncts into one.
Merging any normalized elementary conjunct with an unsatisfiable one results in a new unsatisfiable conjunct:

\begin{align*}
    \combine{\patAnd{S_1}{\patBot}}{\patAnd{S_2}{\patBot}} &\coloneqq \patAnd{S_1 \cup S_2}{\patBot} \\
    \combine{\patAnd{S_1}{\patBot}}{\patAnd{S_2}{\mathcal{C}^n(\overline{K}_1,\ldots,\overline{K}_n)}} &\coloneqq \patAnd{S_1 \cup S_2}{\patBot} \\
    \combine{\patAnd{S_1}{\patBot}}{\patAnd{S_2}{\patNeg{\{\ldots \}}}} &\coloneqq \patAnd{S_1 \cup S_2}{\patBot}
\end{align*}

We combine two negative conjuncts by computing the union of the sets of variables $S_1$ and $S_2$ and the sets of constructors $cs_1$ and $cs_2$ we don't match against:
\begin{align*}
    \combine{\patAnd{S_1}{\patNeg{\lbrace cs_1 \rbrace}},\patAnd{S_2}{\patNeg{\lbrace cs_2\rbrace}}} &\coloneqq \patAnd{S_1 \cup S_2}{\patNeg{\lbrace cs_1 \cup cs_2 \rbrace}}
\end{align*}
When we combine a positive conjunct with a negative one we have to check whether the constructor of the positive conjunct appears in the negated set of constructors:
\begin{align*}
    \combine{\patAnd{S_1}{\mathcal{C}^n(\overline{K}_1,\ldots,\overline{K}_n)}}{\patAnd{S_2}{\patNeg{\lbrace cs \rbrace }}} &\coloneqq \begin{cases} \patAnd{S_1 \cup S_2}{\patBot} &(\text{if $\mathcal{C}^n \in cs$})\\ \patAnd{S_1 \cup S_2}{\mathcal{C}^n(\overline{K}_1,\ldots,\overline{K}_n)} &(\text{otherwise})\end{cases}
\end{align*}

Lastly, when we combine to positive elementary conjuncts we have to check whether they match against the same constructor or not:

\begin{align*}
    \combine{\patAnd{S_1}{\mathcal{C}_1^n(\overline{K}_1,\ldots,\overline{K}_n)}}{\patAnd{S_1}{\mathcal{C}_2^m(\overline{K}'_1,\ldots,\overline{K}'_m)}} \coloneqq\\
    \begin{cases} \patAnd{S_1 \cup S_2}{\mathcal{C}_1^n(\combine{\overline{K}_1}{\overline{K}'_1},\ldots,\combine{\overline{K}_n}{\overline{K}'_n})} & (\text{if $\mathcal{C}_1 = \mathcal{C}_2, n = m$}) \\ \patAnd{S_1 \cup S_2}{\patBot} & (\text{otherwise})\end{cases}
\end{align*}

\section{Checking Exhaustiveness}
\label{sec:appendix-exhaustiveness}
We shortly discuss how we can check a pattern matching matrix in disjunctive normal form (see \cref{subsec:normalization:conjuncts}) for exhaustiveness. 
This algorithm is based on the discussion of the useful clause problem by \cite{Maranget2007}.
We adopt his definition of a \textit{useful} pattern to state a notion for \textit{exhaustiveness}\footnote{Maranget uses an alternative definition for exhaustiveness but shows equivalency between his own initial definition and our adopted formulation via usefulness.}.

\begin{definition}[Usefulness]
    We assume a pattern matrix $P$ and a pattern vector $\vec{p}$, both in normalized DNF as introduced in \cref{subsec:normalization:conjuncts}.
    $$
    \begin{aligned}
        P = \begin{bmatrix}
            \overline{D}_1^1 & \ldots \overline{D}_n^1 \\
            \vdots & \vdots \\
            \overline{D}_1^m & \ldots \overline{D}_n^m
            \end{bmatrix} && \vec{p} = (p_1,\ldots, p_n)
    \end{aligned}
    $$
    We say that the pattern vector $\vec{p}$ is \textit{useful} for matrix $P$, if 
    there exists a value vector $\vec{v}$ and substitution vector $\vec{\sigma}$, such that $\vec{p}$ matches $\vec{v}$ and $P$ does \textit{not} match $\vec{v}$ under the substitutions $\vec{\sigma}$.
    For this, we extend the notion of \enquote{matching} for vectors $\vec{p}$ as we did in \cref{subsec:compilation:correctness} and for a matrix $P$ as is natural.
    $$
    \begin{aligned}
        & \matches{\vec{p}}{\vec{v}}{\vec{\sigma}} := \bigwedge\limits_{i = 1}^n \matches{p_i}{v_i}{\sigma_i} & \ \ &
        \matches{P}{\vec{v}}{\vec{\sigma}} := \bigvee\limits_{j = 1}^m \matches{\vec{P^j}}{\vec{v}}{\vec{\sigma}}\\
        &\matchesNot{\vec{p}}{\vec{v}}{\vec{\sigma}} := \bigvee\limits_{i = 1}^n \matchesNot{p_i}{v_i}{\sigma_i} & \ \ &
        \matchesNot{P}{\vec{v}}{\vec{\sigma}} := \bigwedge\limits_{j = 1}^m \matchesNot{\vec{P^j}}{\vec{v}}{\vec{\sigma}} 
    \end{aligned}
    $$
    We abbreviate the useful clause problem by the formula $\mathcal{U}(P, \vec{p}) := \exists \ \vec{v}, \vec{\sigma}, \ \matches{\vec{p}}{\vec{v}}{\vec{\sigma}} \land \matchesNot{P}{\vec{v}}{\vec{\sigma}}$.
\end{definition}

\begin{definition}[Exhaustiveness]
    Pattern matrix $P$ is exhaustive, if $\mathcal{U}(P, (\_, \ldots \_))$ is false.
\end{definition}

The computation of the useful clause problem has structural similarities to the compilation algorithm of \cref{sec:compilation}, as we will again build specialized (see \cref{subsubsec:compilation:branch:specialization}) and default (see \cref{subsubsec:compilation:branch:default}) versions of pattern matrices.
For simplicity's sake, we will only consider the strategy of selecting the first column ($i = 1$) and restate the construction rules for both the default matrix $\begin{bmatrix} D\end{bmatrix}(P)$ and a specialization
$\begin{bmatrix}S_{C^k} \end{bmatrix}(P)$ by constructor $C^k$ for the pattern matrix $P$ in \cref{table:B1}.
\begin{table}[t]
    \renewcommand{\arraystretch}{2}
    \resizebox{\textwidth}{!}{
    \begin{tabular}{|l||l|l|}
     \hline \hline
        pattern $P^j_1$ of $P$'s first column & add row to $\renewcommand{\arraystretch}{1}\begin{bmatrix}S_{C^k} \end{bmatrix}(P)$ & add row to $\renewcommand{\arraystretch}{1}\begin{bmatrix}D \end{bmatrix}(P)$\\ \hline \hline 
     $\parallel\lbrace \overline{K}_1, \ldots, \overline{K}_s \rbrace$ & 
     $\renewcommand{\arraystretch}{1}\begin{bmatrix}S_{C^k} \end{bmatrix}\left(\begin{bmatrix}
        \mathbf{\overline{K}}_1 & \overline{D}_2^j \ldots \overline{D}_{n}^j \\
        & \vdots \\
        \mathbf{\overline{K}}_s & \overline{D}_2^j \ldots \overline{D}_{n}^j
    \end{bmatrix}\right)$ & 
    $\renewcommand{\arraystretch}{1}
     \begin{bmatrix}D \end{bmatrix}\left(
        \begin{bmatrix}
        \mathbf{\overline{K}}_1 & \overline{D}_2^j \ldots \overline{D}_{n}^j \\
        &\vdots \\
        \mathbf{\overline{K}}_s & \overline{D}_2^j \ldots \overline{D}_{n}^j
    \end{bmatrix}\right)$ \\ \hline
     $\patAnd{\lbrace x_1,\ldots,x_s \rbrace}{\mathcal{C}^k(\overline{K}_1,\ldots, \overline{K}_k)}$ & 
        $\renewcommand{\arraystretch}{1}\begin{bmatrix}
        \overline{K}_1 \ldots \overline{K}_k \hspace{4mm} 
        \overline{D}_{2}^j \ldots \ \overline{D}_{n}^j
        \end{bmatrix}$ 
        & \textit{none}  \\ \hline

     $\patAnd{\lbrace x_1,\ldots, x_s \rbrace}{\patNeg{\lbrace \mathcal{C}'_1,\ldots,\mathcal{C}'_t \rbrace}}$ 
     & $\renewcommand{\arraystretch}{1}\hspace{-1.1mm}\begin{aligned} 
        &\begin{bmatrix}
            \_ \ \ \ldots \ \ \_ \hspace{4.8mm} 
            \overline{D}_{2}^j \ldots \ \overline{D}_{n}^j
             \end{bmatrix} \\ 
        &\textbf{ only if } C^k \not\in \lbrace \mathcal{C}'_1,\ldots,\mathcal{C}'_t \rbrace
        \end{aligned}$
        &  $\renewcommand{\arraystretch}{1}\begin{aligned} 
            \begin{bmatrix}
                \_ \ \ \ldots \ \ \_ \hspace{4.8mm} 
                \overline{D}_{2}^j \ldots \ \overline{D}_{n}^j
                 \end{bmatrix} \\ \ 
            \end{aligned}$ \\ \hline

     $\patAnd{\{ x_1,\ldots,x_s \}}{\patBot}$ & \textit{none}
     & \textit{none} \\ \hline \hline
    \end{tabular}}
    \caption{Construction of $C^k$-constructor spec. and default matrix for the pattern matrix $P$.}
    \label{table:B1}
\end{table}

Let's give the computation of $\mathcal{U}(P, \vec{p})$. If $P$ has no rows or no columns left, we have a base case. For the former we denote $P = \emptyset$ and for the latter we give $P$ as an empty matrix $P = \begin{bmatrix}\ \end{bmatrix}$. 
$$
\begin{aligned}
    \mathcal{U}(\begin{bmatrix}\ \end{bmatrix}, ()) := false & &
    \mathcal{U}(\emptyset, ()) := true 
\end{aligned}.
$$
In any other case, we proceed recursively according to the cases of the first pattern $p_1$. 

\begin{enumerate}
    \item For $p_1 = \patAnd{\lbrace x_1,\ldots,x_u \rbrace}{\mathcal{C}^k(\overline{K}_1,\ldots, \overline{K}_k)}$ we specialize for constructor $\mathcal{C}^k$ and define
    $$
    \mathcal{U}(P, \vec{p}) := \mathcal{U}(\begin{bmatrix}S_{C^k} \end{bmatrix}(P), \begin{bmatrix}S_{C^k} \end{bmatrix}(\vec{p}))
    $$
    \item For $p_1 = \ \parallel\lbrace \overline{K}_1, \ldots, \overline{K}_u \rbrace$ we externalize the disjunction and define 
    $$
        \mathcal{U}(P, \vec{p}) := \bigvee \limits_{i = 1}^u \mathcal{U}(P, (\overline{K}_i, p_2 \ldots, p_n))
    $$
    \item The case of $p_1 = \patAnd{\lbrace x_1,\ldots,x_u \rbrace}{\patNeg{\{ C_1, \ldots, C_v\}}}$ is the most interesting, as it also contains $p_1 = \_ = \patAnd{\{\}}{\patNeg{\{\}}}$, as found at the initial computation for checking exhaustiveness. This is also the case, where our formulation must necessarily diverge from \cite{Maranget2007}'s version, which does not consider negations. \\
          This step is entirely dependent of the head constructors found in the first column $P_1$ of the pattern matrix $P$ (see the definition of head constructors in Section \ref{head}). In contrast to the pattern matching compilation, we need a more granular inspection and collect negated head constructors $\mathcal{H_{-}}(P_1)$ separately from \enquote{positive} head constructors $\mathcal{H_{+}}(P_1)$. 
            \begin{enumerate}
                \item For $\mathcal{H_{-}}(P_1) = \emptyset$ we can follow \cite{Maranget2007}'s standard computation and differentiate cases according to whether $\mathcal{H_{+}}(P_1)$ gives a complete type signature. 
                \begin{enumerate}
                    \item If $\mathcal{H_{+}}(P_1) = \{C'_1,\ldots, C'_z\} $ is complete, we form a disjunction over all specializations. 
                          $$
                          \mathcal{U}(P, \vec{p}) := \bigvee \limits_{i = 1}^{z} \mathcal{U}\left(\begin{bmatrix}S_{C'_i}
                          \end{bmatrix}(P), \begin{bmatrix}S_{C'_i}
                          \end{bmatrix}(\vec{p})\right)
                          $$
                    \item If $\mathcal{H_{+}}(P_1) = \{C'_1,\ldots, C'_z\} $ is incomplete, we consider the constructors \textit{not} contained in the collection.
                    $$
                        \mathcal{U}(P, \vec{p}) :=  \mathcal{U}\left(\begin{bmatrix}D\end{bmatrix}(P), \begin{bmatrix}D\end{bmatrix}(\vec{p})\right)
                    $$
                \end{enumerate}
                \item For $\mathcal{H_{-}}(P_1) = \{C'_1, \ldots, C'_z\} \neq \emptyset$ we must consider
                exactly these negative head constructors for an exhaustiveness check. 
                    $$
                        \mathcal{U}(P, \vec{p}) := \bigvee \limits_{i = 1}^{z} \mathcal{U}\left(\begin{bmatrix}S_{C'_i}
                        \end{bmatrix}(P), \begin{bmatrix}S_{C'_i}
                        \end{bmatrix}(\vec{p})\right)
                    $$
            \end{enumerate}
\end{enumerate}

\subsection{Example}

We conclude this section by checking the example from \cref{subsec:compilation:example} for exhaustiveness:
$$
\begin{array}{llll}
\mathbf{case} \ x \ \mathbf{of} \{ 
	& \patAnd{y}{(\patOr{\tmSa}{\tmSu})} & \To \text{``Today is weekend!''},&\\
	& \patAnd{y}{\patNeg{(\patOr{\tmFr}{\patOr{\tmSa}{\tmSu}})}}  & \To \text{``Today is ''}\mdoubleplus\ \text{show}(y),&\\
	& \mathbf{default} & \To \text{``Tomorrow is weekend...''} &\ \}
\end{array}
$$
This time, we want to check whether the non-default clauses are already exhaustive. So, we check whether $p_1 := \_$ is a useful pattern for the pattern matrix $P$, which we have normalized into the following form:
$$
P := 
\begin{bmatrix*}[l]
	\patOr{}{}\{\{y\} \ \& \ \tmSa, \{y\} \ \& \ \tmSu \} \\
	\patOr{}{}\{\{y\} \ \& \ \patNeg{}\{\tmFr,\tmSa,\tmSu\}\} \\
\end{bmatrix*}
$$
To solve $ \mathcal{U}(P, (\_))$, we see that the pattern $p_1 = \_ = \patAnd{\{\}}{\patNeg{\{\}}}$ satisfies case (3) and we collect the negated head constructors $\mathcal{H_{-}}(P_1) = \{Fr, Sa, Su \}$. According to (3.b), we form a disjunction over specializations for each negated head constructor. Here, it suffices to look at the constructor $\tmFr$:
$$
\mathcal{U}(P, (\_)) = \hspace{-5mm}
\bigvee \limits_{C \in \{\tmFr, \tmSa, \tmSu \}} \hspace{-5mm}\mathcal{U}\left(\begin{bmatrix}S_{C}
\end{bmatrix}(P), \begin{bmatrix}S_{C}
\end{bmatrix}(\_)\right) = 
\mathcal{U}\left(\begin{bmatrix}S_{\tmFr}
\end{bmatrix}(P), \begin{bmatrix}S_{\tmFr}
\end{bmatrix}(\_)\right) = \mathcal{U}(\emptyset, ()) = \textit{true}
$$
This determines $P$ as non-exhaustive.

\section{Deciding Overlap}
\label{sec:appendix-overlap}
In this section, we give a decision procedure on whether two patterns in nDNF (see \cref{subsec:normalization:conjuncts}) $overlap$.  

\begin{enumerate}
    \item $overlap(\overline{D}_1, \overline{D}_2)$
    \begin{itemize}
        \item $overlap(\parallel \{ \overline{K}_1^1,\dots,\overline{K}_s^1 \}, \parallel \{ \overline{K}_1^2,\dots,\overline{K}_t^2 \}) := \bigvee\limits_{i=1}^s \bigvee\limits_{i'=1}^t overlap(\overline{K}^1_i, \overline{K}^2_{i'})$
    \end{itemize}
    \item $overlap(\overline{K}_1, \overline{K}_2)$
    \begin{itemize}
        \item $overlap(\mathbf{\patAnd{X}{C^s(\overline{K}^1_1,\dots,\overline{K}^1_s)}}, \patAnd{Y}{C^t(\overline{K}^2_1,\dots,\overline{K}^2_t)}) := $\\
        if $C^s \neq C^t$ then \textit{false} else $\bigwedge\limits_{i=1}^{s} \ overlap(\overline{K}^1_i,\overline{K}^2_i)$
        \item $overlap(\mathbf{\patAnd{X}{C^s(\overline{K}^1_1,\dots,\overline{K}^1_s)}}, \patAnd{Y}{\patNeg{\{C_1,\dots,C_t\}}}) := $\\
        if $C^s \in \{C_1,\dots,C_t\}$ then \textit{false} else $\bigwedge\limits_{i=1}^{s} \ overlap(\overline{K}^1_i, \patAnd{\{\}}{\patNeg{\{\}}})$
        \item $overlap(\mathbf{\patAnd{X}{C^s(\overline{K}^1_1,\dots,\overline{K}^1_s)}}, \patAnd{Y}{\patBot}) := \textit{false}$ \vspace{2mm}
        \item $overlap(\mathbf{\patAnd{X}{\patBot}}, \overline{K}_2) := \textit{false}$ \vspace{2mm}
        \item $overlap(\mathbf{\patAnd{X}{\patNeg{\{C_1,\dots,C_s\}}}}, \patAnd{Y}{\patNeg{\{C_1',\dots,C_t'\}}}) := \textit{true}$ \vspace{2mm}\\
        (Note that this is a conservative definition of overlap that does not require further type information. The less conservative alternative for static, known types would be \vspace{1mm}\\
        if $\{C_1,\dots,C_s\} \cup \{C_1',\dots,C_t'\}$ gives the whole type signature then \textit{false} else \textit{true}) \vspace{1mm}
        \item $overlap(\mathbf{\patAnd{X}{\patNeg{\{C_1,\dots,C_s\}}}},\patAnd{Y}{C^t(\overline{K}^2_1,\dots,\overline{K}^2_t)}) := $\vspace{-0.5mm}\\ $overlap(\patAnd{Y}{C^t(\overline{K}^2_1,\dots,\overline{K}^2_t)},\patAnd{X}{\patNeg{\{C_1,\dots,C_s\}}})$ \vspace{2mm}
        \item $overlap(\mathbf{\patAnd{X}{\patNeg{\{C_1,\dots,C_s\}}}},\patAnd{Y}{\patBot})$ := $overlap(\patAnd{Y}{\patBot},\patAnd{X}{\patNeg{\{C_1,\dots,C_s\}}})$
    \end{itemize}
\end{enumerate}

\section{Proofs}
\label{sec:appendix-proofs}
In this section we provide proofs which were omitted in the main part of the paper.

\subsection{Correctness of Negation Normal Forms}

This subsection contains the proofs of \cref{lem:nnf:correctness}.
All the proofs in this subsection rely on the fact that in order to compute the negation normal form of a pattern we only use the algebraic laws of \cref{thm:algebraic-equivalences:one} and \cref{thm:algebraic-equivalences:four}.
But first, we need some simple lemmas:
\begin{lemma}[Negation Normal Form Preserves Free Variables]
    For all patterns $p$, $\freevareven{p} = \freevareven{\nnf{p}}$ and $\freevarodd{p} = \freevarodd{\nnf{p}}$.
    \label{lem:appendix-proofs:free-vars}
\end{lemma}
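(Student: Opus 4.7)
The plan is to prove the lemma by structural induction on $p$, but because $\nnf{-}$ unfolds to the mutually recursive functions $\nnfpos{-}$ and $\nnfneg{-}$, a direct induction on the original statement will not go through: when we recurse through a negation, the positive and negative roles swap, and we need an induction hypothesis that reflects this swap. I would therefore strengthen the statement to the following four equalities, to be proved simultaneously by structural induction on $p$:
\begin{align*}
    \freevareven{\nnfpos{p}} &= \freevareven{p}, &
    \freevarodd{\nnfpos{p}} &= \freevarodd{p}, \\
    \freevareven{\nnfneg{p}} &= \freevarodd{p}, &
    \freevarodd{\nnfneg{p}} &= \freevareven{p}.
\end{align*}
The original lemma is then the first two equalities, since $\nnf{p} = \nnfpos{p}$.

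Most cases are routine unfolding. For the variable, wildcard and absurd cases the claim reduces to set equalities between singletons or the empty set, and the clauses defining $\nnfpos{-}$ and $\nnfneg{-}$ match the definitions of $\freevareven{-}$ and $\freevarodd{-}$ directly. For $\patNeg{p}$ the four equalities for $p$ suffice: e.g.\ $\freevareven{\nnfpos{\patNeg{p}}} = \freevareven{\nnfneg{p}} = \freevarodd{p} = \freevareven{\patNeg{p}}$ by the induction hypothesis. The and- and or-cases reduce to unions of the corresponding sets for the subpatterns.

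The only genuinely delicate case is the constructor clause of $\nnfneg{-}$, where
\[
\nnfneg{\mathcal{C}^n(p_1,\ldots,p_n)} = \patOr{\patNeg{\mathcal{C}^n}}{\patOr{\mathcal{C}^n(\nnfneg{p_1},\ldots,\patWildcard)}{\patOr{\ldots}{\mathcal{C}^n(\patWildcard,\ldots,\nnfneg{p_n})}}}.
\]
Here I would compute $\freevareven$ and $\freevarodd$ of the right-hand side explicitly: the leading disjunct $\patNeg{\mathcal{C}^n}$ contributes no free variables under either count (since all its subpatterns are wildcards), and for each $i$ the disjunct $\mathcal{C}^n(\patWildcard,\ldots,\nnfneg{p_i},\ldots,\patWildcard)$ contributes $\freevareven{\nnfneg{p_i}}$ to $\freevareven{-}$ and $\freevarodd{\nnfneg{p_i}}$ to $\freevarodd{-}$. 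Applying the induction hypothesis gives $\bigcup_i \freevarodd{p_i} = \freevarodd{\mathcal{C}^n(p_1,\ldots,p_n)}$ and $\bigcup_i \freevareven{p_i} = \freevareven{\mathcal{C}^n(p_1,\ldots,p_n)}$, which are exactly the desired equalities.

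I expect this constructor case to be the main obstacle, since it is the only place where negation pushing changes the syntactic shape in a nontrivial way; the rest is careful bookkeeping of which free-variable function (even or odd) corresponds to which branch of the mutual recursion. Once the strengthened four-part invariant is in place, the induction goes through smoothly.
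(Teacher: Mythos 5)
Your proposal is correct and follows the same route as the paper, which dispatches this lemma with a one-line ``by simple induction on $p$''; the four-part strengthened invariant relating $\freevareven{\nnfpos{p}}, \freevarodd{\nnfpos{p}}$ to $\freevareven{p}, \freevarodd{p}$ and $\freevareven{\nnfneg{p}}, \freevarodd{\nnfneg{p}}$ to $\freevarodd{p}, \freevareven{p}$ is exactly the generalization needed to make that induction go through the mutual recursion, and your treatment of the negated-constructor clause checks out. Nothing to add.
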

\begin{proof}
  By simple induction on $p$.
\end{proof}

\begin{lemma}[De Morgan Preserves Linearity]
    For all patterns $p_1$ and $p_2$, we have
    \begin{align*}
        \linearpos{\patNeg{(\patOr{p_1}{p_2})}}\quad &\Rightarrow\quad \linearpos{\patAnd{\patNeg{p_1}}{\patNeg{p_2}}} \\
        \linearneg{\patNeg{(\patOr{p_1}{p_2})}}\quad &\Rightarrow\quad \linearneg{\patAnd{\patNeg{p_1}}{\patNeg{p_2}}} \\
        \linearpos{\patNeg{(\patAnd{p_1}{p_2})}}\quad &\Rightarrow\quad \linearpos{\patOr{\patNeg{p_1}}{\patNeg{p_2}}} \\
        \linearneg{\patNeg{(\patAnd{p_1}{p_2})}}\quad &\Rightarrow\quad \linearneg{\patOr{\patNeg{p_1}}{\patNeg{p_2}}} \\
        \linearpos{\patNeg{\patNeg{p}}}\quad &\Rightarrow\quad \linearpos{p} \\
        \linearneg{\patNeg{\patNeg{p}}}\quad &\Rightarrow\quad \linearneg{p}
    \end{align*}
    \label{lem:appendix-proofs:de-morgan}
\end{lemma}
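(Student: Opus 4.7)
The plan is to proceed by a straightforward case analysis on the six implications. Each case follows the same template: invert the linearity rules at the outermost positions of the hypothesis to extract the underlying premises on $p_1$ and $p_2$, then re-apply the appropriate linearity rules in the opposite order to build up the conclusion, finally checking that the side conditions on free variables transpose correctly.

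The observation that drives every case is the definitional involution of the stratified free-variable functions under negation: reading off \cref{fig:pattern-algebra:syntax} directly gives $\freevareven{\patNeg{p}} = \freevarodd{p}$ and $\freevarodd{\patNeg{p}} = \freevareven{p}$. This is exactly what makes the disjointness and equality side conditions of the \textsc{L-And}$^\pm$ and \textsc{L-Or}$^\pm$ rules transpose cleanly when negations are pushed inward through De Morgan.

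As a representative case, consider the first implication. I would first invert \textsc{L-Neg}$^+$ on the hypothesis $\linearpos{\patNeg{(\patOr{p_1}{p_2})}}$ to obtain $\linearneg{\patOr{p_1}{p_2}}$, and then invert \textsc{L-Or}$^-$ to obtain $\linearneg{p_1}$, $\linearneg{p_2}$, and $\freevarodd{p_1} \cap \freevarodd{p_2} = \emptyset$. The conclusion is then assembled in the opposite direction: apply \textsc{L-Neg}$^+$ to each subpattern to obtain $\linearpos{\patNeg{p_1}}$ and $\linearpos{\patNeg{p_2}}$, and then apply \textsc{L-And}$^+$. Its side condition $\freevareven{\patNeg{p_1}} \cap \freevareven{\patNeg{p_2}} = \emptyset$ reduces, by the involution above, to exactly the disjointness already in hand. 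The three remaining De-Morgan-style implications follow the same script with and- and or-patterns swapped and with the polarities flipped; the two double-negation implications collapse after two successive inversions of \textsc{L-Neg}$^\pm$.

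I do not anticipate any real obstacle. The proof is almost entirely bureaucratic, and its substantive content is confined to the single observation that negation acts as an involution on the stratified free-variable functions. The only point that requires care is tracking which polarity of the linearity judgment applies after each inversion, but this is completely determined by the top-level structure of the patterns in each implication.
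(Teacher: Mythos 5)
Your proposal matches the paper's own proof essentially step for step: invert \textsc{L-Neg}$^+$ and \textsc{L-Or}$^-$ on the hypothesis, reassemble with \textsc{L-Neg}$^+$ and \textsc{L-And}$^+$, and discharge the side condition via the definitional identities $\freevareven{\patNeg{p}} = \freevarodd{p}$ and $\freevarodd{\patNeg{p}} = \freevareven{p}$, with the remaining cases analogous and the double-negation cases trivial. The approach is correct and identical to the paper's.
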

\begin{proof}
    We only prove the first two lines, the proofs for the remaining cases are analogous or trivial in the case of double negations.
    \begin{enumerate}
        \item If $\linearpos{\patNeg{(\patOr{p_1}{p_2})}}$ then $\linearpos{\patAnd{\patNeg{p_1}}{\patNeg{p_2}}}$:
        \begin{itemize}
            \item $\linearneg{\patOr{p_1}{p_2}}$ via \textsc{L-Neg}$^+$
            \item $\linearneg{p_1}$ and $\linearneg{p_2}$ and $\freevarodd{p_1} \cap \freevarodd{p_2} = \emptyset$ via \textsc{L-Or}$^-$
            \item $\linearpos{\patNeg{p_1}}$ and $\linearpos{\patNeg{p_2}}$ and $\freevareven{\patNeg{p_1}} \cap \freevareven{\patNeg{p_2}} = \emptyset$
            \item $\linearpos{\patAnd{\patNeg{p_1}}{\patNeg{p_2}}}$ via \textsc{L-And}$^+$
        \end{itemize}
        \item If $\linearneg{\patNeg{(\patOr{p_1}{p_2})}}$ then $\linearneg{\patAnd{\patNeg{p_1}}{\patNeg{p_2}}}$:
        \begin{itemize}
            \item $\linearpos{\patOr{p_1}{p_2}}$ via \textsc{L-Neg}$^-$
            \item $\linearpos{p_1}$ and $\linearpos{p_2}$ and $\freevareven{p_1} = \freevareven{p_2}$ via \textsc{L-Or}$^+$
            \item $\linearneg{\patNeg{p_1}}$ and $\linearneg{\patNeg{p_2}}$ and $\freevarodd{\patNeg{p_1}} = \freevarodd{\patNeg{p_2}}$
            \item $\linearneg{\patAnd{\patNeg{p_1}}{\patNeg{p_2}}}$ via \textsc{L-And}$^-$
        \end{itemize}
    \end{enumerate}
\end{proof}

\begin{lemma}[Negated Constructor Preserves Linearity]
    For all patterns $p_1,\ldots,p_n$, the following implication holds:
    \begin{equation*}
        \linearpos{\patNeg{\mathcal{C}^n(p_1,\ldots,p_n)}}
        \quad \Rightarrow \quad
        \linearpos{\patOr{\patNeg{\mathcal{C}^n}}{\patOr{\mathcal{C}^n(\patNeg{p_1},\ldots,\patWildcard)}{\patOr{\ldots}{\mathcal{C}^n(\patWildcard,\ldots,\patNeg{p_n})}}}}
    \end{equation*}
    \label{lem:appendix-proofs:negated-constructor}
\end{lemma}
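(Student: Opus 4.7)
The plan is to invert the premise using the linearity rules, observe that it forces every $\freevarodd{p_i}$ to be empty, and then rebuild the right-hand side one connective at a time, checking each linearity side-condition.

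First I would apply the rule \textsc{L-Neg}$^+$ in reverse to turn the hypothesis $\linearpos{\patNeg{\mathcal{C}^n(p_1,\ldots,p_n)}}$ into $\linearneg{\mathcal{C}^n(p_1,\ldots,p_n)}$, and then invert \textsc{L-Ctor}$^-$ to extract the two facts that will drive the rest of the argument: (i) $\linearneg{p_i}$ for every $i$, and (ii) $\bigcup_{i=1}^{n}\freevarodd{p_i} = \emptyset$. The second fact is the key leverage point: since the union is empty, each individual $\freevarodd{p_i}$ is empty, which in turn means $\freevareven{\patNeg{p_i}} = \emptyset$ by definition of the free-variable functions.

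Next I would verify $\linearpos{}$ of each disjunct on the right-hand side separately. The leading disjunct $\patNeg{\mathcal{C}^n}$, read as $\patNeg{\mathcal{C}^n(\patWildcard,\ldots,\patWildcard)}$, is $\linearpos$ via \textsc{L-Neg}$^+$ followed by \textsc{L-Ctor}$^-$, with all the trivial side-conditions on wildcards. For the $i$-th remaining disjunct $\mathcal{C}^n(\patWildcard,\ldots,\patNeg{p_i},\ldots,\patWildcard)$, I would apply \textsc{L-Ctor}$^+$: the wildcards are handled by \textsc{L-Wild}$^{\pm}$, and $\linearpos{\patNeg{p_i}}$ follows from \textsc{L-Neg}$^+$ using the previously extracted $\linearneg{p_i}$. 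The pairwise disjointness of the $\freevareven{}$ sets of the constructor's children is automatic because every child has empty $\freevareven{}$ (wildcards by definition, and $\patNeg{p_i}$ by the observation in the previous paragraph).

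Finally, I would combine the disjuncts using \textsc{L-Or}$^+$ repeatedly (associating to the right as the statement displays). Its nontrivial side-condition $\freevareven{q_1}=\freevareven{q_2}$ is, in each step, the equation $\emptyset = \emptyset$, since every one of the $n+1$ disjuncts has empty $\freevareven{}$. I do not foresee a real obstacle here; the main thing to be careful about is using the correct free-variable identities ($\freevareven{\patNeg{p}} = \freevarodd{p}$) and not confusing the polarity-swapping \textsc{L-Neg}$^\pm$ rules. If desired, the argument over the $n$-fold disjunction can be made fully rigorous by a trivial induction on $n$; alternatively one could first establish a small helper lemma stating that $\linearposneg{\patOr{q_1}{q_2}}$ holds whenever both disjuncts are $\linearposneg$ with empty $\freevareven{}$ and empty $\freevarodd{}$, which would streamline both this proof and the companion statement for $\linearneg{}$.
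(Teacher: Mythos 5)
Your proposal is correct and follows essentially the same route as the paper's proof: invert \textsc{L-Neg}$^+$ and \textsc{L-Ctor}$^-$ to obtain $\linearneg{p_i}$ and $\freevarodd{p_i}=\emptyset$, establish $\linearpos{}$ of each disjunct, and close with \textsc{L-Or}$^+$ using the fact that every disjunct has empty $\freevareven{}$. Your justification of the \textsc{L-Ctor}$^+$ side-condition (every child has empty $\freevareven{}$) is in fact slightly more explicit than the paper's remark that only one argument can contain variables.
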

\begin{proof}
    We reason stepwise:
    \begin{itemize}
        \item $\linearneg{\mathcal{C}^n(p_1,\ldots,p_n)}$ via \textsc{L-Neg}$^+$
        \item $\linearneg{p_i}$ and $\freevarodd{p_i} = \emptyset$ via \textsc{L-Ctor}$^-$
        \item $\linearpos{\patNeg{\mathcal{C}^n}}$ and $\linearpos{\patNeg{p_i}}$ are immediate from the rules.
        \item Each of $\linearpos{\mathcal{C}^n(\patWildcard,\ldots, \patNeg{p_i},\ldots,\patWildcard)}$ follows because there is only one argument which might contain variables, so the precondition of \textsc{L-Ctor}$^+$ is trivially fulfilled.
        \item $\linearpos{\patOr{\patNeg{\mathcal{C}^n}}{\patOr{\mathcal{C}^n(\patNeg{p_1},\ldots,\patWildcard)}{\patOr{\ldots}{\mathcal{C}^n(\patWildcard,\ldots,\patNeg{p_n})}}}}$ follows via \textsc{L-Or}$^+$, because the set of free variables under an even number of negations is the empty set in each of the disjuncts.
    \end{itemize}
\end{proof}

\begin{theorem}[Negation Normal Form Preserves Linearity]
    For all patterns $p$, if $\linearpos{p}$ then $\linearpos{\nnf{p}}$.
    \label{lem:appendix-proofs:linearity-preservation}
\end{theorem}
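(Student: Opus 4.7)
The plan is to strengthen the statement and prove it by mutual induction on the pattern $p$, showing simultaneously that (i) if $\linearpos{p}$ then $\linearpos{\nnfpos{p}}$, and (ii) if $\linearneg{p}$ then $\linearpos{\nnfneg{p}}$. The original theorem then follows from (i) together with the identity $\nnf{p} = \nnfpos{p}$. To make the induction go through smoothly, I will first upgrade \cref{lem:appendix-proofs:free-vars} to a polarity-sensitive invariant, namely $\freevareven{\nnfpos{p}} = \freevareven{p}$, $\freevarodd{\nnfpos{p}} = \freevarodd{p}$, $\freevareven{\nnfneg{p}} = \freevarodd{p}$, and $\freevarodd{\nnfneg{p}} = \freevareven{p}$. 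This polarity-swapping fact under $\nnfneg$ is exactly what is needed to transport the free-variable side-conditions through normalisation.

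Most inductive cases are routine. For $x$, $\patWildcard$, $\patBot$, and $\patNeg{p'}$ the outputs of $\nnfpos$ and $\nnfneg$ are either immediately positively linear from the axioms or reduce directly to the induction hypothesis in the opposite polarity via the defining equations $\nnfpos{\patNeg{p'}} = \nnfneg{p'}$ and $\nnfneg{\patNeg{p'}} = \nnfpos{p'}$. For $\patOr{p_1}{p_2}$, statement (i) is closed by \textsc{L-Or}$^+$ using the induction hypotheses on $p_1, p_2$ together with $\freevareven{\nnfpos{p_i}} = \freevareven{p_i}$. For statement (ii) I exploit that $\nnfneg{\patOr{p_1}{p_2}} = \patAnd{\nnfneg{p_1}}{\nnfneg{p_2}}$: the disjointness premise $\freevarodd{p_1} \cap \freevarodd{p_2} = \emptyset$ extracted from \textsc{L-Or}$^-$ becomes exactly the disjointness of $\freevareven{\nnfneg{p_i}}$ required by \textsc{L-And}$^+$. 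The case $\patAnd{p_1}{p_2}$ is handled by the De Morgan dual argument, and this is also where \cref{lem:appendix-proofs:de-morgan} conceptually supports the soundness of the case split.

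The main obstacle is the constructor case for statement (ii), showing that the big disjunction $\nnfneg{\mathcal{C}^n(p_1,\ldots,p_n)}$ is positively linear. The structure of the argument essentially repeats the reasoning of \cref{lem:appendix-proofs:negated-constructor}: inverting $\linearneg{\mathcal{C}^n(p_1,\ldots,p_n)}$ via \textsc{L-Ctor}$^-$ yields not only $\linearneg{p_i}$ for each $i$, but also $\bigcup_i \freevarodd{p_i} = \emptyset$, forcing each $\freevarodd{p_i}$ individually to be empty. The induction hypothesis then delivers $\linearpos{\nnfneg{p_i}}$, and the polarity-tracking invariant gives $\freevareven{\nnfneg{p_i}} = \freevarodd{p_i} = \emptyset$. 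Consequently every disjunct of the expansion is positively linear by \textsc{L-Ctor}$^+$, because at any disjunct at most one argument position is non-wildcard and the intersection side-condition is therefore vacuous. Since moreover every disjunct has empty $\freevareven{\cdot}$, iterated uses of \textsc{L-Or}$^+$ conclude the case. The positive constructor case for (i) is the easy branch: apply the induction hypothesis to each $p_i$, transfer the disjointness hypothesis via the free-variable invariant, and invoke \textsc{L-Ctor}$^+$.
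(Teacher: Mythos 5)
Your proof is correct, but it is organized differently from the paper's. The paper treats $\nnf{-}$ as iterated application of local rewrite steps and discharges linearity preservation by two auxiliary lemmas, one saying that a single De Morgan (or double-negation) rewrite preserves linearity and one saying the same for the expansion of a negated constructor; the main proof then just observes that the algorithm only ever performs these rewrites. You instead follow the recursive definition of $\nnfpos{-}$ and $\nnfneg{-}$ directly, with a mutual structural induction on $p$ proving $\linearpos{p} \Rightarrow \linearpos{\nnfpos{p}}$ and $\linearneg{p} \Rightarrow \linearpos{\nnfneg{p}}$ simultaneously, supported by the polarity-swapping invariant $\freevareven{\nnfneg{p}} = \freevarodd{p}$ and $\freevarodd{\nnfneg{p}} = \freevareven{p}$. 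The mathematical content of your hard cases coincides with the paper's two lemmas (your or/and cases are its De Morgan lemma; your negated-constructor case is its constructor lemma, including the crucial inversion of \textsc{L-Ctor}$^-$ forcing each $\freevarodd{p_i} = \emptyset$ so that every disjunct of the expansion has empty $\freevareven{-}$ and \textsc{L-Or}$^+$ applies). What your route buys is rigor: the paper's argument implicitly identifies the recursive function with contextual rewriting and leans on the free-variable lemma only in its top-level form $\freevareven{\nnf{p}} = \freevareven{p}$, whereas your strengthened, polarity-indexed invariant is exactly what is needed for the induction to close and is the form one would need in a mechanized proof. What the paper's route buys is modularity: its two lemmas are reusable statements about semantic-equivalence-preserving rewrites independent of the particular normalization strategy.
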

\begin{proof}
    We can first apply \cref{lem:appendix-proofs:de-morgan} whenever the algorithm pushes negations inside until they disappear or remain before a constructor headed by a single negation.
    In that case we can use \cref{lem:appendix-proofs:negated-constructor}.
\end{proof}

\begin{lemma}[De Morgan Preserves Typing]
    For all types $\tau$, typing contexts $\Gamma$ and $\Delta$, and patterns $p, p_1$ and $p_2$:
    \begin{align*}
        \patTyping{\Gamma}{\Delta}{\patNeg{(\patOr{p_1}{p_2})}}{\tau}\quad &\Leftrightarrow\quad \patTyping{\Gamma}{\Delta}{\patAnd{\patNeg{p_1}}{\patNeg{p_2}}}{\tau} \\
        \patTyping{\Gamma}{\Delta}{\patNeg{(\patAnd{p_1}{p_2})}}{\tau}\quad &\Leftrightarrow\quad \patTyping{\Gamma}{\Delta}{\patOr{\patNeg{p_1}}{\patNeg{p_2}}}{\tau} \\
        \patTyping{\Gamma}{\Delta}{\patNeg{\patNeg{p}}}{\tau}\quad &\Leftrightarrow\quad \patTyping{\Gamma}{\Delta}{p}{\tau}
    \end{align*}
    \label{lem:appendix-proofs:demorgan-typing}
\end{lemma}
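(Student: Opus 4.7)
The plan is to prove each of the three equivalences by straightforward inversion of pattern typing derivations followed by re-derivation using the same set of rules. Because the pattern typing system has no subsumption or structural rules, every pattern form admits a unique applicable rule, so inversion on P-Neg, P-Or, P-And is essentially extracting the premises verbatim (with the context split $\Gamma = \Gamma_1 \mdoubleplus \Gamma_2$ arising from P-And, and $\Delta = \Delta_1 \mdoubleplus \Delta_2$ from P-Or, determined by the subderivations).

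For the first equivalence, the forward direction proceeds as follows. From $\patTyping{\Gamma}{\Delta}{\patNeg{(\patOr{p_1}{p_2})}}{\tau}$, inverting P-Neg yields $\patTyping{\Delta}{\Gamma}{\patOr{p_1}{p_2}}{\tau}$; inverting P-Or then produces a split $\Gamma = \Gamma_1 \mdoubleplus \Gamma_2$ with $\patTyping{\Delta}{\Gamma_1}{p_1}{\tau}$ and $\patTyping{\Delta}{\Gamma_2}{p_2}{\tau}$. Applying P-Neg to each of these gives $\patTyping{\Gamma_i}{\Delta}{\patNeg{p_i}}{\tau}$, and a final application of P-And delivers $\patTyping{\Gamma_1 \mdoubleplus \Gamma_2}{\Delta}{\patAnd{\patNeg{p_1}}{\patNeg{p_2}}}{\tau}$, which is the desired judgement. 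The backward direction simply runs this argument in reverse. The second equivalence is dual: the $\Delta$-context carries the concatenation from P-Or across the negation to become the $\Gamma$-context of the dual P-And rule, with the roles of P-Or and P-And interchanged. The third equivalence is immediate since two applications of P-Neg swap the contexts twice.

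The underlying observation that makes all three work is the precise symmetry built into the typing rules: P-Neg swaps positive and negative contexts; P-Or concatenates on $\Delta$ and shares on $\Gamma$; P-And does the opposite. The De Morgan equivalences at the level of pattern syntax correspond exactly to the dualities between these rules, so context bookkeeping commutes with negation on the nose, without any rearrangement of $\mdoubleplus$.

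I do not anticipate a substantive obstacle; the lemma is a direct consequence of the definitional duality of the typing rules. The only minor point to state cleanly is that inversion of P-And (respectively P-Or) yields a specific split $\Gamma = \Gamma_1 \mdoubleplus \Gamma_2$ (respectively $\Delta = \Delta_1 \mdoubleplus \Delta_2$) determined by the subderivations, so that the same split can be used when reapplying the dual rule after the intervening negation.
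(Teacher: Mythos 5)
Your proposal is correct and matches the paper's proof, which simply says "by simple inspection of the typing rules"; your inversion-and-rederivation argument is exactly what that inspection amounts to, spelled out in detail. The key observation — that \textsc{P-Neg} swaps the two contexts while \textsc{P-Or} and \textsc{P-And} are dual in which context they concatenate versus share — is precisely the duality the paper relies on.
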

\begin{proof}
    By simple inspection of the typing rules for algebraic patterns.
\end{proof}

\begin{lemma}[Negated Constructors Preserve Typing]
    For all patterns $p_1,\ldots,p_n$, types $\tau$ and typing contexts $\Gamma$ and $\Delta$:
    \begin{equation*}
        \patTyping{\Gamma}{\Delta}{\patNeg{\mathcal{C}(p_1,\ldots,p_n)}}{\tau}
        \quad \Leftrightarrow\quad
        \patTyping{\Gamma}{\Delta}{\patOr{\patNeg{\mathcal{C}^n}}{\patOr{\mathcal{C}^n(\patNeg{p_1},\ldots,\patWildcard)}{\patOr{\ldots}{\mathcal{C}^n(\patWildcard,\ldots,\patNeg{p_n})}}}}{\tau}
    \end{equation*}
    \label{lem:appendix-proofs:negated-constructor-typing}
\end{lemma}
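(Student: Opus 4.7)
The plan is to establish both directions by inversion and reassembly of typing derivations, in the same spirit as \cref{lem:appendix-proofs:demorgan-typing}. I will assume the $n$-ary constructor typing rule generalizes P-Pair by concatenating both positive and negative contexts across subpatterns, so that a typing of $\mathcal{C}(p_1,\ldots,p_n)$ in context $(\Delta, \Gamma)$ corresponds exactly to typings $\patTyping{\Delta_i}{\Gamma_i}{p_i}{\tau_i}$ with matching concatenations $\Delta = \Delta_1 \mdoubleplus \ldots \mdoubleplus \Delta_n$ and $\Gamma = \Gamma_1 \mdoubleplus \ldots \mdoubleplus \Gamma_n$.

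For the forward direction, I first invert P-Neg on $\patTyping{\Gamma}{\Delta}{\patNeg{\mathcal{C}(p_1,\ldots,p_n)}}{\tau}$ to obtain $\patTyping{\Delta}{\Gamma}{\mathcal{C}(p_1,\ldots,p_n)}{\tau}$, and then invert the constructor rule to produce the per-subpattern typings together with the context splittings. I then build up the right-hand side typing disjunct by disjunct. The head disjunct $\patNeg{\mathcal{C}^n(\patWildcard,\ldots,\patWildcard)}$ types as $\patTyping{\cdot}{\cdot}{}{\tau}$ via repeated P-Wildcard, the constructor rule, and P-Neg. For each $i$, applying P-Neg to the premise for $p_i$ yields $\patTyping{\Gamma_i}{\Delta_i}{\patNeg{p_i}}{\tau_i}$; pairing this with P-Wildcard at the remaining positions and invoking the constructor rule gives $\patTyping{\Gamma_i}{\Delta_i}{\mathcal{C}^n(\patWildcard,\ldots,\patNeg{p_i},\ldots,\patWildcard)}{\tau}$. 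Finally, $n$ iterated applications of P-Or glue the disjuncts together, and the negative contexts concatenate to $\cdot \mdoubleplus \Delta_1 \mdoubleplus \ldots \mdoubleplus \Delta_n = \Delta$ as required.

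The main obstacle is that P-Or demands a shared positive context across all its disjuncts, while the per-disjunct typings above carry potentially different contexts $\Gamma_i$. This is forced to resolve because the head disjunct contributes the empty positive context, so the shared positive context $\Gamma$ of the whole disjunction must also be empty, and hence each $\Gamma_i$ must be empty as well. This is precisely the typing-level shadow of the linearity side-condition $\bigcup_i \freevarodd{p_i} = \emptyset$ demanded by \textsc{L-Ctor}$^-$ whenever $\patNeg{\mathcal{C}(p_1,\ldots,p_n)}$ is positively linear; on the class of patterns where the normalization is intended to operate, the positive contexts coincide trivially and the P-Or steps go through. The backward direction is symmetric: iterated inversion of P-Or produces a shared positive context $\Gamma$ (pinned to $\cdot$ by the head disjunct) together with a splitting $\Delta = \cdot \mdoubleplus \Delta_1 \mdoubleplus \ldots \mdoubleplus \Delta_n$; inverting the constructor rule and P-Neg on each non-head disjunct extracts $\patTyping{\Delta_i}{\cdot}{p_i}{\tau_i}$, which can be reassembled by the constructor rule and a final P-Neg to recover the original left-hand-side typing.
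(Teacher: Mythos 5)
Your derivation-level argument is considerably more explicit than the paper's own proof, which simply asserts that the equivalence ``can easily be verified'' for the concrete constructor rules of \cref{subsec:terms:typing}. Your inversion-and-reassembly strategy, modelled on \cref{lem:appendix-proofs:demorgan-typing}, is the right skeleton, and your reading of the $n$-ary constructor rule as concatenating both contexts matches \textsc{P-Pair}/\textsc{P-Inl}/\textsc{P-Inr}. You have also correctly located the one genuine difficulty: \textsc{P-Or} forces a single shared positive context across all disjuncts, while the disjunct built from $p_i$ carries $\Gamma_i$, the negative context of $p_i$.

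The gap is in how you dispose of that difficulty. In the forward direction the $\Gamma_i$ are handed to you by inversion of the left-hand derivation; nothing ``forces'' them to be empty, and when some $\Gamma_i \neq \cdot$ the right-hand side is simply not typable in the given $\Gamma$, so the stated biconditional fails. Concretely, take $n = 1$, $\mathcal{C} = \iota_1$ and $p_1 = \patNeg{x}$: then $\patTyping{x : \tau_1}{\cdot}{\patNeg{\tmInl{\patNeg{x}}}}{\tySum{\tau_1}{\tau_2}}$ is derivable, but $\patOr{\patNeg{\tmInl{\patWildcard}}}{\tmInl{\patNeg{\patNeg{x}}}}$ cannot be typed with positive context $x : \tau_1$, because the head disjunct types only in the empty positive context and the system has no weakening. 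Your closing remark --- that on the patterns the normalization actually operates on, each $\Gamma_i$ is empty because $\freevarodd{p_i} = \emptyset$, mirroring \textsc{L-Ctor}$^-$ --- is the correct diagnosis, but it silently adds a linearity hypothesis that the lemma does not state, so what you have proved is a restricted version rather than the biconditional as written. To make the argument airtight you should either add $\freevarodd{p_i} = \emptyset$ (equivalently $\linearpos{\patNeg{\mathcal{C}(p_1,\ldots,p_n)}}$) as an explicit hypothesis for the forward implication, or observe that only that direction needs it: the backward direction goes through unconditionally, since the right-hand side pins $\Gamma = \cdot$ on its own. This is a defect of the lemma statement as much as of your proof --- the paper's one-line appeal to inspection does not address it either --- but your write-up should not present the \textsc{P-Or} obstruction as self-resolving when it is exactly the point at which the unconditional claim breaks.
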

\begin{proof}
    This lemma depends, of course, on the concrete typing rules for constructors that are available in the system.
    One can easily verify that it holds for the constructors specified in \cref{subsec:terms:typing}.
\end{proof}

\begin{theorem}[Negation Normal Form Preserves Typing]
    If $\patTyping{\Gamma}{\Delta}{p}{\tau}$, then $\patTyping{\Gamma}{\Delta}{\nnf{p}}{\tau}$.
    \label{lem:appendix-proofs:typing-preservation}
\end{theorem}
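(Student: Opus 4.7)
The plan is to strengthen the statement into a pair of mutually dependent claims that mirror the mutual recursion of $\nnfpos{-}$ and $\nnfneg{-}$, and then prove both simultaneously by structural induction on $p$. Concretely, I would establish:
\begin{enumerate}
  \item if $\patTyping{\Gamma}{\Delta}{p}{\tau}$, then $\patTyping{\Gamma}{\Delta}{\nnfpos{p}}{\tau}$;
  \item if $\patTyping{\Gamma}{\Delta}{p}{\tau}$, then $\patTyping{\Delta}{\Gamma}{\nnfneg{p}}{\tau}$.
\end{enumerate}
Claim~(2) encodes the intuition that $\nnfneg{p}$ plays the role of $\nnf{\patNeg{p}}$: by rule \textsc{P-Neg} a typing $\patTyping{\Gamma}{\Delta}{p}{\tau}$ for $p$ gives rise to a typing $\patTyping{\Delta}{\Gamma}{\patNeg{p}}{\tau}$, and $\nnfneg{p}$ should inherit this swapped-context typing. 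The conclusion of the theorem is then immediately claim~(1), since $\nnf{p} = \nnfpos{p}$.

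For the induction, the cases for $x$, $\patWildcard$, $\patBot$ and the boolean-only leaf patterns are routine from the defining equations of $\nnfpos{-}$ and $\nnfneg{-}$ together with the corresponding pattern typing rules. The cases for $\patAnd{p_1}{p_2}$ and $\patOr{p_1}{p_2}$ are handled by combining the induction hypothesis on $p_1$ and $p_2$ with \textsc{P-And} and \textsc{P-Or}; in the negative clauses, the defining equations swap $\&$ and $\parallel$, which matches exactly the shape forced by \cref{lem:appendix-proofs:demorgan-typing}. The double-negation case $\nnfpos{\patNeg{p}} = \nnfneg{p}$ (and its dual) is discharged by first using \textsc{P-Neg} to swap the contexts and then applying the appropriate induction hypothesis.

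The main obstacle is the constructor case of $\nnfneg{-}$, whose right-hand side is
\[
  \patOr{\patNeg{\mathcal{C}^n}}{\patOr{\mathcal{C}^n(\nnfneg{p_1},\patWildcard,\ldots,\patWildcard)}{\patOr{\ldots}{\mathcal{C}^n(\patWildcard,\ldots,\patWildcard,\nnfneg{p_n})}}}.
\]
Starting from $\patTyping{\Gamma}{\Delta}{\mathcal{C}^n(p_1,\ldots,p_n)}{\tau}$, inversion on the constructor typing rules gives typings $\patTyping{\Gamma_i}{\Delta_i}{p_i}{\tau_i}$ with the $\Gamma_i$ summing to $\Gamma$ and the $\Delta_i$ summing to $\Delta$. \cref{lem:appendix-proofs:negated-constructor-typing} tells us that the pattern $\patOr{\patNeg{\mathcal{C}^n}}{\bigparallel_i \mathcal{C}^n(\patWildcard,\ldots,\patNeg{p_i},\ldots,\patWildcard)}$ has typing $\patTyping{\Delta}{\Gamma}{-}{\tau}$. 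Here I appeal to claim~(2) of the induction hypothesis to replace each $\patNeg{p_i}$ inside by $\nnfneg{p_i}$: since $\patTyping{\Delta_i}{\Gamma_i}{\patNeg{p_i}}{\tau_i}$ and $\patTyping{\Delta_i}{\Gamma_i}{\nnfneg{p_i}}{\tau_i}$ hold for the same contexts and type, they are interchangeable inside any typing derivation. The delicate bookkeeping is that the wildcard slots must contribute empty contexts and the $\bigparallel$ layer forces all disjuncts to share the positive context; this is exactly what the hypothesis $\freevarodd{p_i} = \emptyset$ implicit in \cref{lem:appendix-proofs:negated-constructor-typing} guarantees, so the context arithmetic lines up. The positive constructor case of $\nnfpos{-}$ is by contrast a direct application of the induction hypothesis to each $p_i$ followed by the constructor typing rule.
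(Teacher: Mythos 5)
Your proof is correct and follows essentially the same route as the paper: the paper's own argument is a one-line appeal to \cref{lem:appendix-proofs:demorgan-typing,lem:appendix-proofs:negated-constructor-typing}, and your mutual structural induction on the two claims for $\nnfpos{-}$ and $\nnfneg{-}$ (with the swapped contexts $\patTyping{\Delta}{\Gamma}{\nnfneg{p}}{\tau}$ in the negative claim) is exactly the scaffolding that one-liner leaves implicit. Your observation that the constructor case only goes through because the wildcard slots and the shared positive context of the disjuncts force the relevant odd-negation contexts to be empty correctly identifies the same implicit side condition the paper's \cref{lem:appendix-proofs:negated-constructor-typing} relies on.
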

\begin{proof}
    This follows from \cref{lem:appendix-proofs:demorgan-typing,lem:appendix-proofs:negated-constructor-typing}.
\end{proof}

\begin{theorem}[Negation Normal Form Preserves Semantics]
    For all patterns $p$, if $\linearposneg{p}$ then $\llbracket p \rrbracket \equiv \llbracket \nnf{p} \rrbracket$.
\end{theorem}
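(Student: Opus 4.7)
The plan is to prove the stronger mutual statement by structural induction on $p$:
\begin{itemize}
  \item[(P)] If $\linearposneg{p}$, then $\semantic{\nnfpos{p}} \equiv \semantic{p}$.
  \item[(N)] If $\linearposneg{p}$, then $\semantic{\nnfneg{p}} \equiv \semantic{\patNeg{p}}$.
\end{itemize}
Since $\nnf{p} = \nnfpos{p}$, clause (P) gives the theorem. We reason throughout using the congruence of semantic equivalence and the algebraic laws of \cref{thm:algebraic-equivalences:one} and \cref{thm:algebraic-equivalences:four}, which are available because every subpattern we touch is linear: by \cref{lem:appendix-proofs:linearity-preservation} linearity is preserved on the normalized side, and by inspection of the linearity rules in \cref{fig:pattern-algebra:rules} linearity of $p$ implies linearity of all direct subpatterns (on both polarities, which is why we must carry $\linearposneg{p}$ in the hypothesis).

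First I would dispatch the base cases. For $p = x$, $p = \patWildcard$ and $p = \patBot$ both (P) and (N) reduce to obvious equalities or to $\semantic{\patNeg{\patWildcard}} \equiv \semantic{\patBot}$ and $\semantic{\patNeg{\patBot}} \equiv \semantic{\patWildcard}$, which are the duality laws of \cref{thm:algebraic-equivalences:one}. For $p = \patNeg{q}$ we use $\nnfpos{\patNeg q} = \nnfneg q$ together with the IH-(N) for $q$ to conclude (P); for (N) we use $\nnfneg{\patNeg q} = \nnfpos q$ with IH-(P) and the double-negation law $\semantic{\patNeg{\patNeg q}} \equiv \semantic{q}$.

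For the inductive step on $p = \patOr{p_1}{p_2}$: clause (P) follows immediately from IH-(P) applied to $p_1, p_2$ and congruence, while clause (N) unfolds to $\semantic{\patAnd{\nnfneg{p_1}}{\nnfneg{p_2}}} \equiv \semantic{\patNeg{(\patOr{p_1}{p_2})}}$, which we obtain from IH-(N) on $p_1,p_2$ together with the De Morgan law $\semantic{\patNeg{(\patOr{p_1}{p_2})}} \equiv \semantic{\patAnd{\patNeg{p_1}}{\patNeg{p_2}}}$. The case $p = \patAnd{p_1}{p_2}$ is completely dual.

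The main obstacle, and the last case, is constructor patterns $p = \mathcal{C}^n(p_1,\ldots,p_n)$. Clause (P) is straightforward from IH-(P) and the congruence on constructor patterns. Clause (N) is where the work lies: we must show
\[
  \semantic{\patOr{\patNeg{\mathcal{C}^n}}{\patOr{\mathcal{C}^n(\nnfneg{p_1},\ldots,\patWildcard)}{\patOr{\ldots}{\mathcal{C}^n(\patWildcard,\ldots,\nnfneg{p_n})}}}}
  \equiv
  \semantic{\patNeg{\mathcal{C}^n(p_1,\ldots,p_n)}}.
\]
The plan is to apply the last equivalence of \cref{thm:algebraic-equivalences:four} to the right-hand side, which rewrites $\semantic{\patNeg{\mathcal{C}^n(p_1,\ldots,p_n)}}$ into exactly the same shape but with $\patNeg{p_i}$ instead of $\nnfneg{p_i}$ in the $i$-th disjunct. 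Then congruence together with IH-(N) for each $p_i$ finishes the case. For this to go through we must check the side condition of \cref{thm:algebraic-equivalences:four}: both the starting pattern $\patNeg{\mathcal{C}^n(p_1,\ldots,p_n)}$ and the rewritten disjunction must be linear. The former is our hypothesis, and the latter follows from \cref{lem:appendix-proofs:negated-constructor} (and its negative-polarity counterpart, proved by the same case analysis on the linearity rules). Likewise, to invoke IH-(N) on each $p_i$ we need $\linearposneg{p_i}$, which we extract from $\linearposneg{\mathcal{C}^n(p_1,\ldots,p_n)}$ via \textsc{L-Ctor}$^\pm$. Once this final case is closed, (P) at the top level yields the theorem.
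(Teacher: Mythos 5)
Your proof is correct and follows essentially the same route as the paper: the paper's own proof is a one-line appeal to the equivalences of \cref{thm:algebraic-equivalences:one} and \cref{thm:algebraic-equivalences:four}, and your mutual induction on $\nnfpos{-}$/$\nnfneg{-}$ is precisely the spelled-out version of that argument, including the correct use of the linearity-preservation lemmas to discharge the side conditions of \cref{thm:algebraic-equivalences:four} in the constructor case.
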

\begin{proof}
    This follows from the semantic equivalences proven in \cref{thm:algebraic-equivalences:one,thm:algebraic-equivalences:four}
\end{proof}

\subsection{Correctness of Compilation}
\label{subsec:proofs:compilation-correctness}

This subsection contains the correctness proof for compilation as given by \cref{thm:compilation:correctness}. Because the compilation algorithm operates on multi-column pattern matches, we must extend the relevant definitions accordingly.

First, we extend the notion of \enquote{matching} for a any pattern vector $\vec{p}$ as is natural.
$$
\begin{aligned}
    & \matches{\vec{p}}{\vec{v}}{\vec{\sigma}} := \bigwedge\limits_{i = 1}^n \matches{p_i}{v_i}{\sigma_i} & \ \
    &\matchesNot{\vec{p}}{\vec{v}}{\vec{\sigma}} := \bigvee\limits_{i = 1}^n \matchesNot{p_i}{v_i}{\sigma_i}
\end{aligned}
$$

Second, we extend single-step evaluation for a multi-column pattern match by stating that if any non-default row matches the values, we evaluate to its substituted right-hand side and if no row matches all values, we evaluate to the default expression.\\
\begin{minipage}{0.5\textwidth}
	\vspace{3mm}
	$$\begin{array}{l}
		P \To E = \begin{bmatrix}
			P_1^1 	& \ldots 	& P_n^1 &\To e^1 \\
			& \vdots	& 				\\
			P_1^m	& \ldots	& P_n^m &\To e^m \\
			&\defaultClause & &\To e_d
		\end{bmatrix} \vspace{2mm}\\
		MultiCase(\vec{e}) =
		\mathbf{case}\ \vec{e} \ \mathbf{of} \ \left( P \To E \right)
	\end{array}$$
	\vspace{2mm}
\end{minipage}
\begin{minipage}{0.5\textwidth}
	\begin{prooftree}
		\AxiomC{$\matches{P^j}{\vec{v}}{\vec{\sigma}}$}
		\RightLabel{\textsc{E-MC}}
		\UnaryInfC{$MultiCase(\vec{v}) \singlestep^r e^j\vec{\sigma}$}
	\end{prooftree}
	\begin{prooftree}
		\AxiomC{$\matchesNot{P^1}{\vec{v}}{\vec{\sigma_1}}\quad \cdots\quad \matchesNot{P^m}{\vec{v}}{\vec{\sigma_m}}$}
		\RightLabel{\textsc{E-MD}}
		\UnaryInfC{$MultiCase(\vec{v}) \singlestep^r e_d$}
	\end{prooftree}
\end{minipage}
These evaluation steps need not be deterministic. However, if we want to reason about the correctness of compilation as semantic equivalence $\semantic{\ref{eq:case:input}} \equiv \semantic{\textit{compile}(\ref{eq:case:input})}$, which boils down to $\semantic{\ref{eq:case:input}} \to^* v \textit{ iff } \textit{compile}(\ref{eq:case:input}) \to^* v$, this is a somewhat nonsensical (and false) statement for non-determinism, because the compilation algorithm is deterministic in nature.

We will pose the theorem of correctness for deterministic order-independent semantics as guaranteed by wellformed expressions, a notion that we extend to multi-column pattern matches.
As before, all patterns need to be deterministic and positively linear. Additionally, we want non-overlapping rows, meaning that they must not overlap for at least one pattern. Finally, we also demand that a row needs disjoint free variables under even negation. The intuition here is that a $n$-column row is the \enquote{curried} equivalent to a $n$-ary anonymous constructor, which also needs to be positively linear (and deterministic).

\begin{prooftree}
\AxiomC{\renewcommand{\arraystretch}{1.5}$\begin{array}{l}
			\wf{\{e,e_d,e^1,\ldots,e^m\}} \\
			\forall j \neq j', \ \exists \ i, \text{ st. } \disjoint{P_i^j}{P_i^{j'}} \\
			\forall j, i \neq i', \ \text{FV}^e(P_i^j) \cap  \text{FV}^e(P_{i'}^j) = \emptyset
		\end{array}$}
\AxiomC{\hspace{-8mm}$
		\begin{array}{c}
			\deterministic{P_1^1} \cdots \deterministic{P_n^1}\\
			\vdots \\
			\deterministic{P_1^m} \cdots \deterministic{P_n^m}
		 \end{array} \hspace{0.25cm}
		 \hspace{-4mm}\begin{array}{c}
			\linearpos{P_1^1} \cdots \linearpos{P_n^1}\\
			\vdots \\
			\linearpos{P_1^m} \cdots \linearpos{P_n^m}
		 \end{array}
		$}
\hspace{-4mm}\RightLabel{\textsc{Wf-MC}}
\BinaryInfC{$\wf{MultiCase(e)}$}
\end{prooftree}

Below, we restate compilation correctness for \textit{wellformed} multi-column pattern matches in \cref{thm:correctness-proof}. This necessitates that we can reason about wellformedness under compilation. We state by \cref{lem:compilation:subproblem-wf}, that specialization and default cases are also wellformed and ultimately conclude in \cref{lem:compilation:preservation-wf} that compilation preserves wellformedness.
On a side note, we see by \cref{lem:wf:determinism:multicol} that the determinism provided by wellformed expressions does indeed extend to wellformed multi-column pattern matches.

For correctness, we will need reasoning about reduction, especially for specialization and default. For the nail that is the correctness theorem, \cref{lem:compilation:subproblem-reduction} will be our hammer, stating that the subproblems for specialization and default reduce as expected. 

\begin{lemma}[Specialization/Default preserve/reflect Reduction]
	\label{lem:compilation:subproblem-reduction}
	Consider a wellformed multi-column pattern match of the normalized shape \ref{eq:case:input} and denote the $i$-th scrutinee of \ref{eq:case:input} as $v_i = C^k(w_1,\dots,w_k)$. It holds that specialization and default for the $i$-th column preserve and reflect the single-step reduction of order-independent semantics.
	\begin{enumerate}
		\item $\ref{eq:case:input} \to e$ only if $\mathcal{S}(i,\mathcal{C}^{k}(x_1,\ldots, x_{k}),\text{\ref{eq:case:input}})[x_1 \mapsto w_1, \dots, x_k \mapsto w_k] \to e$
		\item If $C^k \not \in \mathcal{H}(\overline{D}_i)$, then $\ref{eq:case:input} \to e$ only if $\mathcal{D}(i,\mathcal{H}(\overline{D}_i), \text{\ref{eq:case:input}}) \to e$
	\end{enumerate}
\end{lemma}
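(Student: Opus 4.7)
The plan is to case-analyze the single-step reduction of \ref{eq:case:input}, which must fire via either \textsc{E-MC} (some row matches) or \textsc{E-MD} (no row matches). The structural fact I will lean on throughout is that each $\overline{D}_i^j$ is a normalized DNF $\parallel\{\overline{K}_1,\ldots,\overline{K}_s\}$, so $\overline{D}_i^j$ matches $v_i$ if and only if some elementary conjunct $\overline{K}_l$ does, and by \cref{thm:wf-matching-det} the matching conjunct is unique up to equivalence of substitutions under the wellformedness hypotheses.

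For claim (1), suppose the original reduces via \textsc{E-MC} on row $j$ with combined substitution $\vec{\sigma}$, so $e = e^j \vec{\sigma}$. Since $\overline{D}_i^j$ matches $v_i = \mathcal{C}^k(w_1,\ldots,w_k)$, some $\overline{K}_l$ matches $v_i$. I case on its shape. If $\overline{K}_l = \patAnd{\{y_1,\ldots,y_u\}}{\mathcal{C}^k(\overline{K}'_1,\ldots,\overline{K}'_k)}$, then rule 2c unfolds the disjunction and rule 2a contributes a specialized row whose leading patterns $\overline{K}'_1,\ldots,\overline{K}'_k$ test the fresh variables $x_1,\ldots,x_k$, with right-hand side $e^j[\{y_1,\ldots,y_u\} \mapsto v_i]$; after applying $[x_i \mapsto w_i]$, inversion of the \textsc{Ctor} matching rule shows that this row matches $w_1,\ldots,w_k$, while the other columns $\overline{D}_{i'}^j$ still match the unchanged $v_{i'}$. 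If $\overline{K}_l = \patAnd{\{y_1,\ldots,y_u\}}{\patNeg{\{\mathcal{C}_1,\ldots,\mathcal{C}_t\}}}$ with $\mathcal{C}^k \notin \{\mathcal{C}_1,\ldots,\mathcal{C}_t\}$, rule 2b contributes a row with wildcards in the new columns which matches trivially. The unsatisfiable conjunct cannot match $v_i$. In either surviving case the substituted specialization reduces via \textsc{E-MC} to $e^j \vec{\sigma} = e$. If instead the original reduces via \textsc{E-MD}, I argue that no non-default row of the substituted specialization can match: each such row derives from an original row $j'$ and a specific branch/conjunct compatible with $\mathcal{C}^k$, and a match would reassemble, via \textsc{Ctor} and \textsc{Or}$_{1,2}$, a match of the full original row $P^{j'}$ against $\vec{v}$, contradicting the assumption. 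So \textsc{E-MD} fires on the specialization and yields $e_d$.

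The argument for claim (2) runs in parallel but is simpler. The hypothesis $\mathcal{C}^k \notin \mathcal{H}(\overline{D}_i)$ rules out every positive conjunct in column $i$ with head $\mathcal{C}^k$ and every negative conjunct that excludes $\mathcal{C}^k$; hence whenever column $i$ of some row $j$ matches $v_i$, the matching conjunct must be a negative pattern $\patAnd{\{y_1,\ldots,y_u\}}{\patNeg{\{\mathcal{C}_1,\ldots,\mathcal{C}_t\}}}$ with $\mathcal{C}^k \notin \{\mathcal{C}_1,\ldots,\mathcal{C}_t\}$, which is exactly the shape preserved by default rule 2a. An \textsc{E-MC} reduction of the original is therefore mirrored by an \textsc{E-MC} reduction of $\mathcal{D}(i,\mathcal{H}(\overline{D}_i),\ref{eq:case:input})$ with the same substituted right-hand side, while an \textsc{E-MD} reduction is mirrored by the default clause being retained under rule 1 together with the absence of any surviving non-default match.

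The main obstacle is the bookkeeping for rule 2c of specialization (and rule 2b of default), where a single original row can expand into many derived rows corresponding to branches of nested disjunctions. I intend to handle this by a secondary induction on the structure of $\overline{D}_i^j$, using the semantic equivalence of $\parallel\{\overline{K}_1,\ldots,\overline{K}_s\}$ with $\overline{K}_1 \parallel \cdots \parallel \overline{K}_s$, together with linearity to ensure the substitutions gathered through the or-branches agree with $\vec{\sigma}$, and determinism to guarantee that no two distinct derived rows spuriously match the same scrutinees.
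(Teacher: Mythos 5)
Your argument for the stated implication tracks the paper's own proof closely: the same case split on \textsc{E-MC} versus \textsc{E-MD}, the same secondary induction on the shape of the column-$i$ conjunct (with the or-expansion of rule 2c handled by unfolding the disjunction), and the same contradiction argument showing that no specialized row can match when the original reduces by \textsc{E-MD}. Two steps need to be made precise, though. To fire \textsc{E-MD} on the specialization you need actual derivations $\matchesNot{\cdot}{\cdot}{\cdot}$ for every surviving row, not merely the absence of a positive match; the paper gets these by refuting a hypothetical match via soundness (\cref{thm:matching-sound}) and then invoking completeness (\cref{thm:matching-complete}). And the claimed equality of right-hand sides hides the identity $e^j[\{y_1,\ldots,y_s\}\mapsto v_i]\,\sigma_1\cdots\sigma_{i-1}\sigma_{i+1}\cdots\sigma_n = e^j\sigma_1\cdots\sigma_n$, which needs both $\sigma_i \equiv [\{y_1,\ldots,y_s\}\mapsto v_i]$ (from \cref{thm:wf-matching-det}) and permutability of the substitution applications, obtained from the wellformedness requirement that the $\freevareven{\cdot}$ of a row's columns are pairwise disjoint together with \cref{thm:pattern-algebra:linear-patterns-covering}. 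Your proposal gestures at these with ``linearity'' and ``determinism'' but assigns determinism the wrong job (ruling out spurious double matches is irrelevant here; what matters is identifying the two substitutions for column $i$).

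The genuine gap is that you prove only the ``preserve'' half. Despite the literal ``only if'' in the enumerated items, the lemma's title and its use in the correctness proof (\cref{thm:correctness-proof} relies on $\ref{eq:case:input} \to e$ \emph{iff} $\mathcal{S}(\ldots)\pi \to e$) require the converse: if the substituted specialization (resp.\ the default subproblem) reduces to $e$, then so does \ref{eq:case:input}. The paper devotes the entire second half of its proof to this direction, and it is not a mirror image of the first: one must exhibit the map $f$ from each row of $[S]$ back to the originating row of \ref{eq:case:input}, reassemble from a match of the specialized row a match of row $f(j)$ against the original scrutinees (re-deriving $\matches{\overline{D}_i^{f(j)}}{v_i}{\cdot}$ from $v_i = \mathcal{C}^k(w_1,\ldots,w_k)$ and the inclusion rule that produced row $j$), observe that the fresh wildcard columns contribute only empty substitutions, and, when the specialization reduces by \textsc{E-MD}, show by contradiction plus completeness that no original row matches either. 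Without this direction the induction in the correctness theorem does not go through, so the proposal as it stands does not establish what the paper needs from this lemma.
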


This lemma poses an extension of one that \cite{Maranget2008} stated for multi-column pattern matches with the usual patterns, including or-patterns. As such, we will follow the author's proof structure. Each case for 1 and 2 inevitably boils down to an induction on the pattern $\overline{D}_i^j$ for some row $j$ in \ref{eq:case:input}, either directly or by knowing it has a contributing shape from the inclusion rules of either \cref{subsubsec:compilation:branch:specialization} or \cref{subsubsec:compilation:branch:default}. 

We will dedicate great care to showing a detailed proof for 1 in the highly non-trivial case of a negation-pattern, because this case embodies a true extension to Maranget's algorithm. We will omit the other cases and refer to his own statement of the lemma, because they are similarly mechanically involved -- but equally procedural -- as the one we will show. For the same reason, we will also spare the proof of 2 as it is easier and entirely analogous to the proof of 1, so much so, that we advise the doubtful reader to alter the proof themself by crossing out and scribbling in some of the necessary changes.

\begin{proof} \
\begin{enumerate}
        \item Assume $\ref{eq:case:input} \to e$. \\
                There are two reduction rules, which we differentiate: 
                \begin{enumerate}
                \item \vspace{2mm} $\ref{eq:case:input} \to e^j\sigma_1\dots\sigma_n$ for $\matches{\overline{D}^j}{(v_1,\dots,v_n)}{(\sigma_1,\dots,\sigma_n)}$, i.e. the $j$-th non-default row matches all values, and we reduce to its right-hand side with all the point wise substitutions applied successively. \vspace{2mm}\\
                To show: $\mathcal{S}(i,\mathcal{C}^{k}(x_1,\ldots, x_{k}),\text{\ref{eq:case:input}})[x_1 \mapsto w_1, \dots, x_k \mapsto w_k] \to e^j\sigma_1\dots\sigma_n $ \vspace{2mm}\\
                Let us first spell out, what specialization gives: \\
                    $\mathcal{S}(i,\mathcal{C}^{k}(x_1,\ldots, x_{k}),\text{\ref{eq:case:input}}) \ = \ \mathbf{case}\hspace{2mm} 
                    x_1, \dots, x_k, v_1,\dots,v_{i-1},v_{i+1},\dots,v_n \hspace{2mm} \mathbf{of} \hspace{1mm} \begin{bmatrix} S \end{bmatrix}$, \\
                where the rows of matrix $\begin{bmatrix} S \end{bmatrix}$ depend on the pattern $\overline{D}_i^j$, on which we induct. We focus on the case of a negation-pattern, which poses the only non-trivial extension of \cite{Maranget2008}'s compilation algorithm. \vspace{2mm}\\
                Assume $\mathbf{\overline{D}_i^j = \patAnd{\{ y_1, \ldots y_s\}}{\patNeg{\{ \mathcal{C}_1, \ldots, \mathcal{C}_t \} }}}$. \vspace{2mm}\\ 
                Because we assumed $\matches{\overline{D}^j}{(v_1,\dots,v_n)}{(\sigma_1,\dots,\sigma_n)}$, notably $\matches{\overline{D}_i^j}{v_i}{\sigma_i}$, it must already be 
                $\mathcal{C}^k \not\in \{ \mathcal{C}_1, \ldots, \mathcal{C}_t \}$. This satisfies the inclusion rule \cref{subsubsec:compilation:branch:specialization} \ref{spec-b}, meaning that the matrix  $\begin{bmatrix} S \end{bmatrix}$ contains the row \\
                $\begin{bmatrix}
                    \patWildcard \dots \ \patWildcard \
                    \overline{D}_{1}^j \dots \overline{D}_{i-1}^j \overline{D}_{i+1}^j \dots \ \overline{D}_{n}^j 
                    \To e^j [\{ y_1, \ldots y_s\} \mapsto v_i]
                \end{bmatrix}$. \vspace{2mm} \\
                For this row in $\begin{bmatrix} S \end{bmatrix}$, we can derive the matching judgment \\
                $\matches{(\patWildcard, \dots,\patWildcard,
                \overline{D}_{1}^j, \dots, \overline{D}_{i-1}^j, \overline{D}_{i+1}^j, \dots,\overline{D}_{n}^j)}{(w_1,\dots,w_k,v_1,\dots,v_{i-1},v_{i+1},\dots,v_n)}{\\
                ( [],\dots, [],\sigma_1,\dots,\sigma_{i-1},\sigma_{i+1},\dots,\sigma_n )}$. \vspace{2mm}\\
                Because the entire row does not depend on the fresh variables $x_1,\dots,x_k$ in our specialization -- particularly its patterns --
                we can use this matching judgment to reduce our goal by the rule \textsc{E-MC}\\
                $\mathcal{S}(i,\mathcal{C}^{k}(x_1,\ldots, x_{k}),\text{\ref{eq:case:input}})[x_1 \mapsto w_1, \dots, x_k \mapsto w_k] \to \\
                (e^j [\{ y_1, \ldots y_s\} \mapsto v_i]) \sigma_1\dots\sigma_{i-1}\sigma_{i+1}\dots\sigma_n$. \vspace{2mm}\\
                To show: $(e^j [\{ y_1, \ldots y_s\} \mapsto v_i]) \sigma_1\dots\sigma_{i-1}\sigma_{i+1}\dots\sigma_n = e^j\sigma_1\dots\sigma_n$ \vspace{2mm}\\
                Because the following matching judgments 
                \begin{itemize}
                    \item $\matches{\overline{D}^j_i}{v_i}{\sigma_i}$ gives for $\overline{D}_i^j = \patAnd{\{ y_1, \ldots y_s\}}{\patNeg{\{ \mathcal{C}_1, \ldots, \mathcal{C}_t \} }}$ the judgment \\
                    $\matches{\patAnd{\{ y_1, \ldots y_s\}}{\patNeg{\{ \mathcal{C}_1, \ldots, \mathcal{C}_t \} }}}{v_i}{\sigma_i}$
                    \item Natively, we can also derive for $v_i = C^k(w_1,\dots,w_k)$ the judgment \\ 
                    $\matches{\patAnd{\{ y_1, \ldots y_s\}}{\patNeg{\{ \mathcal{C}_1, \ldots, \mathcal{C}_t \} }}}{v_i}{[\{ y_1, \ldots y_s\} \mapsto v_i]}$
                \end{itemize}
                yield two separate substitutions, we use our assumption of positively linear and deterministic patterns to conclude by \cref{thm:wf-matching-det}, the two substitutions must coincide: $\sigma_i \equiv [\{ y_1, \ldots y_s\} \mapsto v_i]$. \vspace{2mm}\\
                We are allowed to permute the application of these substitutions $\sigma_1\dots\sigma_n$. This follows from our assumption that the positively linear patterns of each \ref{eq:case:input} row do not share free variables under even negation, because then we know by \cref{thm:pattern-algebra:linear-patterns-covering} that the domains of these substitutions are pairwise disjoint. We conclude: \\$ (e^j[\{ y_1, \ldots y_s\} \mapsto v_i])\sigma_1\dots\sigma_{i-1}\sigma_{i+1}\dots\sigma_n = (e^j\sigma_i)\sigma_1\dots\sigma_{i-1}\sigma_{i+1}\dots\sigma_n  = e^j\sigma_1\dots\sigma_n$ \vspace{2mm}
                \item 
                $\ref{eq:case:input} \to e_d$ for $\matchesNot{\overline{D}^j}{\vec{v}}{\vec{\sigma}^j}, \ \forall j \in \{1,\dots,m\}$, i.e. no non-default row matches, and we reduce to the default expression. \vspace{2mm}\\
                $\begin{array}{ll}
                        & \mathcal{S}(i,\mathcal{C}^{k}(x_1,\ldots, x_{k}),\text{\ref{eq:case:input}})[x_1 \mapsto w_1, \dots, x_k \mapsto w_k] \\
                    =   & \mathbf{case}\hspace{2mm} w_1, \dots, w_k, v_1,\dots,v_{i-1},v_{i+1},\dots,v_n \hspace{2mm} \mathbf{of} \hspace{1mm} \left(\begin{bmatrix} S \end{bmatrix}[x_1 \mapsto w_1, \dots, x_k \mapsto w_k] \right)
                \end{array}$ \vspace{2mm}\\
                We need to show that no row of $\left(\begin{bmatrix} S \end{bmatrix}[x_1 \mapsto w_1, \dots, x_k \mapsto w_k] \right)$ but the carried-over default clause matches the new arguments $w_1, \dots, w_k, v_1,\dots,v_{i-1},v_{i+1},\dots,v_n$. For this, we pick (if existing) any non-default row in $\left(\begin{bmatrix} S \end{bmatrix}[x_1 \mapsto w_1, \dots, x_k \mapsto w_k] \right)$ and induct over the inclusion rules of \cref{subsubsec:compilation:branch:specialization} \ref{spec-a}-\ref{spec-c} that could have led to its inclusion. Once again, as we are adapting a lemma proven by \cite{Maranget2008}, we are focusing on the new case involving negation. \vspace{2mm}\\
                Assume this row is the contribution of some row $j$ in \ref{eq:case:input} by the rule \ref{spec-b}, namely \\
                $\mathbf{\overline{D}_i^j = \patAnd{\{ y_1, \ldots y_s\}}{\patNeg{\{ \mathcal{C}_1, \ldots, \mathcal{C}_t \} }}}$ with $\mathcal{C}^k \not\in \{ \mathcal{C}_1, \ldots, \mathcal{C}_t \}$. \vspace{2mm}\\
                Then, our row looks like 
                $\begin{bmatrix}
                    \patWildcard \dots \ \patWildcard \
                    \overline{D}_{1}^j \dots \overline{D}_{i-1}^j \overline{D}_{i+1}^j \dots \ \overline{D}_{n}^j 
                    \To e^j [\{ y_1, \ldots y_s\} \mapsto v_i]
                \end{bmatrix}$. \vspace{2mm}\\ 
                To show: There exists some substitution vector $\vec{\tau}$, such that \\
                $\matchesNot{(\patWildcard, \dots,\patWildcard,
                \overline{D}_{1}^j, \dots, \overline{D}_{i-1}^j, \overline{D}_{i+1}^j, \dots,\overline{D}_{n}^j)}{
                (w_1,\dots,w_k,v_1,\dots,v_{i-1},v_{i+1},\dots,v_n)}{\vec{\tau}}$ \vspace{2mm}\\
                We proceed with a proof of negation showing that there cannot be any $\vec{\tau}$, s.t. \\
                $\matches{(\patWildcard, \dots,\patWildcard,
                \overline{D}_{1}^j, \dots, \overline{D}_{i-1}^j, \overline{D}_{i+1}^j, \dots,\overline{D}_{n}^j)}{
                (w_1,\dots,w_k,v_1,\dots,v_{i-1},v_{i+1},\dots,v_n)}{\vec{\tau}}$ \\ is derivable and then conclude our goal using completeness (\cref{thm:matching-complete}). \vspace{2mm}\\
                Now, assume we could have this row match \\
                $\matches{(\patWildcard, \dots,\patWildcard,
                \overline{D}_{1}^j, \dots, \overline{D}_{i-1}^j, \overline{D}_{i+1}^j, \dots,\overline{D}_{n}^j)}{
                (w_1,\dots,w_k,v_1,\dots,v_{i-1},v_{i+1},\dots,v_n)}{\vec{\tau}}$, and we denote $\vec{\tau} = (\tau_1,\dots,\tau_k, \pi_1, \dots, \pi_{i-1},\pi_{i+1},\dots,\pi_n)$. \vspace{2mm}\\
                This implies that row $j$ in \ref{eq:case:input} must have been a matching row, because 
                we can derive the judgment $\matches{\overline{D}^j}{\vec{v}}{(\pi_1,\dots,\pi_{i-1},[\{ y_1, \ldots y_s\} \mapsto v_i], \pi_{i+1},\dots,\pi_n)}$, because
                \begin{itemize}
                    \item $\forall \ x \in (\{1,\dots,n\}\setminus\{i\}), \ \matches{\overline{D}_x^j}{v_x}{\pi_x}$
                    \item For $\overline{D}_{i}^j = \patAnd{\{ y_1, \ldots y_s\}}{\patNeg{\{ \mathcal{C}_1, \ldots, \mathcal{C}_t \} }}$ and $v_i = C^k(w_1,\dots,w_k)$ we can derive \\
                    $\matches{\overline{D}_i^j}{v_i}{[\{ y_1, \ldots y_s\} \mapsto v_i]}$, because  $\mathcal{C}^k \not\in \{ \mathcal{C}_1, \ldots, \mathcal{C}_t \}$.
                \end{itemize} \vspace{2mm}
                But such a matching judgment contradicts our assumption of $\matchesNot{\overline{D}^j}{\vec{v}}{\vec{\sigma}^j}$ \\
                by soundness of pattern matching (\cref{thm:matching-sound}).
            \end{enumerate}
        \item 
        \vspace{2mm} Assume for the specialization \\
        $\begin{array}{ll}
            & \mathcal{S}(i,\mathcal{C}^{k}(x_1,\ldots, x_{k}),\text{\ref{eq:case:input}})[x_1 \mapsto w_1, \dots, x_k \mapsto w_k] \\
        =   & \mathbf{case}\hspace{2mm} w_1, \dots, w_k, v_1,\dots,v_{i-1},v_{i+1},\dots,v_n \hspace{2mm} \mathbf{of} \hspace{1mm} \left(\begin{bmatrix} S \end{bmatrix}[x_1 \mapsto w_1, \dots, x_k \mapsto w_k] \right) \end{array}$\\
        the reduction $\mathcal{S}(i,\mathcal{C}^{k}(x_1,\ldots, x_{k}),\text{\ref{eq:case:input}})[x_1 \mapsto w_1, \dots, x_k \mapsto w_k] \to e$, for some $e$. \vspace{2mm}\\
        First, we can always give the monotone function $f: \{1,\dots,\tilde{m}\} \to \{1,\dots,m\}$, that maps the number $j$ of a non-default row in $\begin{bmatrix} S \end{bmatrix}$, to the number $f(j)$ of the row in \ref{eq:case:input}, which gave rise to the addition of row $j$ in $\begin{bmatrix} S \end{bmatrix}$. Because the default clause in \ref{eq:case:input} contributes exactly itself -- a default clause -- to $\begin{bmatrix} S \end{bmatrix}$, the range of $f$ excludes $m+1$.\\
        Second, we denote the matrix $\begin{bmatrix} S \end{bmatrix}$ as a $(\tilde{m} + 1) \times (k + n - 1)$ matrix that we know by definition to be of the following shape.
        $$
        \begin{bmatrix} S \end{bmatrix} =
        \begin{bmatrix} 
            \overline{K}^1_1 \dots  \overline{K}^1_k & \overline{D}^{f(1)}_1 \dots \overline{D}^{f(1)}_{i-1},\overline{D}^{f(1)}_{i+1} \dots \overline{D}^{f(1)}_n & \To & e^{f(1)}\pi^1\\
            \vdots & \vdots & & \vdots \\
            \overline{K}^{\tilde{m}}_1 \dots \overline{K}^{\tilde{m}}_k & \overline{D}^{f(\tilde{m})}_1 \dots \overline{D}^{f(\tilde{m})}_{i-1},\overline{D}^{f(\tilde{m})}_{i+1} \dots \overline{D}^{f(\tilde{m})}_n & \To & e^{f(\tilde{m})}\pi^{\tilde{m}} \\
            \textbf{default} & & \To & e_d
        \end{bmatrix}
        $$
        There are two possible reduction rules, that could give rise to an evaluation to $e$.
        \begin{enumerate}
            \item 
            \vspace{2mm} $e = \left(e^{f(j)}\pi^j\right)\tau_1\dots\tau_k \sigma_1\dots\sigma_{i-1}\sigma_{i+1}\dots\sigma_n$ \\
            for a match with the $j$-th non-default row of $\left(\begin{bmatrix} S \end{bmatrix}[x_1 \mapsto w_1, \dots, x_k \mapsto w_k] \right)$:\\
            $\matches{\overline{K}^j_1[x_1 \mapsto w_1, \dots, x_k \mapsto w_k],\dots,\overline{K}^j_k[x_1 \mapsto w_1, \dots, x_k \mapsto w_k],\\
            \overline{D}^{f(j)}_1, \dots, \overline{D}^{f(j)}_{i-1},\overline{D}^{f(j)}_{i+1}, \dots, \overline{D}^{f(j)}_n}{\\
            w_1, \dots, w_k, v_1,\dots,v_{i-1},v_{i+1},\dots,v_n }{
            (\tau_1,\dots,\tau_k, \sigma_1, \dots, \sigma_{i-1},\sigma_{i+1},\dots,\sigma_n)}$ \vspace{2mm}\\
            To show: $\ref{eq:case:input} \to \left(e^{f(j)}\pi^j\right)\tau_1\dots\tau_k \sigma_1\dots\sigma_{i-1}\sigma_{i+1}\dots\sigma_n$ \vspace{2mm} \\
            We conduct an induction over the inclusion rules of \cref{subsubsec:compilation:branch:specialization} \ref{spec-a}-\ref{spec-c} that could have led to the inclusion of row $j$ in $\begin{bmatrix} S \end{bmatrix}$. As before, we focus on the case of rule \ref{spec-b} which involves negation and refer to \cite{Maranget2008} for the usual cases. \vspace{2mm}\\
            Assume row $j$ in $\begin{bmatrix} S \end{bmatrix}$ was added by contribution of row $f(j)$ in \ref{eq:case:input} such that $\mathbf{\overline{D}_i^{f(j)} = \patAnd{\{ y_1, \ldots y_s\}}{\patNeg{\{ \mathcal{C}_1, \ldots, \mathcal{C}_t \} }}}$ with $\mathcal{C}^k \not\in \{ \mathcal{C}_1, \ldots, \mathcal{C}_t \}$. \vspace{2mm}\\
            First, we can derive the following judgment of row $f(j)$ in \ref{eq:case:input} as matching\\
            $\matches{\overline{D}^{f(j)}}{\vec{v}}{(\sigma_1,\dots,\sigma_{i-1},[\{ y_1, \ldots y_s\} \mapsto v_i],\sigma_{i+1},\dots,\sigma_n)}$, because 
            \begin{itemize}
                \item By assumption, $\forall \ x \in (\{1,\dots,n\}\setminus\{i\}), \ \matches{\overline{D}_x^{f(j)}}{v_x}{\sigma_x}$
                \item For $\overline{D}_i^{f(j)} = \patAnd{\{ y_1, \ldots y_s\}}{\patNeg{\{ \mathcal{C}_1, \ldots, \mathcal{C}_t \} }}$ and $v_i = C^k(w_1,\dots,w_k)$ we can derive $\matches{\overline{D}_i^{f(j)}}{v_i}{[\{ y_1, \ldots y_s\} \mapsto v_i]}$, because $\mathcal{C}^k \not\in \{ \mathcal{C}_1, \ldots, \mathcal{C}_t \}$.
            \end{itemize} \vspace{2mm}

            Second, by \textsc{E-MC} we can now reduce our goal\\ 
            $\ref{eq:case:input} \to e^{f(j)}\sigma_1\dots\sigma_{i-1}[\{ y_1, \ldots y_s\} \mapsto v_i]\sigma_{i+1}\dots\sigma_n$. \vspace{2mm}\\
            $\begin{array}{lll} \hspace{-1.5mm}\text{To show:} && e^{f(j)}\sigma_1\dots\sigma_{i-1}[\{ y_1, \ldots y_s\} \mapsto v_i]\sigma_{i+1}\dots\sigma_n \\ & = & \left(e^{f(j)}\pi^j\right)\tau_1\dots\tau_k \sigma_1\dots\sigma_{i-1}\sigma_{i+1}\dots\sigma_n\end{array}$\vspace{2mm}\\
            Note, that our matching row $j$ in $\left(\begin{bmatrix} S \end{bmatrix}[x_1 \mapsto w_1, \dots, x_k \mapsto w_k] \right)$ equates to \\
            $\begin{bmatrix}
                \ \patWildcard \ \ \dots \ \ \patWildcard \hspace{2mm} 
                \overline{D}_{1}^{f(j)} \dots \overline{D}_{i-1}^{f(j)} \overline{D}_{i+1}^{f(j)} \dots \ \overline{D}_{n}^{f(j)}
                \To e^{f(j)} [\{ y_1, \ldots y_s\} \mapsto v_i]
            \end{bmatrix}$. \vspace{2mm}\\
            First of all, we see that $\pi^j = [\{ y_1, \ldots y_s\} \mapsto v_i]$. Second, we know that the substitutions $\tau_1,\dots,\tau_k$ must be empty, because having $\forall \ x \in \{1,\dots,k\}, \ \matches{\patWildcard}{w_x}{\tau_x}$ necessitates $\tau_x = []$. 
            Third, we can -- as before -- permute the application of the substitutions. \vspace{2mm}\\
            $\begin{array}{lll}
                \hspace{-1.5mm}\text{In sum, this gives: }
                &  &   e^{f(j)}\sigma_1\dots\sigma_{i-1}[\{ y_1, \ldots y_s\} \mapsto v_i]\sigma_{i+1}\dots\sigma_n \\
                &= & e^{f(j)}[\{ y_1, \ldots y_s\} \mapsto v_i]\sigma_1\dots\sigma_{i-1}\sigma_{i+1}\dots\sigma_n \\
                &= & (e^{f(j)}\pi^j)\sigma_1\dots\sigma_{i-1}\sigma_{i+1}\dots\sigma_n \\
                &= & (e^{f(j)}\pi^j)[]\dots[]\sigma_1\dots\sigma_{i-1}\sigma_{i+1}\dots\sigma_n \\
                &= & (e^{f(j)}\pi^j)\tau_1\dots \tau_k\sigma_1\dots\sigma_{i-1}\sigma_{i+1}\dots\sigma_n
                \end{array}$ \vspace{2mm}
            \item $e = e_d$ and no non-default row of $\left(\begin{bmatrix} S \end{bmatrix}[x_1 \mapsto w_1, \dots, x_k \mapsto w_k] \right)$ matches, i.e.: \\
            $\forall \ j \in \{1,\dots,\tilde{m}\},$ there is a $\vec{\sigma}^j$ such that \\
            $\matchesNot{\overline{K}^j_1[x_1 \mapsto w_1, \dots, x_k \mapsto w_k],\dots,\overline{K}^j_k[x_1 \mapsto w_1, \dots, x_k \mapsto w_k],\\
            \overline{D}^{f(j)}_1, \dots, \overline{D}^{f(j)}_{i-1},\overline{D}^{f(j)}_{i+1}, \dots, \overline{D}^{f(j)}_n}{w_1, \dots, w_k, v_1,\dots,v_{i-1},v_{i+1},\dots,v_n }{\vec{\sigma}^j}$ \vspace{2mm}\\
            We need to show that no non-default row of \ref{eq:case:input} is a matching row. For this, we pick (if such exists) any non-default row $u$ in \ref{eq:case:input} and induct over $\overline{D}_i^u$. Again, we only look at the negation-case and refer to \cite{Maranget2008} for the other non-trivial cases. \vspace{2mm} \\
            Assume $\mathbf{\overline{D}_i^u = \patAnd{\{ y_1, \ldots y_s\}}{\patNeg{\{ \mathcal{C}_1, \ldots, \mathcal{C}_t \} }}}$. \vspace{2mm}
            \begin{itemize}
                \item If $C^k \in \{C_1,\dots,C_t\}$, then we know that we can trivially derive $\matchesNot{\overline{D}_i^u}{v_i}{[]}$, hence using $\vec{\sigma}^j$ we derive $\matchesNot{\overline{D}^u}{\vec{v}}{(\sigma_{k+1}^j,\dots,\sigma_{k + i-1}^j,[],\sigma_{k + i+1}^j,\dots,\sigma_{k+n})}$. \vspace{2mm}
                \item More interestingly, assume $C^k \not\in \{C_1,\dots,C_t\}$. \vspace{2mm}\\
                Then, by \ref{subsubsec:compilation:branch:specialization} \ref{spec-b} row $u$ of \ref{eq:case:input} contributed a single row $j$ with $u = f(j)$ of the shape
                $\begin{bmatrix}
                \ \patWildcard \ \ \dots \ \ \patWildcard \hspace{2mm} 
                \overline{D}_{1}^{u} \dots \overline{D}_{i-1}^{u} \overline{D}_{i+1}^{u} \dots \ \overline{D}_{n}^{u}
                \To e^{u} [\{ y_1, \ldots y_s\} \mapsto v_i]
                \end{bmatrix}$. \vspace{2mm}\\
                To show: There exists a substitution vector $\vec{\tau}$, such that $\matchesNot{\overline{D}^u}{\vec{v}}{\vec{\tau}}$\vspace{2mm}\\
                We proceed with a proof of negation showing that there cannot be any $\vec{\tau}$, s.t. \\
                $\matches{\overline{D}^u}{\vec{v}}{\vec{\tau}}$ and then conclude our goal using completeness (\cref{thm:matching-complete}). \vspace{2mm}\\
                Now, assume row $u$ in \ref{eq:case:input} as a matching row $\matches{\overline{D}^u}{\vec{v}}{\vec{\tau}}$. \vspace{2mm} \\
                Then, row j of $\left(\begin{bmatrix} S \end{bmatrix}[x_1 \mapsto w_1, \dots, x_k \mapsto w_k] \right)$ must be a matching row by \\
                $\matches{(\patWildcard,\dots,\patWildcard,\overline{D}^u_{1},\dots,\overline{D}^u_{i-1},\overline{D}^u_{i+1},\dots,\overline{D}^u_{n})}{(w_1,\dots,w_k,v_1,\dots,v_{i-1},v_{i+1},\dots,v_n)}{\\
                ([],\dots,[],\tau_1,\dots,\tau_{i-1},\tau_{i+1},\dots,\tau_n)}$, because $\forall \ x \in \{1,\dots,k\}, \ \matches{\patWildcard}{w_x}{[]}$. \vspace{2mm}\\
                But using soundness (\cref{thm:matching-sound}) this contradicts the initial assumption.
            \end{itemize}
        \end{enumerate}
    \end{enumerate}
\end{proof}

\begin{lemma}[Specialization/Default preserves wellformedness]
    \label{lem:compilation:subproblem-wf}
    Consider a multi-column pattern match of the normalized shape \ref{eq:case:input} and denote the $i$-th scrutinee of \ref{eq:case:input} as $v_i = C^k(w_1,\dots,w_k)$. 
    \begin{enumerate}
        \item If $\wf{\ref{eq:case:input}}$, then $\wf{\mathcal{S}(i,\mathcal{C}^{k}(x_1,\ldots, x_{k}),\text{\ref{eq:case:input}})}$
        \item If $\wf{\ref{eq:case:input}}$, then $\wf{\mathcal{D}(i,\mathcal{H}(\overline{D}_i), \text{\ref{eq:case:input}})}$
    \end{enumerate}
\end{lemma}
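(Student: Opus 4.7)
The plan is to prove both items in parallel by iterating over the four conditions of the \textsc{Wf-MC} rule and, for each, performing a case analysis on the inclusion rules of \cref{subsubsec:compilation:branch:specialization} and \cref{subsubsec:compilation:branch:default}. Writing $\mathcal{M}$ for the resulting subproblem in either part, every row of $\mathcal{M}$ traces back to a unique row of~\ref{eq:case:input}, with the single exception that an or-pattern column in position $i$ produces one new row per disjunct through rules \ref{spec-c}/\ref{def-b}.

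Three of the four conditions are routine. The expression-wellformedness condition is preserved because every new right-hand side is either the default $e_d$ or has the form $e^j[\{y_1,\ldots,y_s\}\mapsto v_i]$, and wellformedness is closed under substitution of a value for a variable. Determinism and positive linearity of each pattern in $\mathcal{M}$ follow because each such pattern is either a wildcard, an unmodified column $\overline{D}_{i'}^j$ with $i' \neq i$, or a subpattern $\overline{K}_l$ of a positive conjunct contributed by rule \ref{spec-a}; in the last case inversion of \textsc{D-Ctor} and \textsc{L-Ctor}$^+$ gives exactly what is needed. The within-row disjointness of $\freevareven{-}$-sets follows because the subpatterns $\overline{K}_1,\ldots,\overline{K}_k$ of a positively linear constructor pattern have pairwise disjoint $\freevareven{-}$ by \textsc{L-Ctor}$^+$, and they are disjoint from the free variables of the surviving columns $\overline{D}_{i'}^j$ by the original row-disjointness hypothesis; the variables $\{y_1,\ldots,y_s\}$ that were bound outside the constructor are discharged into the right-hand side by the accompanying substitution and so no longer occur in any pattern of the new row.

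The main obstacle is non-overlap of distinct new rows. Given two such rows, I would trace them back to original rows $j,j'$ of~\ref{eq:case:input} and pick a witness column $i^\star$ on which $\disjoint{\overline{D}_{i^\star}^j}{\overline{D}_{i^\star}^{j'}}$. If $i^\star \neq i$ the witness column survives verbatim and non-overlap is inherited. The real work is the case $i^\star = i$, where the original witness column has either been split into the fresh subscrutinee columns (in $\mathcal{S}$) or eliminated altogether (in $\mathcal{D}$). Here a structural case analysis on the normalized shapes of $\overline{D}_i^j$ and $\overline{D}_i^{j'}$, constrained by the inclusion side-conditions, does the job: two positive conjuncts that both pass \ref{spec-a} must share the head constructor $\mathcal{C}^k$, and their disjointness descends to a disjoint pair of subpatterns that becomes a fresh witness among the new columns $1,\ldots,k$; a positive/negative pair cannot both pass their respective side-conditions simultaneously (the positive head constructor would need to lie both inside and outside the negated set); and two genuinely disjoint negative conjuncts must together cover the entire type signature, which makes the case vacuous since then no value $v_i$ of the form $\mathcal{C}^k(\vec w)$ can make both rows match at once. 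The argument for $\mathcal{D}$ is analogous, restricted to the negative-conjunct and or-pattern contributions, and the or-pattern expansions of rules \ref{spec-c}/\ref{def-b} are handled by appealing to the determinism of the original or-pattern, which via \textsc{D-Or}$_1$ or \textsc{D-Or}$_2$ supplies the disjointness required between the rows produced from two different disjuncts.
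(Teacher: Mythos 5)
Your overall strategy matches the paper's: its own proof is a two-line sketch that checks the per-row conditions of \textsc{Wf-MC} by induction on the inclusion rules and establishes pairwise row-disjointness by a nested induction on the same rules. You fill in essentially all of the detail the paper omits, and most of it is sound — the treatment of expression wellformedness under substitution, the inversion arguments for determinism and positive linearity of the new columns, the descent of disjointness from two same-head positive conjuncts to a disjoint pair of subpatterns, and the vacuity of the positive/negative witness case are all correct.

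There is, however, a genuine gap in the case where two rows of $\begin{bmatrix} S \end{bmatrix}$ (or $\begin{bmatrix} D \end{bmatrix}$) arise from the \emph{same} original row via the or-expansion rules \ref{spec-c}/\ref{def-b}, so that no witness column can be inherited from the hypothesis on \ref{eq:case:input}. You claim the determinism of the original or-pattern ``via \textsc{D-Or}$_1$ or \textsc{D-Or}$_2$ supplies the disjointness required between the rows produced from two different disjuncts.'' That is true for \textsc{D-Or}$_1$, but \textsc{D-Or}$_2$ does \emph{not} give disjointness of the disjuncts: it only requires them to be variable-free and explicitly permits them to overlap. Concretely, if $\overline{D}_i^j$ is the normalized form of $\patOr{\tmCons{\tmTrue}{\patWildcard}}{\tmCons{\patWildcard}{\tmNil}}$ (deterministic by \textsc{D-Or}$_2$, since neither disjunct binds a variable), then specializing column $i$ on $\mathtt{Cons}$ produces two rows whose fresh columns are $(\tmTrue,\patWildcard)$ and $(\patWildcard,\tmNil)$ and whose remaining columns and right-hand sides are identical; these rows overlap in every column, so the non-overlap condition of \textsc{Wf-MC} fails for the specialized matrix. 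This appears to be a defect in the statement (or in the formulation of \textsc{Wf-MC}) rather than something your argument could repair locally — one would have to either bar \textsc{D-Or}$_2$-style or-patterns from wellformed case inputs, or weaken the row-disjointness requirement to rows with distinct right-hand sides (the two offending rows carry the same substitution-free $e^j$, so evaluation stays deterministic) — but as written your appeal to \textsc{D-Or}$_2$ does not close this case.
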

\begin{proof} \ 
    \begin{enumerate}
    \item
        $\mathcal{S}(i,\mathcal{C}^{k}(x_1,\ldots, x_{k}),\text{\ref{eq:case:input}}) \ = \ \mathbf{case}\hspace{2mm} 
        x_1, \dots, x_k, v_1,\dots,v_{i-1},v_{i+1},\dots,v_n \hspace{2mm} \mathbf{of} \hspace{1mm} \begin{bmatrix} S \end{bmatrix}$ \vspace{2mm}\\
        For any row of $\begin{bmatrix} S \end{bmatrix}$ we can show that its patterns are deterministic, positively linear, and do not share free variables under even negation by inducting on its inclusion rules \cref{subsubsec:compilation:branch:specialization}. \vspace{2mm}\\
        To show: Two rows $j \neq j'$ of $\begin{bmatrix} S \end{bmatrix}$ share a column $u$ for which their patterns are disjoint. \vspace{2mm}\\
        This is shown by a nested induction on the inclusion rules \cref{subsubsec:compilation:branch:specialization} for $j$ and $j'$. \vspace{2mm}
    \item Entirely analogous to 1.
    \end{enumerate}
\end{proof}

\begin{lemma}[Compilation preserves wellformedness]
    \label{lem:compilation:preservation-wf}
    For any multi-column pattern match of the normalized shape \ref{eq:case:input}, it holds that if \wf{\ref{eq:case:input}} then \wf{compile(\ref{eq:case:input})}.
\end{lemma}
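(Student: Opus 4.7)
The plan is to proceed by well-founded induction on a measure of the input pattern match, combined with a case split on which of the three branches of \textit{compile} applies. A natural measure is a lexicographic combination of (i) the total size of the pattern matrix $\sum_{i,j} |\overline{D}_i^j|$ and (ii) the number of scrutinees, but any measure shown to decrease on each recursive call suffices. Each recursive call in the Branch step either drops one column (the default subproblem $\mathcal{D}$) or consumes the outer constructor of a positive conjunct and pushes its subpatterns to new scrutinee positions (the specialization subproblem $\mathcal{S}$), strictly decreasing the total pattern size. The or-pattern expansion rules in \cref{subsubsec:compilation:branch:specialization,subsubsec:compilation:branch:default} also strictly decrease the measure since they replace a disjunction $\parallel\{\overline{K}_1,\ldots,\overline{K}_t\}$ by its individual, smaller elementary conjuncts.

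The \textbf{Default} and \textbf{Simple} base cases are immediate. For Default, \textit{compile} returns $e_d$, which is wellformed by the Wf-MC premise on the input. For Simple, the result is $e^1$ with the substitution $[x_1 \mapsto v_1, \ldots, x_k \mapsto v_k]$ applied. The Wf-MC premise gives $\wf{e^1}$ and $\wf{v_i}$ for all $i$, and a routine structural lemma establishes that substituting wellformed values for variables preserves wellformedness of expressions.

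For the \textbf{Branch} case, \textit{compile} produces the single-column case expression shown in \ref{eq:case:output}. To conclude wellformedness via Wf-MC (or Wf-Case) I would verify each premise: the scrutinee $v_i$ is wellformed by assumption; each simple constructor pattern $\mathcal{C}^{n_k}(x_1^k,\ldots,x_{n_k}^k)$ is positively linear because the $x_j^k$ are fresh hence distinct (using \textsc{L-Var}$^\pm$ and \textsc{L-Ctor}$^+$), is deterministic (by \textsc{D-Var} and \textsc{D-Ctor}), and any two distinct clauses use distinct head constructors and are therefore disjoint by the first equation of \cref{thm:algebraic-equivalences:four}. The right-hand sides are $\textit{compile}(\mathcal{S}(\ldots))$ and $\textit{compile}(\mathcal{D}(\ldots))$; by \cref{lem:compilation:subproblem-wf} the subproblems $\mathcal{S}(i,\mathcal{C}^{n_k}(x_1^k,\ldots,x_{n_k}^k),\ref{eq:case:input})$ and $\mathcal{D}(i,\mathcal{H}(\overline{D}_i),\ref{eq:case:input})$ are themselves wellformed multi-column pattern matches, and they are strictly smaller in the chosen measure, so the induction hypothesis yields wellformedness of the recursively compiled results.

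The main obstacle is the choice of a well-founded measure that strictly decreases under both $\mathcal{S}$ and $\mathcal{D}$ simultaneously: specialization can increase the number of scrutinees (adding the $k$ variables of a $k$-ary constructor) while decreasing pattern size, whereas default strictly decreases the number of scrutinees without touching subpatterns. A lexicographic measure where the primary component is the total pattern size (counting constructor depth) handles both, since $\mathcal{S}$ strictly lowers the primary component on the contributing row and $\mathcal{D}$ does not increase it, with the number of scrutinees serving as a tie-breaker for $\mathcal{D}$. Everything else is a straightforward verification of the Wf-MC premises using the determinism and linearity rules of \cref{fig:pattern-algebra:rules}.
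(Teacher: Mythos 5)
Your proposal is correct and takes essentially the same route as the paper, whose entire proof reads ``by induction over the recursive definition of $\mathit{compile}$ and use of \cref{lem:compilation:subproblem-wf}''---exactly the skeleton you flesh out with the base cases and the Wf-MC premise checks for the \textbf{Branch} output. The only caveat is your termination measure: because the or-expansion rules duplicate an entire row $t$ times, the raw sum $\sum_{i,j}|\overline{D}_i^j|$ need not strictly decrease under $\mathcal{S}$, so the measure needs slightly more care (e.g.\ a multiset ordering on the selected column), but the paper leaves termination implicit anyway, so this does not separate your argument from theirs.
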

\begin{proof}
    By induction over the recursive definition of $compile(\ref{eq:case:input})$ and use of \cref{lem:compilation:subproblem-wf}.
\end{proof}

\begin{lemma}[Determinism for wellformed Multi-Column Cases]
    \label{lem:wf:determinism:multicol}
    \cref{thm:wf:deterministic} extends to wellformed multi-column pattern matches, i.e. for $\wf{e,e_1,e_2}$, if $e \to e_1$ and $e \to e_2$, then $e_1 = e_2$.
\end{lemma}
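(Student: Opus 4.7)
The plan is to follow the structure of the proof of \cref{thm:wf:deterministic} but lift it to the vector setting. Fix a wellformed multi-column pattern match and consider two single-step reductions $e \singlestep e_1$ and $e \singlestep e_2$; I would proceed by structural induction on $e$ together with a case analysis on which evaluation rule is applied at each step.

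For congruence steps \textsc{E-Cong}, the redexes must lie within corresponding subterms, since the evaluation context for a multi-column case pins down a unique reducible subexpression (either one of the scrutinees, or a unique argument position inside a nested constructor term). The inductive hypothesis on that strictly smaller wellformed subterm then gives $e_1 = e_2$. For head reductions all scrutinees are already values, and there are three subcases: (i) both steps use \textsc{E-MC}, (ii) both use \textsc{E-MD}, or (iii) one of each. Case (ii) is immediate since both reductions produce $e_d$. For case (iii), suppose an \textsc{E-MC} step matches row $j$ via $\matches{P^j}{\vec{v}}{\vec{\sigma}}$, while the parallel \textsc{E-MD} step requires $\matchesNot{P^{j'}}{\vec{v}}{\vec{\tau}^{j'}}$ for every $j'$, in particular for $j' = j$. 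Unfolding the vector definitions, $\matches{P^j}{\vec{v}}{\vec{\sigma}}$ says every column matches while $\matchesNot{P^j}{\vec{v}}{\vec{\tau}^j}$ says at least one column fails to match, contradicting the soundness of pattern matching (\cref{thm:matching-sound}).

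Case (i) is the crux. Suppose two distinct rows $j \neq j'$ both match, so $\matches{P^j}{\vec{v}}{\vec{\sigma}}$ and $\matches{P^{j'}}{\vec{v}}{\vec{\sigma}'}$. The \textsc{Wf-MC} premise supplies a column $i$ with $\disjoint{P_i^j}{P_i^{j'}}$. But the two row matches entail $\matches{P_i^j}{v_i}{\sigma_i}$ and $\matches{P_i^{j'}}{v_i}{\sigma_i'}$, which contradicts disjointness by \cref{def:pattern-algebra:disjoint-patterns}. Hence $j = j'$, and both reductions produce $e^j$ instantiated with substitutions derived from the same row. To conclude that the two resulting terms coincide, I would apply \cref{thm:wf-matching-det} column-by-column (using that every $P_i^j$ is deterministic and positively linear by \textsc{Wf-MC}) to get $\semantic{\sigma_i} \equiv \semantic{\sigma_i'}$, and then combine \cref{thm:pattern-algebra:linear-patterns-covering} with the third \textsc{Wf-MC} condition $\freevareven{P_i^j} \cap \freevareven{P_{i'}^j} = \emptyset$ for $i \neq i'$ to see that the column substitutions have pairwise disjoint domains and hence compose coherently.

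The main obstacle I expect is the final substitution-equality step: the matching judgment only pins down substitutions up to the semantic equivalence $\semantic{\sigma} \equiv \semantic{\sigma'}$, so the literal equation $e_1 = e_2$ in the statement must either be read modulo this equivalence, or one must prove a separate auxiliary lemma that applying equivalent substitutions with disjoint domains to the same right-hand side yields the same closed term. This bookkeeping is not deep, but it needs to be carried out carefully because it silently underlies the analogous single-column \cref{thm:wf:deterministic}, and the vector case compounds it across $n$ columns.
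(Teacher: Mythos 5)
Your proof follows essentially the same route as the paper's: the paper likewise argues that an \textsc{E-MC} step and an \textsc{E-MD} step cannot coexist, that two matching rows $j \neq j'$ would violate the pairwise-disjointness column guaranteed by \textsc{Wf-MC}, and hence $j = j'$. In fact you are more careful than the paper, which jumps from $j = j'$ directly to $e_1 = e_2$ without addressing the point you correctly flag — that the matching judgment only determines the substitutions up to $\semantic{\sigma} \equiv \semantic{\sigma'}$, so \cref{thm:wf-matching-det} plus the disjoint-domain condition are genuinely needed to close the gap.
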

\begin{proof} \ \\
    Let $e = \mathbf{case}\ \vec{v} \ \mathbf{of} \ 
        \begin{bmatrix}
			P_1^1 	& \ldots 	& P_n^1 &\To e^1 \\
			& \vdots	& 				\\
			P_1^m	& \ldots	& P_n^m &\To e^m \\
			&\defaultClause & &\To e_d
		\end{bmatrix} \vspace{2mm}$ be wellformed 
        \\ and assume $e \to e_1$ and $e \to e_2$. \vspace{2mm}\\
        If either one was derived by \textsc{E-MC} (a matching non-default row), so must have been the other. So assume $e_1$ got derived from a matching row $j$ and $e_2$ from a matching row $j'$. Trivially, these rows must overlap as witnessed by $\vec{v}$. If $j \neq j'$, this contradicts our assumption of wellformedness. Hence, $j = j'$ and $e_1 = e_2$. \vspace{2mm}\\
        If both of them are derived by \textsc{E-DC}, we have $e_1 = e_d = e_2$.
\end{proof}

\begin{theorem}\label{thm:correctness-proof}
    On a wellformed multi-column pattern match of the normalized shape \ref{eq:case:input}, the compilation algorithm 
	preserves semantics. 
    $$\semantic{\ref{eq:case:input}} \equiv \semantic{\textit{compile}(\ref{eq:case:input})}$$
\end{theorem}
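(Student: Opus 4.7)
My plan is to prove the theorem by well-founded induction on the recursion tree of $\textit{compile}$, using the subproblem-reduction lemma (\cref{lem:compilation:subproblem-reduction}) as the bridge between a single step of Input's order-independent reduction and the corresponding step of the compiled decision tree. Because semantic equivalence is only defined for closed expressions but the Branch step introduces fresh variables $x_1,\ldots,x_k$ that are bound only \emph{outside} the recursive call, I would first generalize the statement: for every wellformed multi-column case \ref{eq:case:input} and every value substitution $\sigma$ that closes it, $\semantic{\ref{eq:case:input}\sigma} \equiv \semantic{\textit{compile}(\ref{eq:case:input})\sigma}$. The original theorem then follows from the special case $\sigma = []$. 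A small auxiliary observation one must check en route is that compilation commutes with substitution of closed values for variables, since $\textit{compile}$ never inspects scrutinees that are themselves variables (those are only introduced for later specialization).

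For the two base cases the argument is direct. In the \textbf{Default} case $\textit{compile}$ returns $e_d$, and the hypothesis Input has only its default clause, so Input reduces to $e_d$ via \textsc{E-MD} with the empty premise. In the \textbf{Simple} case the first row is the trivially matching $\patAnd{\{x_1,\ldots,x_k\}}{\patNeg{\{\}}}$-pattern on each column, so both Input and $\textit{compile}(\text{Input})$ reduce (under $\sigma$) to the same substituted right-hand side $e^1[\vec{x} \mapsto \vec{v}]\sigma$.

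The inductive \textbf{Branch} case is the crux. Under $\sigma$, the value $v_i\sigma$ has some outermost constructor $\mathcal{C}$. Either $\mathcal{C} = \mathcal{C}^{n_k}$ for some head constructor selected in the branching step, and the output case single-steps to $\textit{compile}(\mathcal{S}(i,\mathcal{C}^{n_k}(\vec{x}),\ref{eq:case:input}))\sigma[\vec{x}\mapsto\vec{w}]$; or $\mathcal{C}$ is not among the head constructors, and the output case takes its default branch to $\textit{compile}(\mathcal{D}(i,\mathcal{H}(\overline{D}_i),\ref{eq:case:input}))\sigma$. In the first sub-case I would use the commutation of compilation with substitution to rewrite the target as $\textit{compile}(\mathcal{S}(\ldots)[\vec{x}\mapsto\vec{w}])\sigma$, then invoke the induction hypothesis (valid by the preservation results \cref{lem:compilation:subproblem-wf}, \cref{lem:compilation:preservation-wf}) to replace it by $\mathcal{S}(\ldots)[\vec{x}\mapsto\vec{w}]\sigma$, and finally apply \cref{lem:compilation:subproblem-reduction} to conclude that \ref{eq:case:input}$\sigma$ single-steps to the very same expression. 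From that shared reduct both sides continue to evaluate identically, so they agree on the final value (or both diverge). The second sub-case is analogous, using the default-half of \cref{lem:compilation:subproblem-reduction}.

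The hard part, as I see it, is the orchestration between the single-step lemma and the multi-step notion of semantic equivalence. \cref{lem:compilation:subproblem-reduction} promises equality of \emph{one} step, not of the full evaluation, so I must appeal to determinism of reduction for wellformed terms (\cref{lem:wf:determinism:multicol}, transported along \cref{lem:compilation:preservation-wf}) to argue that once Input$\sigma$ and $\textit{compile}(\text{Input})\sigma$ agree on their first reduct, their entire evaluations coincide. A secondary nuisance is ensuring that the fresh variables $x_1,\ldots,x_k$ introduced in Branch do not accidentally clash with variables already occurring in the clause right-hand sides; a standard freshness convention handles this, but it has to be threaded carefully through the substitution-commutation lemma. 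Beyond these two points, every step is a routine unfolding of definitions.
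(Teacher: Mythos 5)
Your proposal is correct and follows essentially the same route as the paper's proof: induction on the recursion structure of $\textit{compile}$, with \cref{lem:compilation:subproblem-reduction} bridging single steps, \cref{lem:compilation:subproblem-wf} and \cref{lem:compilation:preservation-wf} licensing the induction hypothesis, determinism (\cref{lem:wf:determinism:multicol}) lifting single-step agreement to full evaluations, and the commutation of compilation with substitution handling the fresh branch variables. The only cosmetic difference is that you generalize over a closing substitution $\sigma$ up front, whereas the paper applies the branch substitution $\pi$ ad hoc and notes that it commutes with compilation; the underlying argument is the same.
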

\begin{proof}
    We perform an induction over the recursive definition of $compile(\ref{eq:case:input})$ as was already suggested by \cite{Maranget2008}.
    We want to show that
    $
    \ref{eq:case:input} \to^* v \textit{ only if } \textit{compile}(\ref{eq:case:input}) \to^* v
    $.
    \begin{enumerate}
        \item \textbf{Default} \vspace{2mm}\\
        $\begin{array}{l}
            \ref{eq:case:input} = 
            \mathbf{case}\ v_1,\ldots, v_n\ \mathbf{of}\
                \begin{bmatrix}
                    &\defaultClause & &\To e_d
                \end{bmatrix} \vspace{2mm}\\
            compile(\ref{eq:case:input}) = e_d
        \end{array}$\vspace{2mm}\\
        As we have no choice but to reduce by rule \textsc{E-MD} $\ref{eq:case:input} \to e_d$, \\
        it is $\ref{eq:case:input} \to^* v$ only if $compile(\ref{eq:case:input}) \to^* v$ \vspace{2mm}
        \item \textbf{Simple} \\
        $\begin{array}{l}
            \ref{eq:case:input} = 
            \mathbf{case}\ v_1,\ldots, v_n\ \mathbf{of}\
                \begin{bmatrix}
                    \patAnd{Y^1}{\patNeg{\{\}}} 	& \ldots 	& \patAnd{Y^n}{\patNeg{\{\}}}   &\To e^1 \\
                    & \vdots	& 				\\
                    \overline{D}_1^m	& \ldots	& \overline{D}_n^m &\To e^m \\
                    &\defaultClause & &\To e_d
                \end{bmatrix} \vspace{2mm}\\
                \text{where } Y^i = \{y_1,\dots,y_{t_i}\}, \text{ a set of variables for some }t_i. \\
                compile(\ref{eq:case:input}) = e^1[Y^1 \mapsto v_1]\dots[Y^n \mapsto v_n]
            \end{array}$ \vspace{2mm}\\
        The first row of \ref{eq:case:input} matches the scrutinees $\matches{\overline{D}^1}{\vec{v}}{([Y^1 \mapsto v_1],\dots,[Y^n \mapsto v_n])}$, which gives a single-step reduction by rule \textsc{E-MC} of $\ref{eq:case:input} \to e^1[Y^1 \mapsto v_1]\dots[Y^n \mapsto v_n]$.
        By \cref{lem:wf:determinism:multicol}, we can conclude that this reduction is the only possible reduction, hence $\ref{eq:case:input} \to^* v$ only if $e^1[Y^1 \mapsto v_1]\dots[Y^n \mapsto v_n] \to^* v$. \vspace{2mm}
        \item \textbf{Branch}\\
        $\begin{array}{l}
            \ref{eq:case:input} = 
            \mathbf{case}\ v_1,\ldots, v_n\ \mathbf{of}\
                \begin{bmatrix}
                    \overline{D}_1^1 	& \ldots 	& \overline{D}_n^1 &\To e^1 \\
                    & \vdots	& 				\\
                    \overline{D}_1^m	& \ldots	& \overline{D}_n^m &\To e^m \\
                    &\defaultClause & &\To e_d
                \end{bmatrix} \vspace{2mm}\\
                \text{with the set of head constructors } \mathcal{H}(\overline{D}_i) = \{ \mathcal{C}^{n_1}, \ldots, \mathcal{C}^{n_z} \},\\
                compile(\ref{eq:case:input}) = \\
                \mathbf{case}\ v_i\ \mathbf{of}\
                \begin{bmatrix}
                    \mathcal{C}^{n_1}(x_1^{1},\ldots, x_{n_1}^{1}) 	&\To& \mathit{compile}(\mathcal{S}(i,\mathcal{C}^{n_1}(x_1^{1},\ldots, x_{n_1}^{1}),\text{\ref{eq:case:input}})) \\
                    \vdots	& 	&  \vdots			\\
                    \mathcal{C}^{n_z}(x_1^{z},\ldots, x_{n_z}^{z}) &\To& \mathit{compile}(\mathcal{S}(i,\mathcal{C}^{n_z}(x_1^{z},\ldots, x_{n_z}^{z}),\text{\ref{eq:case:input}})) \\
                    \defaultClause &\To& \mathit{compile}(\mathcal{D}(i,\{ \mathcal{C}^{n_1},\ldots, \mathcal{C}^{n_z} \}, \text{\ref{eq:case:input}}))
                \end{bmatrix}
        \end{array}$ \vspace{2mm}\\
        We can always reduce $compile(\ref{eq:case:input})$ by either \textsc{E-MC} or \textsc{E-DC}. \\
        \begin{itemize}
            \item $compile(\ref{eq:case:input}) \to \mathit{compile}(\mathcal{S}(i,\mathcal{C}^{n_j}(x_1^{j},\ldots, x_{n_j}^{j}),\text{\ref{eq:case:input}}))\pi$ \\
            for a matching row 
            $\matches{\mathcal{C}^{n_j}(x_1^{j},\ldots, x_{n_j}^{j})}{v_i}{\pi}$.\\
            It must necessarily be $v_i = {C}^{n_j}(w_1,\dots, w_{n_j})$ and $\pi = [x_1^{j} \mapsto w_1,\ldots, x_{n_j}^{j} \mapsto w_{n_j}]$. \vspace{2mm}\\
            By \cref{lem:compilation:subproblem-reduction}, we have for single-step reductions\\
            $\ref{eq:case:input} \to e \text{ iff }  \mathcal{S}(i,\mathcal{C}^{n_j}(x_1^{j},\ldots, x_{n_j}^{j}),\text{\ref{eq:case:input}})\pi \to e$ \vspace{2mm}\\
            which, of course, gives \\
            $\ref{eq:case:input} \to^* v \text{ iff }   \mathcal{S}(i,\mathcal{C}^{n_j}(x_1^{j},\ldots, x_{n_j}^{j}),\text{\ref{eq:case:input}})\pi \to^* v$ \vspace{2mm}\\
            Because $\mathcal{S}(i,\mathcal{C}^{n_j}(x_1^{j},\ldots, x_{n_j}^{j}),\text{\ref{eq:case:input}})$ is wellformed (see \cref{lem:compilation:subproblem-wf}), and given that variable substitution does not impact wellformedness, we can instantiate our induction hypothesis with $\mathcal{S}(i,\mathcal{C}^{n_j}(x_1^{j},\ldots, x_{n_j}^{j}),\text{\ref{eq:case:input}})\pi$\\
            $\mathcal{S}(i,\mathcal{C}^{n_j}(x_1^{j},\ldots, x_{n_j}^{j}),\text{\ref{eq:case:input}})\pi \to^* v
            \text{ iff } 
            compile\left(\mathcal{S}(i,\mathcal{C}^{n_j}(x_1^{j},\ldots, x_{n_j}^{j}),\text{\ref{eq:case:input}})\pi\right) \to^* v$.\vspace{2mm}\\
            It does not matter, whether we apply the substitution $\pi$ before or after compilation \\
            $compile\left(\mathcal{S}(i,\mathcal{C}^{n_j}(x_1^{j},\ldots, x_{n_j}^{j}),\text{\ref{eq:case:input}})\pi\right) = compile\left(\mathcal{S}(i,\mathcal{C}^{n_j}(x_1^{j},\ldots, x_{n_j}^{j}),\text{\ref{eq:case:input}})\right)\pi$.\vspace{2mm}\\
            As $compile(\ref{eq:case:input}) \to \mathit{compile}(\mathcal{S}(i,\mathcal{C}^{n_j}(x_1^{j},\ldots, x_{n_j}^{j}),\text{\ref{eq:case:input}}))\pi$, we can infer via \ref{lem:wf:determinism:multicol}\\
            $compile\left(\mathcal{S}(i,\mathcal{C}^{n_j}(x_1^{j},\ldots, x_{n_j}^{j}),\text{\ref{eq:case:input}})\right)\pi \to^* v \text{ iff } compile(\ref{eq:case:input}) \to^* v$. \vspace{2mm}
            \item Entirely analogous.
        \end{itemize}
    \end{enumerate}
\end{proof}

Note that although this proof assumes the order-independent semantics of this paper, we gave the compilation algorithm in the usual order-sensitive style, which is still \enquote{backwards-compatible} with correctness under first-match semantics.

\end{document}